\newtheorem{theorem}{Theorem}
\newtheorem{corollary}[theorem]{Corollary}
\newtheorem{lemma}[theorem]{Lemma}
\newtheorem{definition}[theorem]{Definition}
\newtheorem{fact}[theorem]{Fact}
\newcommand{\R}{\mathbb{R}}
\newcommand{\opt}{\text{OPT}}
\newcommand{\eps}{\varepsilon}
\DeclareMathOperator*{\cost}{\mathrm{cost}}
\DeclareMathOperator*{\polylog}{\mathrm{ polylog}}
\DeclareMathOperator*{\argmin}{arg\,min}
\newcommand{\calA}{\mathcal A}
\newcommand{\calB}{\mathcal {B}}
\newcommand{\calD}{\mathcal {D}}
\newcommand{\calG}{\mathcal {G}}
\newcommand{\calN}{\mathcal{N}}
\newcommand{\calP}{\mathcal {P}}
\newcommand{\calS}{\mathcal {S}}
\newcommand{\calX}{\mathcal {X}}
\newcommand{\lpar}{\left(}
\newcommand{\rpar}{\right)}
\newcommand{\wmin}{w_{\text{min}}}
\newcommand{\maxVar}{\sigma_{\text{max}}}
\newcommand{\hint}{Q} %
\DeclareMathOperator*{\lap}{Lap}
\newcommand{\noisy}[1]{\widehat{#1}}
\newcommand{\smashnoisy}[1]{\smash{\widehat{#1}}}
\renewcommand{\eqref}[1]{(\ref{#1})\xspace}
\DeclareRobustCommand\onedot{\futurelet\@let@token\@onedot}
\def\@onedot{\ifx\@let@token.\else.\null\fi\xspace}
\def\eg{{e.g}\onedot} 
\def\ie{{i.e}\onedot}
\newcommand{\acronym}{FedDP-KMeans\xspace}
\newcommand{\acroinit}{FedDP-Init\xspace}
\newcommand{\acrolloyds}{FedDP-Lloyds\xspace}
\renewcommand{\paragraph}[1]{\noindent\textbf{#1}\quad}
\icmltitlerunning{Differentially Private Federated $k$-Means Clustering with Server-Side Data}
\begin{document}

\twocolumn[
\icmltitle{Differentially Private Federated $k$-Means Clustering with Server-Side Data}

\icmlsetsymbol{equal}{*}

\begin{icmlauthorlist}
\icmlauthor{Jonathan Scott}{ista}
\icmlauthor{Christoph H. Lampert}{ista}
\icmlauthor{David Saulpic}{cnrs}
\end{icmlauthorlist}

\icmlaffiliation{ista}{Institute of Science and Technology Austria (ISTA)}
\icmlaffiliation{cnrs}{CNRS \& Universit\'e Paris Cité, Paris, France}

\icmlcorrespondingauthor{Jonathan Scott}{jscott@ist.ac.at}

\icmlkeywords{Machine Learning, ICML, Federated Learning, Clustering, Differential Privacy}

\vskip 0.3in
]

\printAffiliationsAndNotice{\icmlEqualContribution} %

\begin{abstract}
Clustering is a cornerstone of data analysis that is particularly suited to identifying coherent subgroups or substructures in unlabeled data, as are generated continuously in large amounts these days. 
However, in many cases traditional clustering methods are not applicable, because data are increasingly being produced and stored in a distributed way, \eg on edge devices, and privacy concerns prevent it from being transferred to a central server. 
To address this challenge, we present \acronym, a new algorithm for $k$-means clustering that is fully-federated as well as differentially private. 
Our approach leverages (potentially small and out-of-distribution) server-side data to overcome the primary challenge of differentially private clustering methods: the need for a good initialization.
Combining our initialization with a simple federated DP-Lloyds algorithm we obtain an algorithm that achieves excellent results on synthetic and real-world benchmark tasks. Our code
can be found at \url{https://github.com/jonnyascott/fed-dp-kmeans}. 
We also provide a theoretical analysis of our method that provides bounds on the convergence speed and cluster identification 
success.  
\end{abstract}

\section{Introduction}
Clustering has long been the technique of choice for understanding and identifying groups and structures in unlabeled data.
Effective algorithms to cluster non-private centralized data have been around for decades~\citep{lloyds, spectral_cl1, spectral_cl2}. 
However, the major paradigm shift in how data are generated nowadays presents new challenges that often prevent the use of traditional methods.
For instance, the proliferation of smart phones and other wearable devices, has led to large amounts of data being generated in a decentralized manner. 
Moreover, the nature of these devices means that the generated data are often highly sensitive to users and should remain private.
While public data of the same kind usually exists, typically there is much less of it, and it does not follow the same data distribution as the private client data,
meaning that it cannot be used to solve the clustering task directly.

These observations have triggered the development of techniques for learning from decentralized data, most popularly \emph{federated learning (FL)}~\citep{fedavg}. 
Originally proposed as an efficient means of training supervised models on data distributed over a large number of mobile devices \citep{next_word}, FL has become the de facto standard approach to distributed learning in a wide range of privacy-sensitive applications \citep{FL_health2, emoji_prediction, FL_health1, advances}. 
However, it has been observed that, on its own, FL is not sufficient to maintain the privacy of client data~\citep{fl_non_private1, fl_non_private2, fl_non_private3}. 
The reason is that information about the client data, or even some data items themselves, might be extractable from the learned model weights. 
This is obvious in the case of clustering: imagine a cluster emerges that consists of a single data point. 
Then, this data could be read off from the corresponding cluster center, even if FL was used for training.
Therefore, in privacy-sensitive applications, it is essential to combine FL with other privacy preserving techniques. The most common among these is \emph{differential privacy (DP)}~\citep{dp}, which we introduce in Section~\ref{sec:background}.
DP masks information about individual data points with carefully crafted noise. This can, however, lead to a reduction in the quality of results, called the privacy-utility trade-off.

Several methods have been proposed for clustering private data that are either federated, but not DP compatible, or which are DP but don't work in FL settings, see Section~\ref{sec:relatedWork}.
In this paper we close this gap by introducing \acronym, a fully federated and differentially private $k$-means clustering algorithm. 
Our main innovation is a new initialization method, \acroinit, that leverages server-side data to find good initial centers.
Crucially, we do not require the server data to follow the same 
distribution as the client data, making \acroinit applicable
to a wide range of practical FL scenarios. 
The initial centers serve as input to \acrolloyds, a simple federated and differentially private variant of Lloyds algorithm~\citep{lloyds}.
As we expand upon in Section \ref{sec:background}, a good initialization is critical to obtaining a good final clustering.
While this is already true for non-private, centralized clustering, it is especially the case in the differentially private, federated setting, where we are further limited by privacy and communication constraints in the number of times we can access client data and thereby refine our initialization.
We report on experiments for synthetic as well as real datasets in two settings: when we wish to preserve individual \emph{data point privacy}, as is common for cross-silo federated learning settings~\citep{cross-silo}, and \emph{client-level privacy}, as is typically used in cross-device learning settings~\citep{fedavg}. 
In both cases, \acronym achieves clearly better results than all baseline techniques.
We also provide a theoretical analysis, proving that under standard assumptions for the analysis of clustering algorithms (Gaussian mixture data with well-separated components), the cluster centers found by \acronym converge exponentially fast to the true component means and the ground truth clusters are identified after 
only logarithmically many steps.

To summarise, our main contributions are as follows:
\begin{itemize}
    \item We propose a novel differentially private and federated initialization method that 
    leverages small, out-of-distribution server-side data to generate high-quality initializations for federated $k$-means.
    \item We introduce the first fully federated and differentially private $k$-means algorithm by combining this 
    initialization with a simple DP federated Lloyd’s variant.
    \item We provide theoretical guarantees showing exponential convergence to ground truth clusters under standard assumptions.
    \item We conduct extensive empirical evaluations, demonstrating strong performance across data-point and user-level privacy settings on both synthetic and real federated data.
\end{itemize}

\section{Background}\label{sec:background}
\paragraph{$k$-Means Clustering} %
Given a set of data points, $P=(p_1,\dots,p_n)$ and any $2\leq k\leq n$, the goal of $k$-means clustering is to find \emph{cluster centers}, $\nu_1,\dots,\nu_k$ that minimize
\begin{align}\label{eq:kmeanscost}
    \sum_{i=1}^n &\min_{j=1,\dots,k} \|p_i - \nu_j\|^2.
\end{align}
The cluster centers induce a partition of the data points: 
a point $p$ belongs to cluster $j$, if $\|p-\nu_j\| \leq \|p-\nu_{j'}\|$ for all $j,j'$, with ties broken arbitrarily (but deterministically). 
It is well established that solving the $k$-means problem optimally is NP-hard in general~\citep{dasguptaTR2008}. However, efficient approximate algorithms are available, the most popular being {Lloyd's algorithm}~\citep{lloyds}. 
Given an initial set of centers, it iteratively refines their positions until a local minimum of \eqref{eq:kmeanscost} is found. 
A characteristic property of Lloyd's algorithm is that 
the number of steps required to converge and the quality of the resulting solution depend strongly on the initialization: the most commonly used initialization is the $k$-means++ algorithm \citep{kmeanspp}.

\paragraph{Federated Learning} %
Federated learning is a design principle for training a joint model from data that is stored in a decentralized way on local clients, without those clients ever having to share their data with anybody else. 
The computation is coordinated by a central \emph{server} which typically employs an iterative protocol: first, the server sends intermediate model parameters to the clients. Then, the clients compute local updates based on their own data. Finally, the updates are \emph{aggregated}, \eg as their sum across clients, either by a trusted intermediate or using cryptographic protocols, such as multi-party computation~\citep{secure_agg_google, secure_agg_apple}.
The server receives the aggregate and uses it to improve the current model, then it starts the next iteration.
Although this framework enables better privacy, by keeping client data stored locally, each iteration incurs 
significant communication costs.
Consequently, to make FL practical, it is important to design algorithms that require as few such iterations as possible.

While the primary focus of FL is on decentralized client data, the server itself can also possess data of its own, though usually far less than the clients in total and not of the same data distribution. 
Such a setting is in fact common in practice, where \eg data from public sources, synthetically generated data, anonymized data, or data from some consenting clients is available to the server~\citep{next_word, FL_accoustic, end_to_end_ASR, mdm}.  

\paragraph{Differential Privacy (DP)} 
DP is a mathematically rigorous framework for computing summary information about a dataset (for us, its cluster centers) in such a way that the privacy of individual data items is preserved. 
Formally, for any $\eps,\delta>0$, a (necessarily randomized) algorithm {$\mathcal{A} : \calP \rightarrow \calS$}
that takes as input a data collection $P \in \calP$ and outputs some values in a space $\calS$, 
is called \emph{$(\epsilon,\delta)$ differentially private}, if it fulfills that for every $S\subset \calS$ 
\begin{equation}
\Pr[{\mathcal{A}}(P)\in S]\leq e^{\eps }\Pr[{\mathcal{A}}(P')\in S]+\delta, \label{eq:DP}
\end{equation}
where $P$ and $P'$ are two arbitrary \emph{neighboring} datasets.
We consider two notions of \emph{neighboring}
in this work: for standard 
\emph{data-point-level privacy}, two datasets are neighbors if they are 
identical except that one of them contains an additional element compared 
to the other. 
In the more restrictive \emph{client-level privacy}, we think of two datasets as a collection of per-client contributions, and two datasets are 
neighbors if they are identical, except that all data points of one of the individual client are missing in one of them.
Condition~\eqref{eq:DP} then ensures that no individual data item (a data point or a client's data set) can influence the algorithm output very much. 
As a consequence, from the output of the algorithm it is not possible to reliably infer if any specific data item occurred in the client data or not. 
An important property of DP is its \emph{compositionality}: if algorithms 
$\mathcal{A}_1,\dots,\mathcal{A}_t$ are DP with corresponding privacy 
parameters $(\eps_1,\delta_1),\dots,(\eps_t,\delta_t)$, then 
any combination or concatenation of their outputs is DP at least with privacy parameters $(\sum_{s=1}^t \eps_s, \sum_{s=1}^t \delta_s)$. 
In fact, stronger guarantees hold, which in addition allows trading off between $\eps$ and $\delta$, see~\citep{composition}. These cannot, however, be easily stated in closed form. 
Due to compositionality, DP algorithms can be designed easily by designing individually private steps and composing them. 

In this work, we use two mechanisms to make computational steps DP: The \emph{Laplace mechanism}~\citep{laplace_mech}
achieves $(\eps,0)$ privacy by adding Laplace-distributed noise with 
scale parameter $\frac{S}{\eps}$ to the output of the computation.
Here, $S$ is the \emph{sensitivity} of the step, \ie the maximal amount 
by which its output can change when operating on two neighboring datasets,
measured by the $L^1$-distance.
The \emph{Gaussian mechanism}~\citep{gaussian_mech} instead adds Gaussian 
noise of variance $\sigma^2_G(\eps, \delta; S) = \frac{2\log(1.25/\delta)S^2}{\eps^2}$ 
to ensure $(\eps,\delta)$-privacy
Here, the sensitivity, $S$, is measured with respect to $L^2$-distance.
The above formulas show that stronger privacy guarantees, \ie a smaller \emph{privacy budget} $(\eps,\delta)$, require more noise to be added.
This, however, might reduce the accuracy of the output.
Additionally, the more processing steps there are that access private data, 
the smaller the privacy budget of each step has to be in order to not exceed
an overall target budget. 
In combination, this causes a counter-intuitive trade-off for DP algorithms that does not exist in this form for ordinary algorithms: accessing the data more often, \eg more rounds of Lloyd's algorithm, might lead to lower accuracy results, because the larger number of steps has to be compensated by more noise per step. 
Consequently, a careful analysis of the privacy-utility trade-off is crucial for DP algorithms. 
In general, however, algorithms are preferable that access the private data as rarely as possible. 
In the context of $k$-means clustering this means that one can only expect good results by avoiding having to run many iterations of Lloyd's. Consequently, a good initialization is crucial for achieving high accuracy.

\section{Method}\label{sec:method}

We assume a setting of $m$ clients. Each client, $j$, possesses a dataset, $P^j\in \R^{d \times n_j}$, where each column is a data point. 
The server also has some data, $\hint$, which can be freely shared with the clients, but that is potentially small and \emph{out-of-distribution} (\ie not following the client data distribution).
The goal is to determine a $k$-means clustering of the joint clients' dataset $P:=\bigcup_{j=1}^m P^j$ in a \emph{federated} and \emph{differentially private} way. 
We propose \acronym, which solves this task in two stages.
the first, \acroinit (\Cref{alg:main}), is our main contribution:
it constructs an initialization to $k$-means by exploiting server-side data. 
The second, \acrolloyds (\Cref{alg:fdp-lloyds}), is a simple 
federated DP-Lloyds algorithm, which refines the initialization.

\subsection{\acroinit}
\paragraph{Sketch:} \acroinit has three steps: 
\textbf{Step 1} computes a projection matrix onto the space spanned by the top $k$ singular vectors of the client data matrix $P$. 
\textbf{Step 2} projects the server data onto that subspace, and computes a weight for each server point $q$ that reflects how many client points have $q$ as their nearest neighbor. 
\textbf{Step 3} computes initial cluster centers in the original data space by first clustering the weighted server data in the projected space and then refining these centers by a step resembling one step of Lloyd's algorithm on the clients, but with the similarity computed in the projected space. 
To ensure the privacy of the client data all above computations are performed with sufficient amounts of additive noise, and the server only ever receives noised aggregates of the computed quantities across all clients. 
Consequently, \acroinit is differentially private and fully compatible with standard FL and secure aggregation setups, as described in Section~\ref{sec:background}. 

Intuitively, the goal of Step 1 is to project the data onto a lower-dimensional subspace that preserves the important variance (\ie distance between the means) but reduces the variance in nuisance direction (in particular the intra-cluster variance). 
This construction is common for clustering algorithm that strive for theoretical guarantees, and was popularized by~\citet{KumarK10}. 
Our key novelty lies in Step 2 and 3: here, we exploit the server data, essentially turning it into a proxy dataset on which  the server can operate without any privacy cost. After one more interaction with the clients, the resulting cluster centers are typically so close to the optimal ones, that only very few (sometimes none at all) steps of Lloyd's algorithm are required to refine them.
Our theoretical analysis (Section~\ref{sec:theory}) quantifies this effect: for suitably separated Gaussian Mixture data, the necessary number of steps to find the ground truth clusters is at most logarithmic in the total number of data points. 

\begin{algorithm}[h!]
    \caption{\acroinit} \label{alg:main}
    \begin{algorithmic}[1]
        \STATE {\bfseries Input:} Client data sets $P^1,\dots,P^m$, \# of clusters $k$, privacy parameters $\eps_1, \eps_2, \eps_{3G}, \eps_{3L}, \delta$ 
        \medskip\STATE \textbf{Step 1:} \textit{\ / / \ Projection onto top $k$ singular vectors}
        \FOR{\text{client} $j=1, \dots, m$}
        \STATE Client $j$ computes outer product $P^j (P^j)^T$
        \ENDFOR
        \STATE Server receives noisy aggregate $\smashnoisy{PP^T} = \sum_{j=1}^m P^j (P^j)^T + \calN_{d\times d}(0, \sigma^2(\eps_1, \delta; \Delta^2))$
        \STATE Server forms a projection matrix $\Pi$ from top $k$ eigenvectors of $\noisy{PP^T}$
        \smallskip
        \STATE \textbf{Step 2:} \textit{\ / / \ Determine importance weights}
        \FOR{\text{client} $j=1, \dots, m$}
        \STATE Client $j$ receives $\Pi$ and $\Pi \hint$ from server
        \FOR{every point $q\in\Pi\hint$}
        \STATE Client $j$ computes weight $w_q(\Pi P^j)\coloneq\big|\{p\in \Pi P^j \mid \forall q'\in \Pi \hint, \ \|p - q \| \leq \|p-q'\|\} \big|$
        \ENDFOR
        \ENDFOR
        \STATE Server receives noisy aggregate $\smashnoisy{w_q(\Pi P)} = \sum_{j=1}^m w_q(\Pi P^j) + \lap(0, \frac{1}{\eps_2})$ for each $q \in \Pi \hint$
        \smallskip
        \STATE \textbf{Step 3:} \textit{\ / / \ Cluster projected server points and initialize}
                \STATE Server computes cluster centers $\xi_1,..,\xi_k$
                by running $k$-means clustering of $\Pi \hint$ with per-sample weights $\noisy{w_q(\Pi P)}$
        \FOR{\text{client} $j=1, \dots, m$}
        \STATE Client $j$ receives $\xi_1,..,\xi_k$ from server
        \STATE Client $j$ computes $S_r^j = \lbrace p \in P^j: \forall s, \| \Pi p - \xi_r\| \leq\| \Pi p - \xi_s\|\rbrace$, for $r=1,\dots, k$
        \STATE Client $j$ computes $m_r^j = \sum_{p\in S_r^j} p$ and $n_j^r = |S_r^j|$
        \ENDFOR
        \STATE Server receives noisy aggregates $\smashnoisy{m_r} =\sum_{j=1}^m m_r^j + \calN_{d}(0, \sigma^2(\eps_{3G}, \delta; \Delta))$ and $\smashnoisy{n_r}=\sum_{j=1}^m n_r^j + \lap(0, \frac{1}{\eps_{3L}})$
        \STATE Server computes initial centers $\nu_r = \smashnoisy{m_r} / \smashnoisy{n_r}$ for $r = 1, \dots, k$
        \smallskip
        \STATE {\bfseries Output:} Initial cluster centers $\nu_1,..,\nu_k$
    \end{algorithmic}
\end{algorithm}

In the rest of this section, we describe the individual steps in 
more detail. 
For the sake of simpler exposition, we describe only the setting of 
data-point-level DP. However, only minor 
changes are needed for client-level DP, see Section \ref{sec:experiments}.
As private budget, we treat $\delta$ as fixed for all steps,
and denote the individual budgets of the three steps as $\eps_1$, 
$\eps_2$ and $\eps_3$. 
We provide recommendations how to set these values given an 
overall privacy budget in Appendix~\ref{app:choosing_hp}.

\paragraph{Algorithm details -- Step 1:}
The server aims to compute the top $k$ eigenvectors of the clients' data outer product matrix $PP^T$.
However, in the federated setup, it cannot do so directly because it does not have access to the matrix $P$. 
Instead, the algorithm exploits that the overall outer product matrix can be decomposed as the sum of the outer products of each client data matrix, \ie $PP^T  = \sum_{j=1}^m  P^j (P^j)^T$. Therefore, each client can locally compute their outer product matrix and the server only receives their noisy across-client aggregate, $\noisy{PP^T}$. 
We ensure the privacy of this computation by the Gaussian mechanism. The associated sensitivity is the 
maximum squared norm of any single data point, which is upper bounded by the square of the dataset radius, $\Delta$. 
Consequently, a noise variance of $\sigma^2_G(\eps_1,\delta;\Delta^2)$ ensures $(\eps_1,\delta)$-privacy, as shown by \citet{DworkTT014}.

The remaining operations the server can perform noise-free: it computes the top $k$ eigenvectors of $\noisy{PP^T}$ and forms the matrix $\Pi\in\R^{d\times d}$ from them, which allows projecting to the $k$-dimensional subspace spanned by these vectors
(which we call \emph{data subspace}).
The projection provides a data-adjusted way of reducing the dimension of data vectors 
from potentially large $d$ to the much smaller $k$.
This is an important ingredient to our algorithm, because in low dimension typically 
less noise is required to ensure privacy. The lower dimension also helps keep 
the communication between server and client small. 
The dimension $k$ is chosen, because for sufficiently separated clusters, one can 
expect the subspace to align well with the subspace spanned by the cluster centers.
In that case, the projection will preserve inter-cluster variance but reduce 
intra-cluster variance, which improves the signal-to-noise ratio of the data.

\paragraph{Step 2:} 
Next, the server computes per-point weights for its own data such that it can serve as a proxy for the data of the clients.
The server shares with the clients the computed projection matrix $\Pi$, and its own projected dataset $\Pi \hint$.
Each client uses $\Pi$ to project its own data to the data subspace. Then, it computes a weight for each server point $q \in \Pi \hint$ as, $w_q(\Pi P^j) = \big|\{p\in \Pi P^j \mid \forall q'\in \Pi \hint, \|p - q \| \leq \|p-q'\|\} \big|$,
that is, the count of how many of the client's projected points are closer to $q$ than to any other $q' \in \Pi \hint$, breaking ties arbitrarily. %
The weights are sent to the server in aggregated and noised form. 
As an unnormalized histogram over the client data, the point weight has $L^1$-sensitivity 1.
Therefore, the Laplace mechanism with noise scale $1/\eps_2$ makes this step $(\eps_2, 0)$-DP.
The noisy total weights, $\noisy{w_q(\Pi P)}$ for $q\in \Pi \hint$, provide the server with a (noisy) estimate of how many client data points each of its data points represents.
It then runs $k$-means clustering on its projected data $\Pi \hint$, where 
each point $q$ receives weight $\noisy{w_q(\Pi P)}$ in the $k$-means cost 
function, to obtain centers $\xi_1, \dots, \xi_k$ in the data subspace.

\paragraph{Step 3:}
In the final step the server constructs centers in the original space.
For this, it sends the projected centers $\xi_1, \dots, \xi_k$ to the clients. 
For each projected cluster center $\xi_r$, each client $j$ computes the 
set of all points $p\in P^j$ whose closest center in the projected space 
is $\xi_r$, \ie $S_r^j := \lbrace p \in P^j: \ \forall s, \| \Pi p - \xi_r\| \leq\| \Pi p - \xi_s\|\rbrace$. 
For any $r$, the union of these sets across all clients would form a cluster 
in the client data. We want the mean vector of this to constitute the 
$r$-th initialization center. 
For this, each client $j$ computes the sum, and the number, of their points in each cluster, 
$m_r^j = \sum_{p\in S_r^j} p$, $n_j^r = |S_r^j|$.
Aggregated across all clients one obtains the global sum and count 
of the points in each cluster: $m_r = \sum_{j=1}^m m_r^j$ and $n_r = \sum_{j=1}^m n_r^j$. 
To make this step private, we first split $\eps_3=\eps_{3G}+\eps_{3L}$. 
For $m_r^j$, which has $L^2$-sensitivity $\Delta$, we apply the Gaussian mechanism 
with variance $\sigma^2(\eps_{3G},\delta;\Delta)$. For $n_r$, which has the $L^1$-sensitivity is $1$, 
we use the Laplace mechanisms with scale $1/\eps_{3L}$. %
This ensures $(\eps_{3G},\delta)$ and $(\eps_{3L},0)$ privacy, respectively, and therefore (at least) $(\eps,\delta)$ privacy overall for this step.
Finally, the server uses the noisy estimates of the total sums and counts, $\noisy{m_r}$ and $\noisy{n_r}$, to compute approximate means $\nu_r = \noisy{m_r} / \noisy{n_r}$, and outputs these as initial centers.

\subsection{\acrolloyds}
The second step of \acronym is a variant of Lloyd's algorithm that we adapt to a private FL setting.
The basic observation here is that a step of Lloyd's algorithm can be expressed only as summations and counts of data points. 
Consequently, all quantities that the server requires can be expressed as aggregates over client statistics 
which allows us to preserve user privacy with secure aggregation and DP. 
Specifically, assume that we are given initial centers $\nu_1^0, \dots, \nu_k^0$, and 
a privacy budget $(\eps_4,\delta_4)$, which we split as $\eps_4=\eps_{4G}+\eps_{4L}$.
For rounds $t=1, \dots, T$, we repeat the following steps. 
The server sends the latest estimate of the centers to the clients.
Each client $j$ computes, for $r=1, \dots, k$, $S_r^j := \lbrace p \in P^j: \forall s, \| p - \nu_r^{t-1}\| \leq\| p - \nu_s^{t-1}\|\rbrace$, the set of points whose closest center is $\nu_r^{t-1}$.
Note that in contrast to the initialization, the distance is measured in the full data space here, not the data subspace.
The remaining steps coincide with the end of Step 3 above. 
Each client $j$ computes the summations and counts of their points 
in each cluster: $m_r^j = \sum_{p\in S_r^j} p$ and $n_j^r = |S_r^j|$. 
These are aggregated to $m_r = \sum_{j=1}^m m_r^j$ and $n_r = \sum_{j=1}^m n_r^j$,
and made private by the Gaussian mechanisms with variance $\sigma^2(\eps_{4G}/T,\delta/T,\Delta)$
and the Laplacian mechanism with scale $T/\eps_{4L}$, respectively.
The server receives the noisy total sums and counts $\noisy{m_r}$ and $\noisy{n_r}$, and it updates its estimate of the centers as $\nu_r^{t} = \noisy{m_r} / \noisy{n_r}$.
Overall, the composition property of DP ensures that \acrolloyds is at least $(\eps_4,\delta)$-private.

\begin{algorithm}[h]
    \caption{\acrolloyds} \label{alg:fdp-lloyds}
    \begin{algorithmic}[1]
        \STATE {\bfseries Input:} Initial centers $\nu_1^0, \dots, \nu_k^0$, $P$, steps $T$, privacy parameters $\eps_G, \eps_L, \delta$
        \FOR{$t = 1, \dots, T$}
        \FOR{\text{client} $j=1, \dots, m$}
        \STATE Client $j$ receives $\nu_1^{t-1}, \dots, \nu_k^{t-1}$ from Server
        \FOR{$r=1,\dots,k$}
        \STATE Client $j$ computes $S_r^j := \lbrace p \in P^j: \forall s, \| p - \nu_r^{t-1}\| \leq\| p - \nu_s^{t-1}\|\rbrace$
        \STATE Client $j$ computes $m_r^j = \sum_{p\in S_r^j} p$, $n_j^r = |S_r^j|$
        \ENDFOR
        \ENDFOR
        \STATE Server receives $\smashnoisy{m_r} =\sum_{j=1}^m m_r^j + \calN_d(0, T \Delta^2 \sigma^2(\eps_G/T, \delta))$ and $\smashnoisy{n_r}=\sum_{j=1}^m n_r^j + \lap(0, \frac{T}{\eps_L})\!\!\!\!\!$
        \STATE Server computes centers $\nu_r^t = \smashnoisy{m_r} / \smashnoisy{n_r}$, $r = 1, \dots, k$
        \ENDFOR
        \STATE {\bfseries Output:} Final cluster centers $\nu_1^T,..,\nu_k^T$
    \end{algorithmic}
\end{algorithm}

The choice of noise parameters ensure that the combination of our two algorithms is differentially private:
\begin{theorem}\label{thm:dp}
     \acronym followed with \acrolloyds is $(\eps, \delta)$-DP for $\eps = \eps_1+ \eps_2+ \eps_{3G}+ \eps_{3L} + \eps_{4G} + \eps_{4L}$
\end{theorem}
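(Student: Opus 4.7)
My plan is to prove the theorem by a direct application of basic (adaptive) composition: verify that each of the six mechanisms indexed by the parameters $\eps_1, \eps_2, \eps_{3G}, \eps_{3L}, \eps_{4G}, \eps_{4L}$ is individually DP with the corresponding budget, and then add the privacy losses. The entire argument reduces to sensitivity bookkeeping and invocations of the Laplace and Gaussian mechanism theorems already recalled in \Cref{sec:background}.

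I would start with \acroinit. In Step 1 the only quantity that leaves the clients is $\smashnoisy{PP^T} = \sum_j P^j(P^j)^T + \calN_{d\times d}(0, \sigma^2(\eps_1,\delta;\Delta^2))$; adding or removing a point $p$ changes the underlying sum by the rank-one matrix $pp^T$, whose Frobenius norm is $\|p\|^2 \le \Delta^2$, so the Gaussian-mechanism calibration makes this release $(\eps_1,\delta)$-DP, and the projection $\Pi$ derived from $\smashnoisy{PP^T}$ and the public server set $\hint$ inherits this budget by post-processing. In Step 2, once $\Pi$ is fixed, the vector $\bigl(w_q(\Pi P)\bigr)_{q\in\Pi\hint}$ is an unnormalized histogram of projected client points against a fixed set of server bins, so its $L^1$ sensitivity is $1$; Laplace noise of scale $1/\eps_2$ therefore makes $\smashnoisy{w_q(\Pi P)}$ be $(\eps_2,0)$-DP, and the weighted $k$-means producing $\xi_1,\dots,\xi_k$ on the server is post-processing. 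Step 3 mirrors a single Lloyd step: the sum $m_r$ has $L^2$ sensitivity $\Delta$ (a single added point shifts it by a vector of norm at most $\Delta$) and the count $n_r$ has $L^1$ sensitivity $1$, giving $(\eps_{3G},\delta)$-DP and $(\eps_{3L},0)$-DP respectively, after which $\nu_r = \smashnoisy{m_r}/\smashnoisy{n_r}$ is post-processing.

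For \acrolloyds I would proceed identically for a single round, noting that the noise is calibrated to $(\eps_{4G}/T,\delta/T)$ for the Gaussian release and $T/\eps_{4L}$ for the Laplace release, so round $t$ is $(\eps_{4G}/T+\eps_{4L}/T,\delta/T)$-DP; basic composition across the $T$ rounds then gives $(\eps_{4G}+\eps_{4L},\delta)$-DP for the whole inner loop. Combining \acroinit\ and \acrolloyds\ by adaptive composition (needed because \acrolloyds\ takes as input the output of \acroinit, and within \acroinit\ the later steps use quantities produced earlier) then yields the stated total $\eps = \eps_1+\eps_2+\eps_{3G}+\eps_{3L}+\eps_{4G}+\eps_{4L}$, with the $\delta$ parameters composing in the standard way.

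There is no genuine obstacle in the argument; the one point I would be careful to spell out is that every client-side computation that does not appear inside a noisy aggregate — the projected data $\Pi P^j$, the cluster assignments $S_r^j$, the per-client sums $m_r^j$ and counts $n_r^j$ — never leaves the device, so that each client-to-server transmission is exactly one application of a calibrated Gaussian or Laplace mechanism and all server-side derivations (eigendecomposition of $\smashnoisy{PP^T}$, weighted $k$-means on $\Pi\hint$, division $\smashnoisy{m_r}/\smashnoisy{n_r}$) qualify strictly as post-processing of the released DP outputs. The client-level DP variant is analogous, with the per-step sensitivities re-scaled to reflect that neighbours now differ by all data of a single client.
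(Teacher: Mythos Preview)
Your proposal is correct and mirrors the paper's own argument, which is given inline in \Cref{sec:method} rather than as a standalone proof: each release is checked to be DP via the appropriate Laplace or Gaussian mechanism with the stated sensitivity, server-side computations are post-processing, and the pieces are combined by (adaptive) basic composition. If anything, you are slightly more explicit than the paper in flagging that adaptive composition is required and that intermediate client-side quantities never leave the device.
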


\section{Theoretical analysis}\label{sec:theory}

We analyze the theoretical properties of \acronym in the standard setting of data from a $k$-component Gaussian mixture, \ie the data $P$ is sampled from a distribution $\calD(x) = \sum_{j=1}^k w_j \calG_j(x)$ with means $\mu_j$, covariance matrix $\Sigma_j$ and cluster weight $w_j$. The data is partitioned arbitrarily among the clients, \ie each clients data is not necessarily distributed according to $\calD$ itself. 
We denote by $G_j$ the set of samples from the $j$-th component $\calG_j$: the goal is to recover the clustering $G_1, ..., G_k$. 
The server data, $\hint \subset \R^d$, can be small and not of the same distribution as $P$. %

Our main result is \Cref{thm:mainPractice}, which states that \acronym successfully clusters such data, in the sense 
that the cluster centers it computes converge to the ground truth centers, \ie the means of the Gaussian parameters, and the induced clustering becomes the ground truth one. 
In doing so, the algorithm respects data-point differential privacy. 
For this result to hold, a \emph{separation condition} is required (\Cref{def:sep}). It ensures that the ground truth cluster centers are separated far enough from each other to be identifiable.
We first introduce and discuss the separation condition and then state the theorem. 
The proof is in Appendix~\ref{app:step1} and \ref{app:step2}.

\begin{definition}[Separation Condition]\label{def:sep}
For a constant $c$, a Gaussian mixture $\big((\mu_i, \Sigma_i, w_i)\big)_{i = 1, ..., k}$ with $n$ samples is called \emph{$c$-separated} if 
 \begin{equation*}
     \forall i \neq j, \|\mu_i - \mu_j\| \geq c \sqrt{\frac{k}{w_i}}\maxVar \log(n),\label{eq:sep}
 \end{equation*}
 where $\maxVar$ is the maximum variance of any Gaussian along any direction. 
 For some large enough constant $c$ fixed independently of the input, we say that the mixture is separated\footnote{The constant $c$ is determined by prior work: see \citet{AwasthiS12}}
 \end{definition}

Note that the dependency in $\log(n)$ is unavoidable, because with growing $n$ also the chance grows that outliers occur from the Gaussian distributions: assigning each data point to its nearest mean would not be identical to the ground truth clustering anymore. 

 To prove the main theorem, two additional assumptions on $P$ are required: (1) the diameter of the dataset is bounded by $\Delta := O\big( \frac{k \log^2(n) \sqrt{d} \maxVar}{\eps \wmin}\big)$ -- so that the noise added to compute a private SVD preserves enough signal.\footnote{It may be surprising to see that the diameter is allowed to increase when the privacy budget $\eps$ gets smaller. However, \Cref{thm:mainPractice} also requires the sample size to increase with $1/\eps$, which counterbalances the growth of $\Delta$.} 
 (2): there is not too many server data, namely $|\hint| \leq \frac{\eps n k \maxVar^2}{\Delta^2}$. This ensures the noise added Step 2 is not overwhelming compared to the signal. 
Note that conditions (1) and (2) can always be enforced by two preprocessing steps, which we present as part of the proof in Appendix~\ref{app:step1}.
In practice, however, they are typically satisfied automatically, see Appendix \ref{app:verifying_assumptions}. This allows use of the algorithm directly as stated.

\begin{theorem}\label{thm:mainPractice}
    Suppose that the client dataset $P$ is generated from a separated Gaussian mixtures with $n \geq \zeta_1 \frac{k \log^3 n \sqrt{d} \maxVar}{\eps^2 \wmin^2}$ samples, where $\zeta_1$ is a universal constant, and that $\hint$ contains a least one sample from each component of the mixture. 
    Then, there is a constant $\zeta_2$ such that, under assumptions (1) and (2), the centers $\nu_1,..., \nu_k$ that are computed after $T$ steps of \acrolloyds satisfy with high probability
    \begin{equation}
    \|\mu_i - \nu_i\| \leq \zeta_2 \cdot \Bigg( 2^{-T} \cdot \sqrt{\frac{n \maxVar^2}{|G_i|}} + \frac{T \Delta \log(n)}{\eps n \wmin}\Bigg).
    \label{eq:theorem}
    \end{equation}
    Furthermore, there is a constant $\zeta_3$ such that, after $\zeta_3 \log(n)$ rounds of communication, the clustering induced by $\nu_1, ..., \nu_k$ is the ground-truth clustering $G_1, ..., G_k$.
\end{theorem}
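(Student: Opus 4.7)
The proof naturally decomposes into two parts: an initialization analysis showing that \acroinit produces centers lying inside the basin of attraction of the true means $\mu_1, \dots, \mu_k$, and a contraction analysis of \acrolloyds yielding the geometric rate in \eqref{eq:theorem}. I would then combine the two via composition.

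\textbf{Initialization.} The plan is to follow the Kumar--Kannan / \citet{AwasthiS12} template for the projection $\Pi$ in Step~1: under the separation assumption, the span of the top-$k$ eigenvectors of $PP^T$ nearly coincides with $\mathrm{span}(\mu_1, \dots, \mu_k)$, so that the means remain well-separated after projection. Assumption~(1) on $\Delta$ is calibrated precisely so that the Gaussian perturbation on $\noisy{PP^T}$ is dominated by the signal, and a Davis--Kahan argument lifts the unperturbed conclusion to the noisy $\Pi$. In Step~2, $w_q(\Pi P)$ is an unnormalized Voronoi histogram with $L^1$-sensitivity~$1$, and assumption~(2) on $|\hint|$ is exactly what makes the total Laplace noise negligible compared to the true counts inside each ground-truth cluster. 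Because $\hint$ contains at least one sample per component, the weighted $k$-means on $\Pi \hint$ then has a feasible solution of low cost near the projected means, so the output $\xi_1,\dots,\xi_k$ is $O(\maxVar \sqrt{\log n})$-close to $\Pi \mu_i$ (after relabeling). Step~3 performs a single Lloyd-type averaging in the original space with assignments computed in $\Pi \R^d$; its DP error is bounded by the Gaussian + Laplace mechanism calculations in the algorithm description. Composing these bounds yields an initial error matching the $T=0$ instance of \eqref{eq:theorem}.

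\textbf{Contraction.} The core technical step is a one-round lemma of the form
\begin{equation*}
\|\nu_i^t - \mu_i\| \leq \tfrac{1}{2}\, \max_j \|\nu_j^{t-1} - \mu_j\| + O\!\left(\tfrac{\Delta \log n}{\eps\, n\, w_i / T}\right),
\end{equation*}
valid whenever $\max_j \|\nu_j^{t-1} - \mu_j\|$ is a small fraction of the separation. Two ingredients enter. First, \emph{correct assignment}: a sample $p \in G_i$ can be misclassified only if it lies within the current center error of the Voronoi boundary between $\mu_i$ and some $\mu_j$, and sub-Gaussian concentration bounds the mass of such a boundary region by at most half of the current error contribution -- yielding the $\tfrac{1}{2}$ contraction. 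Second, \emph{DP and sampling noise}: after $m_r$ is perturbed by Gaussian noise of standard deviation $\sqrt{T}\,\Delta\,\sigma(\eps_G/T,\delta)$ and $n_r$ by Laplace noise of scale $T/\eps_L$, dividing yields an additive error of order $T\Delta \log(n)/(\eps\, n\, w_i)$ using $|G_i| \geq w_i n/2$ with high probability. Unrolling the recursion, the noise term accumulates as a geometric series dominated by its last iterate, producing the second summand of \eqref{eq:theorem}, while the initial error is damped by $2^{-T}$ to give the first summand.

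\textbf{From centers to clusters.} Once $\|\mu_i - \nu_i\|$ drops below the separation threshold $\Theta(\maxVar \sqrt{k/w_i}\,\log n)$ built into Definition~\ref{def:sep}, every sample of $G_i$ is strictly closer to $\nu_i$ than to any $\nu_j$, hence the induced clustering coincides with $G_1,\dots,G_k$. The sample-size hypothesis on $n$ is chosen so that the noise term in \eqref{eq:theorem} is already below this threshold; hence exact recovery occurs as soon as $2^{-T}\sqrt{n\maxVar^2/|G_i|}$ falls below the same threshold, which happens after $T = \Theta(\log n)$ rounds.

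\textbf{Main obstacle.} The hardest part is the initialization analysis, not the contraction. Because $\hint$ may be out-of-distribution (and potentially small), one cannot directly apply a weighted $k$-means approximation guarantee on $\hint$ and conclude anything about $P$: instead, one must show that the Voronoi-histogram weights from Step~2 encode enough of the projected client density that any near-optimal weighted clustering of $\Pi \hint$ still locates the clusters of $\Pi P$ up to $O(\maxVar)$-error. Threading together the noisy projection (Step~1), the Laplace-noised weights (Step~2), and the composite Gaussian$+$Laplace averaging (Step~3) into one clean initial bound is the delicate piece; the contraction argument, while needing careful DP bookkeeping, then proceeds along the standard spectral-clustering template.
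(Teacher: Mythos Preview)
Your proposal is correct and follows essentially the same route as the paper: initialization via the Kumar--Kannan/\citet{AwasthiS12} projected-clustering template (with assumptions~(1) and~(2) controlling the PCA and Laplace noise respectively, and \Cref{lem:coreset}-type reasoning transferring the weighted $\Pi\hint$ clustering back to $\Pi P$), followed by the halving contraction of Lloyd steps plus per-round DP noise of order $T\Delta\log(n)/(\eps n\wmin)$, and finally recovery once the center error drops below the separation scale. The only cosmetic differences are that the paper invokes the $B$-almost $k$-PCA guarantee of \citet{DworkTT014} rather than Davis--Kahan for the noisy projection, and black-boxes the contraction and misclassification counts via Theorem~5.5 and Theorem~5.4 of \citet{KumarK10} instead of re-deriving the boundary-mass bound you sketch.
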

Note that assumption (1) implies that $\frac{\Delta \log(n)}{\eps \wmin}$ is negligible compared to $n$. That means, the estimated centers converge exponentially fast towards the ground truth. 

\begin{figure*}[t!]
    \centering
    \includegraphics[scale=0.65]{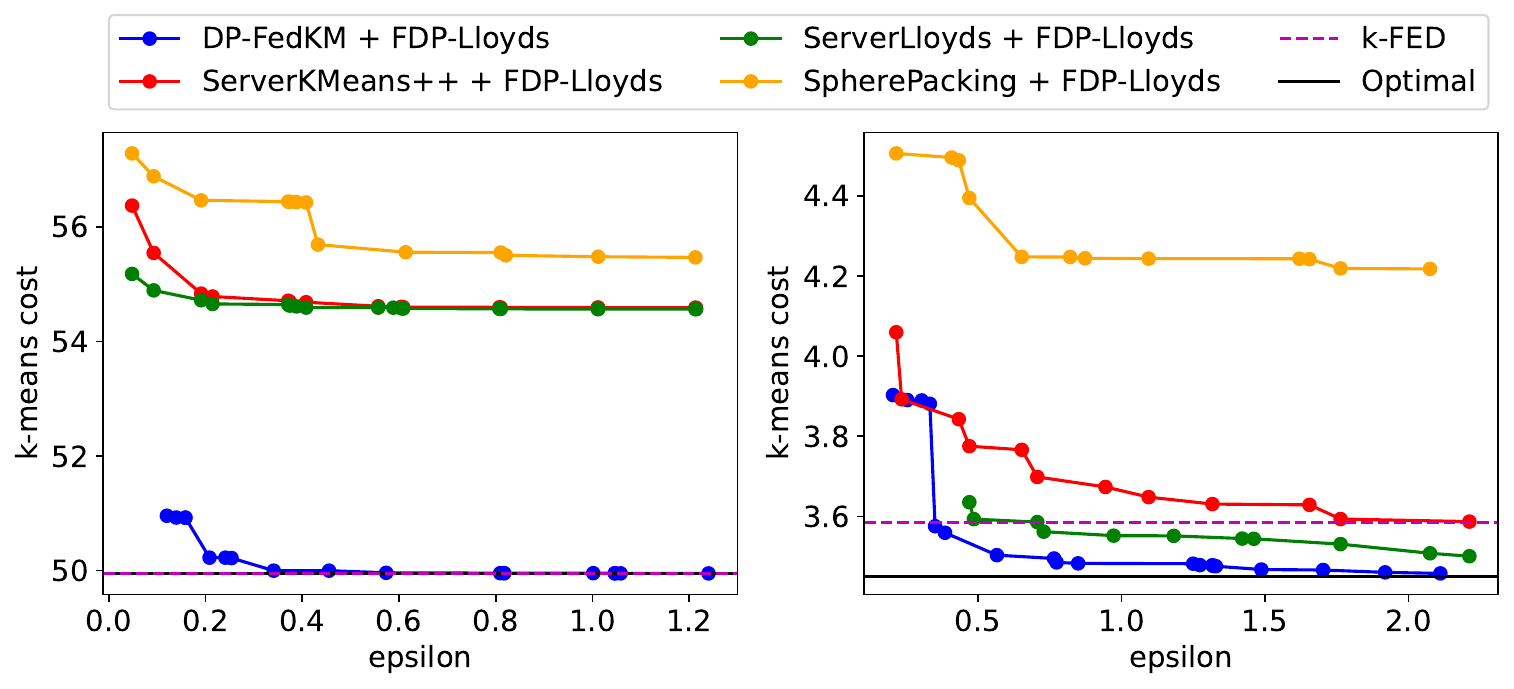}
    \caption{Results with data-point-level privacy $(k=10)$. Left: synthetic mixture of Gaussians data with 100 clients. Right:  US census dataset. The 51 clients are US states, each client has the data of individuals with employment type ``Federal government employee".}
    \label{fig:data-level}
\end{figure*}

\section{Experiments}\label{sec:experiments}
We now present our empirical evaluation of \acronym, which we implemented using the \texttt{pfl-research} framework \citep{pfl_research}.
Our code can be found at \url{https://github.com/jonnyascott/fed-dp-kmeans}.

To verify the broad applicability of our method we run experiments
in both the setting of data-point-level privacy, see Section 
\ref{subsec:data_dp_experiments}, and client-level privacy, see Section \ref{subsec:client_dp_experiments}.
The appropriate level of privacy in FL is typically determined by which data unit 
corresponds to a human. In \textit{cross-silo} FL we typically have a smaller number
of large clients, \eg hospitals, with each data point corresponding to some individual,
so data point-level privacy is appropriate.
In \textit{cross-device} FL, we typically have a large number of clients, where each 
client is a user device such as a smartphone, so client-level privacy is preferable.
Our chosen evaluation datasets reflect these dynamics.

Appendix \ref{app:experiments} contains additional details regarding the
experimental evaluation and
Appendix \ref{app:ablations} contains additional experiments and ablation studies.
In particular, in \ref{app:choosing_hp}, we investigate how to set important 
hyperparameters of \acronym, such as the privacy budgets of the individual steps. 
In \ref{subsec:missing_clusters}, we examine what happens when not all target 
clusters are present in the server data. Finally, in \ref{subsec:choosing_k}, 
we discuss how to use existing methods for choosing $k$, based on the data, 
in the context of \acronym.

\paragraph{Baselines}
As natural alternatives to \acronym we consider different initializations of $k$-means combined with \acrolloyds.
Two baselines methods use the server data to produce initialization:
\textit{ServerKMeans++} runs $k$-means++ on the server data, while \textit{ServerLloyds} runs a full $k$-means clustering of the server data.
The baselines can be expected to work well when the server data 
is large and of the same distribution as the client data. 
This, however, is exactly the situation where the server data would 
suffice anyway, so any following FL would be wasteful.
In the more realistic setting where the server data is small and/or 
out-of-distribution, the baselines might produce biased and 
therefore suboptimal results.
As a third baseline, we include the \emph{SpherePacking} initialization of \citep{sphere_packing_init}.
This data-independent technique constructs initial centroids that are suitably spaced out and cover the data space, see Appendix \ref{app:baselines} for details. 
None of the above baselines use client data for initialization. 
Therefore, they consume none of their privacy budget for this step, 
leaving all of it for the subsequent \acrolloyds. 
We also report results for two methods that do not actually adhere to the differentially-private federated paradigm.
\emph{$k$-FED}~\citep{kfed} is the most popular federated $k$-means algorithm.
As we will discuss in Section \ref{sec:relatedWork} it does not 
exploit server data and does not offer privacy guarantees.
\textit{Optimal} we call the method of transferring all client data 
to a central location and running non-private $k$-means clustering.
This provides neither the guarantees of FL nor of DP, 
but is a lower bound on the $k$-means cost for all other methods.

\paragraph{Evaluation Procedure} We compare \acronym with the baselines over a range of privacy budgets. 
Specifically, if a method has $s$ steps that are 
$(\eps_1, \delta), \dots, (\eps_s, \delta)$ DP then the total privacy cost
of the method is computed as $(\eps_{\text{total}}, \delta)$ by strong composition using Google's \texttt{dp\_accounting} library
\footnote{\scriptsize\url{https://github.com/google/differential-privacy/tree/main/python/dp_accounting}}. 
We fix $\delta=10^{-6}$ for all privacy costs. 
We vary the $\eps_i$ of individual steps as well as other hyperparameters, \eg the number of steps of \acrolloyds, and measure the
$k$-means cost of the final clustering.
For each method we plot the Pareto front of the results in 
($k$-means cost, $\eps_\text{total})$ space.
When plotting we scale the $k$-means cost by the dataset size, so the value computed
in Equation \ref{eq:kmeanscost} is scaled by $1/n$.
This evaluation procedure gives us a good overview of the performance of each method
at a range of privacy budgets. However, on its own it does not tell us how to set 
hyperparameters for \acronym, such how much privacy budget to assign to each step.
Knowing how to do this is important for using \acronym in practice and we address this 
in Appendix \ref{app:choosing_hp}. 

\subsection{Data-point-level Privacy Experiments}\label{subsec:data_dp_experiments}

\paragraph{Privacy Implementation details} In our theoretical discussion we assumed that no individual data point has norm larger than $\Delta$ in order to compute the sensitivity of certain steps.
As $\Delta$ is typically not known in practice, in our experiments we ensure the desired sensitivity by clipping the norm of each data point to be at most $\Delta$, before using it in any computation. $\Delta$ is now a hyperparameter of the algorithm, 
which we set to be the radius of the server dataset.

\paragraph{Datasets} We evaluate on synthetic and real federated datasets that resemble a cross-silo
FL setting. Our synthetic data comes from a mixture of Gaussians, as assumed 
for our theoretical results in Section \ref{sec:theory}.
The client data is of this mixture distribution. To simulate related, but OOD data, the server data consists of two thirds data from the true mixture and one third data that is uniformly distributed. 
We additionally evaluate on US census data using the folktables \citep{folktables} package. 
The dataset has $51$ clients, each corresponding to a US state. Each data point contains the information about a person in the census.
We create a number of clustering tasks by filtering the client
data to contain only those individuals with some chosen employment 
type. The server then recieves a small amount of data of individuals
with a different employment type, to simulate related but OOD data.
Full details on the datasets are in Appendix \ref{app:datasets}.

\paragraph{Results}
See Figure \ref{fig:data-level}.
The left panel shows results for the Gaussian mixture and the right panel for the US census dataset when the clients hold the data of federal employees.
The other two categories are shown in Figures \ref{fig:ft_2} and \ref{fig:ft_6} of Appendix \ref{app:extra_figs}.
On synthetic data, \acronym outperforms all DP baselines by a wide margin. These baselines cannot overcome their poor initializations, with performance plateauing even as the privacy budget grows.
In contrast \acronym obtains optimal (non-private) 
performance at a low privacy budget of around $\eps_{\text{total}} = 0.4$.
The non-private $k$-FED also performs optimally in this setting as is to be expected 
given that the synthetic data fulfills the assumptions of \citet{kfed}.
On the US census datasets we observe a more interesting picture. 
Over all three settings \acronym outperforms the baselines, except in the very low privacy budget regime. 
The latter is to be expected since for sufficiently low privacy budgets a client-based initialization will become very noisy, whereas the initialization with only server data (which requires no privacy budget) stays reasonable. 
With a high enough privacy budget \acronym recovers the optimal non-private clustering. 
Among the baselines we observe similar performance between the two methods that initialize using server data, with ServerLloyds performing slightly better overall.
The data independent SpherePacking performs poorly, emphasizing the importance of leveraging related server data to initialize.
We attribute \acronym's good performance predominantly to the excellent quality of its initialization. As evidence, Table~\ref{tab:step_data_level} in Appendix~\ref{app:experiments} shows how many steps of Lloyd's had to be  performed for Pareto-optimal behavior: this is never more than 2, and often none at all. 

\begin{figure*}[t]
    \centering
    \includegraphics[scale=0.65]{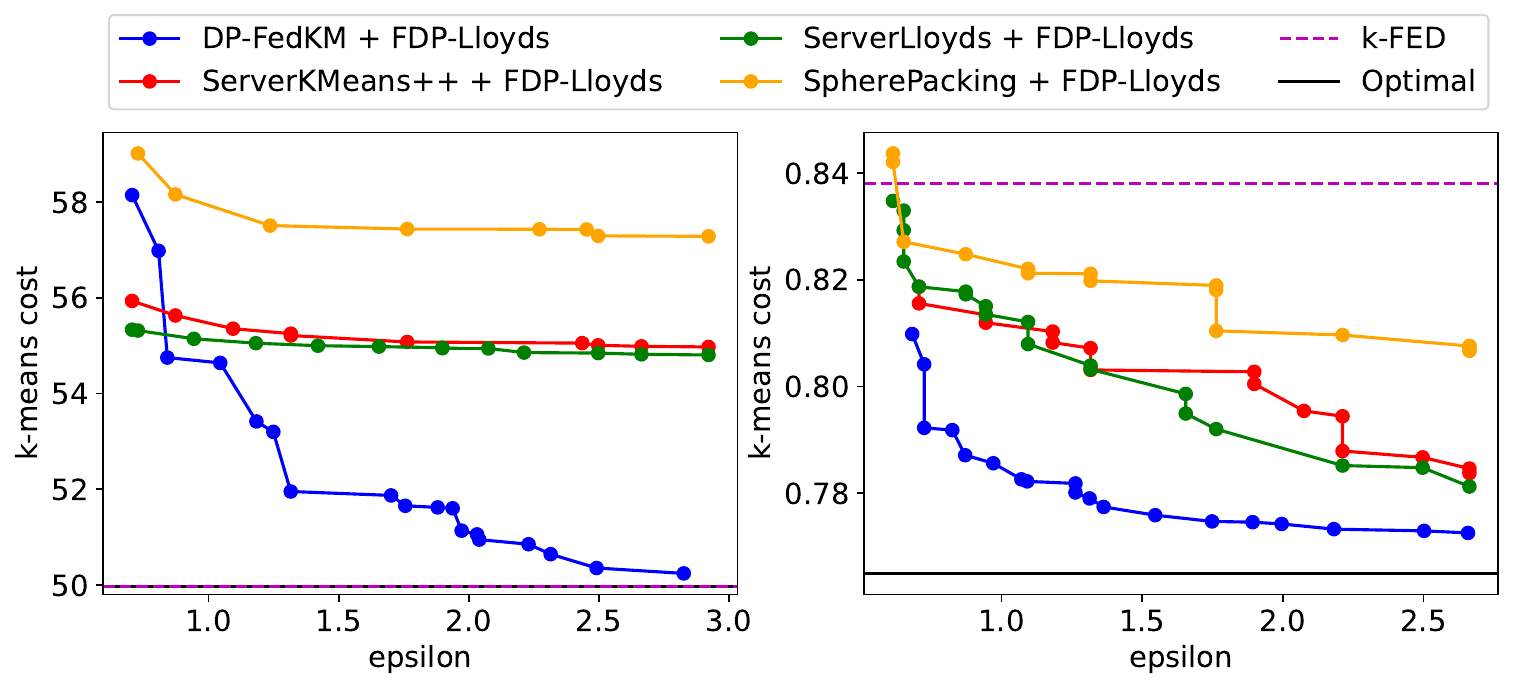}
    \caption{Results with client-level privacy $(k=10)$. Left: synthetic mixture of Gaussians data with 2000 clients. Right: stackoverflow dataset with 9237 clients, topic tags github and pdf.}
    \label{fig:client-level}
\end{figure*}

\subsection{Client-level Privacy Experiments} \label{subsec:client_dp_experiments}

\paragraph{Privacy Implementation details} Moving to client-level differential privacy changes the sensitivities of the steps of our algorithms, which now depend not only on the maximum
norm of a client data point norm, but also on the maximum number of data points a client has.
Rather than placing assumptions on this, and deriving corresponding bounds on the 
sensitivity of each step, we instead simply enforce sensitivity by clipping client statistics prior to aggregation.
This is a standard technique to enforce a given sensitivity in private FL, where it is typically
applied to model/gradient updates. For full details on our implementation
in the client-level privacy setting see Appendix \ref{app:user_level_dp}

\paragraph{Datasets} We evaluate on both synthetic and real federated datasets in a cross-device FL setting. 
For synthetic data we again use a mixture of Gaussians, but with more clients than in Section \ref{subsec:data_dp_experiments}.
We also use the Stack Overflow dataset from Tensorflow Federated.
This is a large scale text dataset of questions posted by users on stackoverflow.com. 
We preprocess this dataset by embedding it with a pre-trainined sentence embedding
model.
Thus each client dataset consists of small number of text embedding vectors.
The server data consists of embedding vectors from questions asked about different
topics to the client data.
See Appendix \ref{app:datasets}.

\paragraph{Results} In \Cref{fig:client-level} we report the outcomes. The left shows results for the synthetic Gaussian mixture dataset with 2000 clients, and the right for the stackoverflow dataset, with topics github and pdf. 
Further results are in Appendix \ref{app:extra_figs}:
synthetic data with 1000 and 5000 clients in Figures \ref{fig:gm_1000} and \ref{fig:gm_5000}, and the other stackoverflow topics are shown in Figures \ref{fig:so_fh}, \ref{fig:so_pc} and \ref{fig:so_mm}. 
For the synthetic data we observe that the baselines that use only server data are unable to overcome their poor initialization, even with more generous privacy budgets.
As the total number of clients grows, from 1000 to 2000 to 5000, \acronym exhibits better performance for the same privacy budget and the budget at which \acronym outperforms server initialization becomes smaller.
This is to be expected since the more clients we have the better our amplification by sub-sampling becomes.
For stackoverflow we observe that \acronym exhibits the best performance, except for in a few cases in the low privacy budget regime.
$k$-FED performs quite poorly overall, tending to be outperformed by the private baselines.
As in data-point-level privacy, we find the quality of \acroinit's initialization to be excellent: few, if any, Lloyd's steps are required for optimality (see Table~\ref{tab:step_client_level}, Appendix~\ref{app:experiments}).
 
\section{Related Work}\label{sec:relatedWork}

Within FL, clustering appears primarily for the purpose of grouping clients together. 
Such \emph{clustered FL} techniques find a clustering of the clients 
while training a separate ML model on each cluster~\citep{clustered_FL1, clustered_FL2,xia2020distributed}.
In contrast, in this work we are interested in the task of clustering the clients' data points, rather than the clients.
In \citet{kfed}, the one-shot scheme $k$-Fed is proposed for this task: first all clients cluster their data locally. Then, they share their cluster centers with the server, which clusters the set of client centers to obtain a global clustering of the data.
However, due to the absence of aggregation of the quantities that clients share with the server, the method has no privacy guarantees. 
\citet{liu2020privacy} propose using \emph{federated averaging}~\citep{fedavg} to minimize the $k$-means objective in combination with multi-party computation. 
Similarly, \citet{mohassel2020practical} describe an efficient multi-party computation
technique for distance computations. 
This will avoid the server seeing individual client contributions before aggregation, but the resulting clustering still exposes private information.

For private clustering, many methods have been proposed based on DPLloyd's~\citep{blum2005practical}, \ie Lloyd's algorithm with noise added to intermediate steps. 
The methods differ typically in the data representation and initialization.
For example, \citet{su2016differentially} creates and clusters a proxy 
dataset by binning the data space. This is, however, tractable only 
in very low dimensions.
\citet{ren2017dplk} chooses initial centers by forming subsets 
of the original data and clustering those.
\citet{zhang2022practical} initializes with randomly selected data 
points and uses multi-party computation to securely aggregate 
client contributions. 
None of the methods are compatible with the FL setting.
On the other hand, algorithms designed for Local Differential Privacy \citet{ChangG0M21, TourHS24} could directly work in FL; albeit with an additive error so large that it makes any implementation impractical \cite{ChaturvediJN22}. DP algorithms that work in parallel environment, such as in \cite{Cohen-AddadEMNZ22} could potentially be adapted to FL, however requiring a large number of communication rounds.
\citet{fed_ACL} also work on federated clustering but with a focus
on an asynchronous and heterogeneous setting rather than on
differential privacy.
To our knowledge, only two prior works combine the advantages of DP and FL.
\citet{li2022differentially} is orthogonal to our work, as it targets 
\emph{vertical FL} (all clients posses the same data points, 
but different subsets of the features). 
\citet{diaa2024fastlloydfederatedaccuratesecure} studies the same problem
as we do, but they propose a custom aggregation scheme that does not fit 
standard security requirements of FL.
It uses SpherePacking to initialize, which in our experiments led to poor results. 

\section{Conclusion}
In this paper we presented \acronym, a federated and differentially private $k$-means clustering algorithm. 
\acronym uses out-of-distribution server data to obtain a good initialization.
Combined with a simple federated, DP, variant of Lloyd's algorithm we obtain an efficient and practical clustering algorithm.
We show that \acronym performs well in practice with both data-point-level and client-level privacy. %
\acronym comes with theoretical guarantees showing exponential convergence to the true cluster centers in the Gaussian mixture setting.
A shortcoming of our method is the need to choose hyperparameters, which is difficult when 
privacy is meant to be ensured. 
While we provide heuristics for this in Appendix~\ref{app:choosing_hp}, a more principled solution would be preferable.
It would also be interesting to explore if the server data could be replaced with a private mechanism based on client data.

\section*{Acknowledgments}

This research was funded in part by the Austrian Science
Fund (FWF) [10.55776/COE12] and supported by the Scientific 
Service Units (SSU) of ISTA through resources provided by Scientific Computing (SciComp). 

\section*{Impact Statement}

This paper presents work whose goal is to advance the field of 
Machine Learning. There are many potential societal consequences 
of our work, none which we feel must be specifically highlighted here.

\bibliography{biblio}
\bibliographystyle{icml2025}

\newpage
\appendix
\onecolumn
\section{Extended related work}\label{app:relatedWork}

\paragraph{Clustering Gaussian mixture}
The problem of clustering Gaussian mixtures is a fundamental of statistics, perhaps dating back from the work of \citet{pearson1894contributions}.

Estimating the parameters of the mixture, as we are trying to in this paper, has a rich history.
\citet{MoitraV10} showed that, even non-privately, the sample complexity has to be exponential in $k$; the standard way to bypass this hardness is to require some separation between the means of the different components. 
If this separation is $o(\sqrt{\log k})$, then any algorithm still requires a non-polynomial number of samples \citep{RegevV17}. When the separation is just above this threshold, namely $O(\log(k)^{1/2+c})$, \citet{Liu022} present a polynomial-time algorithm based on Sum-of-Squares to recover the means of \emph{spherical} Gaussians. 

For clustering general Gaussians, the historical approach is based solely on statistical properties of the data, and requires a separation $\Omega(\sqrt{k})$ times the maximal variance of each component \citep{AchlioptasM05, AwasthiS12}. This separation is necessary for accurate clustering, namely, if one aims at determining from which component each samples is from \citep{DiakonikolasKK022}.
This approach has been implemented privately by \citet{KamathSSU} (with the additional assumption that the input is in a bounded area): this is the one we follow, as the simplicity of the algorithms allows to have efficient implementation in a Federated Learning environment.
\citet{bie2022private} studied how public data can improve performances of this  private algorithm: they assume access to a small set of samples from the distribution, which improves the sample complexity and allows them to remove the assumption that the input lies in a bounded area.

We note that both private works of \citet{KamathSSU} and \citet{bie2022private} have a separation condition that grows with $\log n$, as ours.

To only recover the means of the Gaussians, and not the full clustering, a separation of $k^\alpha$ (for any $\alpha > 0$) is enough \citep{Hopkins018, KothariSS18, SteurerT21}. This is also doable privately (when additionally the input has bounded diameter) using the approach of \citet{CohenKMST21} and \citet{TsfadiaCKMS22}. Those works are hard to implement efficiently in our FL framework for two reasons: first, they rely on Sum-of-Square mechanisms, which does not appear easy to implement efficiently. Second, they use Single Linkage as a subroutine: this does not seem possible to implement in FL. Therefore, some new ideas would be necessary to get efficient algorithm for FL based on this approach.

A different and orthogonal way of approaching the problem of clustering Gaussian mixtures is to  recover a distribution that is $\eps$-close to the mixture in total variation distance, in which case the algorithm of \citet{ashtiani2020near} has optimal sample complexity $\tilde O(kd^2/\eps^2)$ -- albeit with a running time $\omega(\exp(kd^2))$.

\paragraph{On private $k$-means clustering} The private $k$-means algorithm of \citet{TourHS24}, implemented in our FL setting, would require either $\Omega(k)$ rounds of communication with the clients (for simulating their algorithm for central DP algorithm), or a a very large amount of additive noise $k^{O(1)}$ (for their local DP algorithm, with an unspecified exponent in $k$). Furthermore, 
the algorithm requires to compute a net of the underlying Euclidean space, which has size exponential in the dimension, and does not seem implementable. 
To the best of our knowledge, the state-of-the-art implementation of $k$-means clustering is from \citet{blogGoogle}: however, it has no theoretical guarantee, and is not tailored to FL.

	\section{Technical preliminaries}

	\subsection{Differential Privacy definitions and basics}\label{ap:dp}

As mentioned in introduction, one of the most important properties of Differential Privacy is the ability to compose mechanisms. 
There are two ways of doing so. First, parallel composition: if an $(\eps, \delta)$-DP algorithms is applied on two distinct datasets, then the union of the two output is also $(\eps, \delta)$-DP. Formally, the mechanism that takes as input two elements  $P_1, P_2 \in \calP$ and outputs $(\calA(P_1), \calA(P_2))$ is $(\eps, \delta)$-DP. 

The second property is sequential composition: applying an $(\eps, \delta)$-DP algorithm to the output of another $(\eps, \delta)$-DP algorithm is $(2\eps, 2\delta)$-DP. 
Formally: if $\calA : \calP \rightarrow \calS_A$ is $(\eps_A, \delta_A)$-DP and $\calB : \calP \times \calS_A \rightarrow \calS_B$ is $(\eps_B, \delta_B)$-DP, then
$\calB(\calA(\cdot), \cdot) : \calP \rightarrow \calS_B$ is 
$(\eps_A+\eps_B, \delta_A+\delta_B)$-DP.

Those are the composition theorem that we use for the theoretical analysis. We chose for simplicity not to use advanced composition: the number of steps in our final algorithms \Cref{alg:simplify} and \Cref{alg:theory} is logarithmic, hence the improvement would be marginal. 
However, in practice, better bounds can be computed -- although they don't have closed-form expression. We use a standard algorithm to estimate more precise upper-bounds on the privacy parameters of our algorithms \citep{composition}.

The sensitivity of a function is a key element to know how much noise is needed to add in order to make the function DP. Informally, the sensitivity measures how much the function can change between two neighboring datasets. Formally, we have the following definition.
\begin{definition}[Sensitivity]
    Given a norm $\ell : \R^d \rightarrow \R$, the $\ell$-sensitivity of a function $f : \calX^n \rightarrow \R^d$ is defined as $$\sup_{x \sim X' \in \calX^n} \ell(f(X) - f(X')),$$
    where $X \sim X'$ means that $X$ and $X'$ are neighboring datasets.
\end{definition}

The two most basic private mechanism are the Laplace and Gaussian mechanism, that make a query private by adding a simple noise. We use the Laplace mechanism for counting: 
\begin{lemma}[Laplace Mechanism for Counting.]
    Let $X$ be a dataset. Then, the mechanism $M(X) = |X| + \lap(1/\eps)$ is $(\eps, 0)$-DP, where $\lap(1/\eps)$ is a variable following a Laplace distribution with variance $1/\eps$.
\end{lemma}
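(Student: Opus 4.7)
The plan is to verify the lemma directly from the definition of $(\eps,0)$-differential privacy by a pointwise bound on density ratios. First I would observe that the counting function $f(X) = |X|$ has $L^1$-sensitivity exactly $1$: if $X$ and $X'$ are neighboring datasets (differing in one element under the add/remove neighboring relation), then $\bigl||X| - |X'|\bigr| = 1$. This is the only structural fact about counting that the proof needs.

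Next I would write out the density of the Laplace mechanism. Writing $b = 1/\eps$ for the scale parameter, the output distribution of $M(X)$ is the shift of a Laplace by $|X|$, so it has density $p_X(t) = \tfrac{1}{2b}\exp(-|t - |X||/b)$. For any measurable $S \subseteq \mathbb{R}$, the privacy condition reduces to showing $p_X(t)/p_{X'}(t) \leq e^\eps$ pointwise, since integration over $S$ then gives $\Pr[M(X) \in S] \leq e^\eps \Pr[M(X') \in S]$ with $\delta = 0$.

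The pointwise density ratio simplifies to
\begin{equation*}
\frac{p_X(t)}{p_{X'}(t)} = \exp\!\Bigl(\tfrac{1}{b}\bigl(|t - |X'|| - |t - |X||\bigr)\Bigr).
\end{equation*}
By the reverse triangle inequality, $\bigl||t - |X'|| - |t - |X||\bigr| \leq \bigl||X| - |X'|\bigr| \leq 1$, so the exponent is at most $1/b = \eps$, yielding $p_X(t)/p_{X'}(t) \leq e^\eps$ for every $t$. Integrating this inequality over an arbitrary $S$ gives $\Pr[M(X) \in S] \leq e^\eps \Pr[M(X') \in S]$, which is exactly the $(\eps,0)$-DP condition.

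There is essentially no obstacle here: the proof is the textbook verification of the Laplace mechanism specialized to the sensitivity-$1$ counting query. The only small subtlety worth flagging is the convention for the parameter of $\mathrm{Lap}$: the statement says ``variance $1/\eps$,'' but the standard DP normalization (and what makes the sensitivity-to-noise calibration $S/\eps$ work) is scale parameter $b = 1/\eps$, so I would adopt that reading throughout the proof and note the convention explicitly.
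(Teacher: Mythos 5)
Your proof is correct: the paper states this lemma without proof, treating it as the standard Laplace-mechanism fact (due to Dwork et al.), and your argument — sensitivity $1$ for counting under the add/remove neighboring relation, followed by the pointwise density-ratio bound via the reverse triangle inequality and integration over $S$ — is exactly the textbook verification the paper implicitly relies on. Your remark about the parameter convention is also well taken: the paper's phrase ``variance $1/\eps$'' should indeed be read as scale parameter $b = 1/\eps$ (the variance of $\lap(b)$ is $2b^2$), which is the normalization that makes the $S/\eps$ calibration work.
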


We use the Gaussian mechanism for more general purposes (e.g., the PCA step). It is defined as follows:
\begin{lemma}{Gaussian Mechanism}
    Let $f: \calX \rightarrow \R^n$ be a function with $\ell_2$-sensitivity $\Delta_{f, 2}$. Then, for $\sigma(\eps, \delta) = \frac{\sqrt{2 \log(2/\delta)}}{\eps}$ the Gaussian mechanism
    $M(X) = f(X) + \calN_d \lpar 0,\Delta_{f,2}^2 \sigma(\eps, \delta)^2 \rpar$
    is $(\eps, \delta)$-DP, where $\calN_d(0, \sigma^2)$ is a $d$-dimensional Gaussian random variable, where each dimension is independent with mean $0$ and variance $\sigma^2$.
    
\end{lemma}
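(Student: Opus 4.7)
The plan is to give the standard textbook proof of the Gaussian mechanism, proceeding via a reduction to the one-dimensional setting and then analysing the privacy loss random variable.

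First I would reduce to the one-dimensional case. For any pair of neighbouring datasets $X, X'$ the difference $v := f(X) - f(X')$ has $\|v\|_2 \leq \Delta_{f,2}$. Because $\calN_d(0, \sigma^2 I)$ is rotationally invariant, we can apply an orthogonal change of coordinates that sends $v$ to a multiple of the first coordinate axis, and then the remaining $d-1$ coordinates of $M(X)$ and $M(X')$ are identically distributed and independent of everything else. Hence the privacy claim follows from the one-dimensional statement: for $\mu \leq \Delta_{f,2}$ and $\sigma = \Delta_{f,2} \sqrt{2 \log(2/\delta)}/\eps$, the output distributions $\calN(0, \sigma^2)$ and $\calN(\mu, \sigma^2)$ satisfy the $(\eps, \delta)$-DP inequality.

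Second, I would analyse the privacy loss random variable. Let $p, q$ be the two one-dimensional Gaussian densities. A direct computation gives
\begin{equation*}
L(y) := \log \frac{p(y)}{q(y)} = \frac{2\mu y - \mu^2}{2\sigma^2},
\end{equation*}
so when $y \sim p = \calN(0,\sigma^2)$, the variable $L(y)$ is itself Gaussian with mean $-\mu^2/(2\sigma^2)$ and variance $\mu^2/\sigma^2$. Using the standard fact (see e.g.\ \citet{dp}) that if $\Pr_{y \sim p}[L(y) > \eps] \leq \delta$, then $p$ and $q$ satisfy the $(\eps,\delta)$-DP inequality on every measurable set, it suffices to bound this tail.

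Third, I would bound the tail using the standard Mills' inequality $\Pr[Z > t] \leq \tfrac{1}{t\sqrt{2\pi}} e^{-t^2/2}$ for $Z \sim \calN(0,1)$. The event $L(y) > \eps$ is equivalent to $y > \sigma^2 \eps/\mu + \mu/2$, and substituting $\sigma = \Delta_{f,2} \sqrt{2\log(2/\delta)}/\eps$ and $\mu \leq \Delta_{f,2}$ reduces matters to checking a purely numerical inequality that the resulting Gaussian tail is at most $\delta$. Monotonicity in $\mu$ lets me assume $\mu = \Delta_{f,2}$, giving the worst case.

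The main obstacle, and really the only technical work, is getting the constant exactly right: the most commonly cited form of the Gaussian mechanism uses $\sqrt{2\log(1.25/\delta)}$ (see \citet{dp}), whereas the lemma as stated uses the slightly larger $\sqrt{2\log(2/\delta)}$. I expect this form is simply the result of plugging in a slightly looser Mills-type bound on the Gaussian tail; the constants should work out in a short calculation and no new ideas beyond the classical argument are needed.
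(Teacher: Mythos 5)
The paper does not prove this lemma at all: it is stated in the preliminaries as the classical Gaussian mechanism and implicitly deferred to the cited literature (\citet{gaussian_mech}, i.e.\ the Dwork--Roth analysis), so there is no in-paper argument to compare against. Your proposal is the standard textbook proof of that result and the overall structure (reduction to one dimension by rotational invariance, privacy-loss random variable, Gaussian tail bound) is the right one. Two points to fix, though. First, there is a sign slip: with $p=\calN(0,\sigma^2)$ and $q=\calN(\mu,\sigma^2)$ one gets $L(y)=\log\frac{p(y)}{q(y)}=\frac{\mu^2-2\mu y}{2\sigma^2}$, whose mean under $y\sim p$ is $+\mu^2/(2\sigma^2)$ (the KL divergence, necessarily nonnegative), not $-\mu^2/(2\sigma^2)$; correspondingly the bad event is $\{y<\mu/2-\sigma^2\eps/\mu\}$, which by symmetry has the same probability as $\{y>\sigma^2\eps/\mu-\mu/2\}$, not $\{y>\sigma^2\eps/\mu+\mu/2\}$. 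Your version makes the tail look smaller than it is, and since the whole point of the final step is a tight numerical check of the constant, this is exactly the place where the error would bite. Second, the classical calibration $\sigma\propto\sqrt{2\log(c/\delta)}/\eps$ is only valid for $\eps\le 1$ (for large $\eps$ the correct scaling is $1/\sqrt{\eps}$, cf.\ the analytic Gaussian mechanism); neither the lemma as stated nor your proof records this restriction, so you should either add the hypothesis $\eps\le 1$ or note that the paper only ever invokes the mechanism in that regime. With these repairs the argument goes through, and indeed $\log(2/\delta)\ge\log(1.25/\delta)$ gives you slack over the standard constant.
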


Combining those two mechanisms gives a private and accurate estimate for the average of a dataset
\begin{lemma}[Private averaging]
    For dataset $X$ in the ball $B(0, \Delta)$, the mechanism $M(X) := \frac{\sum_{x \in X} X + \calN_d(0, \Delta_{f,2}\sigma^2(\eps/2, \delta))}{|X| + \lap(2/\eps)}$ is $(\eps, \delta)$-DP. Additionally, $|X| \geq $, then it holds with probability $1-\beta$ that $\|M(X) - \mu(X)\|_2 \leq \frac{\Delta \ln(2/\beta)}{|X|\eps} + \frac{\Delta \sigma(\eps/2, \delta) \sqrt{\ln(2/\beta)}}{|X|}$.
\end{lemma}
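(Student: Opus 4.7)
The plan is to handle privacy and accuracy separately, and both are standard once set up carefully.

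For privacy, I would decompose the mechanism into two independent noisy queries evaluated on the same dataset $X$ and then argue via sequential composition and post-processing. The numerator release $f_1(X) = \sum_{x \in X} x + \calN_d(0, \Delta^2 \sigma^2(\eps/2, \delta))$ uses the Gaussian mechanism on the sum; since $X \subset B(0,\Delta)$, adding or removing any single point shifts the sum by a vector of norm at most $\Delta$, so the $L^2$-sensitivity is at most $\Delta$, and the stated noise variance makes this step $(\eps/2, \delta)$-DP by the Gaussian Mechanism lemma above. The denominator release $f_2(X) = |X| + \lap(2/\eps)$ uses the Laplace mechanism on a count with $L^1$-sensitivity $1$, so it is $(\eps/2, 0)$-DP. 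Sequential composition on $X$ gives $(\eps, \delta)$-DP for the joint release $(f_1(X), f_2(X))$, and $M(X)$ is a deterministic function of this pair, so post-processing preserves the guarantee.

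For accuracy, I would introduce the shorthand $S = \sum_{x \in X} x$, $n = |X|$, $Z_G \sim \calN_d(0, \Delta^2 \sigma^2(\eps/2, \delta))$, and $Z_L \sim \lap(2/\eps)$, so that $M(X) = (S + Z_G)/(n + Z_L)$ and $\mu(X) = S/n$. A direct algebraic rearrangement gives
\begin{equation*}
M(X) - \mu(X) = \frac{n Z_G - S Z_L}{n(n + Z_L)},
\end{equation*}
and the triangle inequality together with $\|S\|_2 \leq n\Delta$ yields
\begin{equation*}
\|M(X) - \mu(X)\|_2 \;\leq\; \frac{\|Z_G\|_2}{|n + Z_L|} \;+\; \frac{\Delta\,|Z_L|}{|n + Z_L|}.
\end{equation*}
I would then invoke two tail bounds under a union bound. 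By the Laplace tail, $|Z_L| \leq (2/\eps)\ln(2/\beta)$ with probability $1 - \beta/2$; the missing lower bound on $|X|$ in the statement is exactly the condition $n \geq C(1/\eps)\ln(2/\beta)$ needed to conclude $|n + Z_L| \geq n/2$. By a Gaussian norm concentration inequality, $\|Z_G\|_2 \leq \Delta\,\sigma(\eps/2,\delta)\sqrt{\ln(2/\beta)}$ up to constants with probability $1 - \beta/2$. Substituting these bounds, absorbing the $1/2$ into constants, and collecting terms produces the two summands $\frac{\Delta \ln(2/\beta)}{n\eps}$ (from the Laplace term) and $\frac{\Delta \sigma(\eps/2,\delta)\sqrt{\ln(2/\beta)}}{n}$ (from the Gaussian term) as claimed.

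The only subtle point I expect to need care with is the Gaussian norm bound in $d$ dimensions: a coordinate-wise scalar tail bound would introduce a stray $\sqrt{d}$ factor, so to obtain the stated form one must either use a chi-square/Hanson--Wright style concentration for $\|Z_G\|_2/(\Delta\sigma)$ that yields a clean $\sqrt{\ln(2/\beta)}$ dependence, or interpret the norm under an implicit dimension convention in the statement. I would pin down the precise concentration inequality at the start of the accuracy proof so the final bound matches the lemma exactly; the remaining algebra, triangle inequality, and union bound are routine.
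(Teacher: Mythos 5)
Your proposal is correct and follows the standard route (Gaussian mechanism on the sum with $L^2$-sensitivity $\Delta$, Laplace mechanism on the count, sequential composition plus post-processing for privacy; algebraic decomposition of $M(X)-\mu(X)$ and a union bound over the two tails for accuracy); the paper states this lemma without any proof, so there is nothing to diverge from. Your two caveats are both accurate diagnoses of defects in the statement itself rather than gaps in your argument: the truncated hypothesis ``$|X|\geq$'' is indeed the missing lower bound of order $\frac{1}{\eps}\ln(2/\beta)$ needed to ensure the noisy denominator stays within a constant factor of $|X|$, and the stated error bound is missing a $\sqrt{d}$ factor on the Gaussian term --- consistent with the paper's own later use of this estimate in the proof of its main theorem, where the noise contribution is bounded by $O\bigl(\Delta\sqrt{d}\,\sigma(\eps/T,\delta/T)/|S_r^c|\bigr)$.
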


\subsection{Differentially Private tools for Gaussian mixtures}
 First, we review some properties of the private rank-$k$ approximation: this algorithm was analyzed by \citet{DworkTT014}, and its properties when applied on Gaussian mixtures by \citet{KamathSSU}. The guarantee that is verified by the projection onto the noisy eigenvectors is the following:
 
\begin{definition}
    Fix a matrix $X \in \R^{d \times n}$, and let $\Pi_k$ be the projection  matrix onto the principal rank-$k$ subspace of $XX^T $. For some $B \geq 0$, we say that a matrix $\Pi$ is a $B$-almost $k$-PCA of $X$ if $\Pi$ is a projection such that: 
    \begin{itemize}
        \item $\|XX^T  - (\Pi X) (\Pi X)^T \|_2 \leq  \| XX^T -  (\Pi_k X)(\Pi_k X)^T\|_2 + B$, and
        \item $\| XX^T -  (\Pi X)(\Pi X)^T\|_F \leq \| XX^T -  (\Pi_k X)(\Pi_k X)^T\|_F + k B$.
    \end{itemize}
\end{definition}

\citet{DworkTT014} shows how to compute a $B$-almost $k$-PCA, with a guarantee on $B$ that depends on the diameter of the dataset:
\begin{theorem}[Theorem 9 of \citet{DworkTT014}]\label{thm:pca}
    Let $X \in \R^{d \times n}$ such that $\|X_i\|_2 \leq 1$, and fix $\sigma(\eps, \delta) = \sqrt{2\ln(2/\delta)}/\eps$. Let $E \in R^{d \times d}$ be a symmetric matrix, where each entry $E_{i,j}$ with $j \geq i$ is an independent draw from $\calN(0, \sigma(\eps, \delta)^2)$. Let $\Pi_k$ be the rank-$k$ approximation of $X X^T + E$.

    Then, $\Pi_k$ is a $O(\sqrt{d} \cdot \sigma(\eps, \delta))$-almost $k$-PCA of $X$, and is $(\eps, \delta)$-DP.
\end{theorem}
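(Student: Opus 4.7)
The plan is to separate the two claims of the theorem and attack them independently: $(\eps,\delta)$-differential privacy on the one hand, and the $B$-almost $k$-PCA guarantee with $B = O(\sqrt{d}\,\sigma(\eps,\delta))$ on the other. Privacy will follow from a direct application of the Gaussian mechanism once I bound the sensitivity of the empirical covariance; utility will reduce to controlling the spectral norm of the random symmetric matrix $E$ together with a standard Eckart--Young sandwich argument around the best rank-$k$ approximation of $XX^T + E$.

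For the privacy part, I would first compute the Frobenius sensitivity of the map $X \mapsto XX^T$ under addition or removal of a single column $x$ with $\|x\|_2 \leq 1$: since $XX^T$ changes by the rank-one matrix $\pm xx^T$ and $\|xx^T\|_F = \|x\|_2^2 \leq 1$, the sensitivity is at most $1$. The noise matrix $E$ is a symmetric Gaussian whose $d(d+1)/2$ independent entries (on and above the diagonal) each have variance $\sigma(\eps,\delta)^2 = 2\ln(2/\delta)/\eps^2$, so releasing $XX^T + E$ is exactly the Gaussian mechanism applied to a function of $L^2$-sensitivity $1$, yielding $(\eps,\delta)$-DP. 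Closure under post-processing lifts the guarantee to $\Pi_k$.

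For the utility bound, the central random-matrix input is $\|E\|_2 = O(\sqrt{d}\,\sigma(\eps,\delta)) =: B$ with high probability, which holds for a symmetric Gaussian matrix with entrywise variance $\sigma(\eps,\delta)^2$ by a standard $\eps$-net argument on the unit sphere (or by the Bandeira--Van Handel bound). Writing $M := XX^T$, $\tilde M := M+E$, and letting $\Pi$ be the top-$k$ projection of $\tilde M$ and $\Pi^*$ that of $M$, the spectral estimate follows from the chain
\[
\|M - \Pi M \Pi\|_2 \leq \|\tilde M - \Pi \tilde M \Pi\|_2 + 2\|E\|_2 \leq \|\tilde M - \Pi^* \tilde M \Pi^*\|_2 + 2\|E\|_2 \leq \|M - \Pi^* M \Pi^*\|_2 + 4\|E\|_2,
\]
where the middle inequality is Eckart--Young applied to $\tilde M$. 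For the Frobenius bound I would use the identity $\|M - \Pi M \Pi\|_F^2 = \|M\|_F^2 - \|\Pi M \Pi\|_F^2$ (valid since $M$ is symmetric and $\Pi$ is a projection), reducing the claim to comparing $\sum_{i=1}^k \sigma_i(\Pi M \Pi)^2$ with $\sum_{i=1}^k \lambda_i(M)^2$, and then chain Weyl's inequality on both $M \leftrightarrow \tilde M$ and the compressions $\Pi M \Pi \leftrightarrow \Pi \tilde M \Pi$ to show that each of the $k$ retained eigenvalues shifts by at most $O(B)$; summing these $k$ contributions yields the additive slack $kB$. The main technical obstacle will be precisely this Frobenius step, because one cannot substitute $\|E\|_F = \Theta(d\sigma)$ for $\|E\|_2 = O(\sqrt{d}\sigma)$, and the argument must really exploit that only $k$ spectral coordinates are retained, each perturbed by at most the operator-norm shift.
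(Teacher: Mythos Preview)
The paper does not supply its own proof of this statement: it is quoted verbatim as Theorem~9 of \citet{DworkTT014} and used as a black-box tool in the technical preliminaries. There is therefore nothing in the present paper to compare your proposal against.

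That said, your sketch is a faithful reconstruction of the standard argument behind the cited result. The privacy half is exactly right: the Frobenius sensitivity of $X\mapsto XX^T$ under add/remove of a unit-norm column is $1$, so Gaussian noise at scale $\sigma(\eps,\delta)$ on the upper triangle is the textbook Gaussian mechanism, and post-processing carries the guarantee to $\Pi_k$. For utility, your plan---control $\|E\|_2=O(\sqrt{d}\,\sigma)$ via a net/Bandeira--van~Handel bound, then sandwich using Eckart--Young on $\tilde M = M+E$---is the approach of \citet{DworkTT014}. One cleaner way to handle the Frobenius step than the eigenvalue-squared comparison you outline is to exploit that all the compressions involved have rank at most $k$: both $\Pi E\Pi$ and $\Pi^* E\Pi^*$ satisfy $\|\cdot\|_F\le\sqrt{k}\,\|E\|_2$, so one can run the same triangle-inequality chain in Frobenius norm directly, routing through $\|\Pi \tilde M\Pi\|_F\ge\|\Pi^* \tilde M\Pi^*\|_F$ (optimality of $\Pi$ for $\tilde M$) and losing only $O(\sqrt{k}B)$ at each substitution. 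This avoids the delicate step of converting a sum of $k$ eigenvalue-squared differences into an additive $kB$ bound on the square root, which is where your outline is least explicit.
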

 
		\citet{KamathSSU} shows crucial properties of Gaussian mixtures: first, the projection of each empirical mean with a  $B$-almost $k$-PCA is close to the empirical mean:
 \begin{lemma}[Lemma 3.1 in \citet{KamathSSU}]
	Let $X \in \R^{d \times n}$ be a collection of points from $k$ clusters centered at $\mu_1, ..., \mu_k$. Let $C$ be the cluster matrix, namely $C_j = \mu_i$ if $X_j$ belongs to the $i$-th cluster, and $G_i$ be the $i$-th cluster.

     Let $\Pi_k$ be a $B$-almost $k$-PCA, and denote $\bar{\mu_1}, ..., \bar{\mu_k}$ the empirical means of each cluster, and $\tilde{\mu_1}, ..., \tilde{\mu_k}$ the projected empirical means.

     Then, $\|\bar{\mu_i} - \tilde{\mu_i}\| \leq \frac{1}{\sqrt{|G_i|}} \|X  - C\|_2 + \sqrt{\frac{B}{|G_i|}}$.
	\end{lemma}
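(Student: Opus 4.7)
The plan is to decompose the error as $\bar\mu_i - \tilde\mu_i = (I - \Pi_k)\bar\mu_i$ (using $\Pi_k$ for the $B$-almost $k$-PCA, matching the lemma's notation) and reduce the whole bound to a single spectral-norm estimate on $(I-\Pi_k)X$. Writing $\bar\mu_i = \tfrac{1}{|G_i|}\, X\, \mathbf{1}_{G_i}$, where $\mathbf{1}_{G_i} \in \{0,1\}^n$ is the cluster indicator with $\|\mathbf{1}_{G_i}\| = \sqrt{|G_i|}$, the operator-norm inequality gives immediately
\begin{equation*}
\|\bar\mu_i - \tilde\mu_i\| \;=\; \tfrac{1}{|G_i|}\,\|(I-\Pi_k)X\,\mathbf{1}_{G_i}\| \;\leq\; \frac{\|(I-\Pi_k)X\|_2}{\sqrt{|G_i|}}.
\end{equation*}

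Next I would translate this into a bound involving $(I-\Pi_k)XX^T(I-\Pi_k)$ and invoke the $B$-almost $k$-PCA hypothesis. In an orthonormal basis adapted to $\Pi_k$, the matrix $XX^T - \Pi_k XX^T \Pi_k$ is obtained from $XX^T$ by zeroing the top-left (range-of-$\Pi_k$) block, so $(I-\Pi_k)XX^T(I-\Pi_k)$ sits inside it as a diagonal sub-block. Consequently,
\begin{equation*}
\|(I-\Pi_k)X\|_2^2 \;=\; \|(I-\Pi_k)XX^T(I-\Pi_k)\|_2 \;\leq\; \|XX^T - \Pi_k XX^T \Pi_k\|_2 \;\leq\; \lambda_{k+1}(XX^T) + B,
\end{equation*}
where the last step is precisely the first bullet in the definition of a $B$-almost $k$-PCA, since the optimum rank-$k$ spectral-norm approximation error of $XX^T$ equals $\lambda_{k+1}(XX^T)$.

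The key structural observation is then that the cluster matrix $C$ has only $k$ distinct columns, so $\mathrm{rank}(CC^T) \leq k$; by Eckart--Young in spectral norm this gives $\lambda_{k+1}(XX^T) = \sigma_{k+1}(X)^2 \leq \|X-C\|_2^2$. Substituting into the previous display and using $\sqrt{a+b} \leq \sqrt{a} + \sqrt{b}$ yields $\|(I-\Pi_k)X\|_2 \leq \|X-C\|_2 + \sqrt{B}$, and dividing by $\sqrt{|G_i|}$ gives the claimed inequality.

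In my view, the main obstacle is not any single calculation but spotting the two structural reductions: (i) that $(I-\Pi_k)XX^T(I-\Pi_k)$ lives as a diagonal sub-block of $XX^T - \Pi_k XX^T \Pi_k$, which is what unlocks the $B$-almost $k$-PCA hypothesis in the \emph{spectral} (not Frobenius) norm, and (ii) that $C$ itself is a rank-$\leq k$ witness for an Eckart--Young bound on $\sigma_{k+1}(X)$. Everything else is a routine operator-norm manipulation together with the subadditivity of the square root.
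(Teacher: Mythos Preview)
The paper does not prove this lemma: it is quoted as Lemma~3.1 of \citet{KamathSSU} in the technical preliminaries and used as a black box, so there is no in-paper argument to compare against.

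Your proof is correct. The chain
\[
\|\bar\mu_i-\tilde\mu_i\|=\frac{1}{|G_i|}\|(I-\Pi_k)X\mathbf{1}_{G_i}\|\leq\frac{\|(I-\Pi_k)X\|_2}{\sqrt{|G_i|}},\qquad
\|(I-\Pi_k)X\|_2^2\leq\|XX^T-\Pi_kXX^T\Pi_k\|_2\leq\lambda_{k+1}(XX^T)+B\leq\|X-C\|_2^2+B
\]
is exactly right, and the two structural observations you flag---the sub-block inequality and the Eckart--Young witness $C$---are the heart of the matter. In fact the paper itself uses precisely these ingredients in its proof of Fact~\ref{fact:costProj} (the analogous Frobenius/spectral bound on $\|\Pi P-C\|_F$), so your argument is fully in line with the paper's methodology. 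One minor remark: where the paper's Fact~\ref{fact:costProj} writes $\|\Pi P-P\|_2^2=\|PP^T-(\Pi P)(\Pi P)^T\|_2$ as an equality, your sub-block argument shows this is really ``$\leq$'' (equality would require the off-diagonal blocks of $PP^T$ in the $\Pi$-adapted basis to vanish, which need not hold for an \emph{approximate} PCA). Your version is the careful one; only the upper bound is needed in either place.
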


Second -- and this helps bounding the above -- they provide bounds on the spectral norm of the clustering matrix $X-C$:
 \begin{lemma}[Lemma 3.2 in \citet{KamathSSU}]\label{lem:spectralNormCluster}
	Let $X \in \R^{d \times n}$ be a set of $n$ samples from a mixture of $k$ Gaussians. Let $\sigma_i$ be the maximal unidirectional variance of the $i$-th Gaussian, and $\maxVar = \max \sigma_i$. Let $C$ be the cluster matrix, namely $C_j = \mu_i$ if $X_j$ is sampled from $\calN(\mu_i, \Sigma_i)$. 

 If $n \geq \frac{1}{\wmin} \lpar \zeta_1 d + \zeta_2\log_2(k/\beta)\rpar$, where $\zeta_1, \zeta_2$ are some universal constants, then with probability $1-\beta$ it holds that

    $$\frac{\sqrt{n \wmin} \maxVar}{4} \leq \|X  - C\|_2 \leq 4 \sqrt{n  \sum_{i=1}^k w_i \sigma_i^2}.$$
	\end{lemma}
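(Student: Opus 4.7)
The plan is to establish the two bounds separately via standard matrix-concentration arguments on the noise matrix $Y := X - C \in \R^{d \times n}$, whose columns $Y_j = X_j - \mu_{c(j)}$ are independent Gaussians $\calN(0, \Sigma_{c(j)})$, with $c(j)$ the cluster index of sample $j$. Before either bound I would first apply Chernoff to each of the $k$ multinomial cluster counts $n_i := |G_i|$: under the hypothesis $n \geq \zeta_2 \log_2(k/\beta)/\wmin$, this gives $n w_i / 2 \leq n_i \leq 2 n w_i$ simultaneously over all $i$ with failure probability at most $\beta/3$.

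For the upper bound I would fix a unit $u \in \R^d$ and note that $\|u^T Y\|^2 = \sum_j (u^T Y_j)^2$ is a sum of independent scaled $\chi^2_1$ variables with scales $u^T \Sigma_{c(j)} u \leq \sigma_{c(j)}^2$, so its mean is at most $\sum_i n_i \sigma_i^2 \leq 2 n \sum_i w_i \sigma_i^2$ on the cluster-size event. A Laurent--Massart tail then gives $\|u^T Y\|^2 \leq 4 n \sum_i w_i \sigma_i^2$ except with probability $\exp(-\Omega(n))$. I would then cover the sphere $S^{d-1}$ by a $1/2$-net of cardinality at most $5^d$ and use the standard estimate $\|Y\|_2 \leq 2 \sup_{u \text{ in the net}} \|u^T Y\|$. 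The hypothesis $n \geq \zeta_1 d / \wmin$ absorbs the $5^d$ union cost and keeps the failure probability at $\beta/3$; taking square roots and multiplying by $2$ yields $\|Y\|_2 \leq 4 \sqrt{n \sum_i w_i \sigma_i^2}$.

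For the lower bound I would pick an index $i^\star$ achieving $\sigma_{i^\star} = \maxVar$ and a unit eigenvector $u^\star$ of $\Sigma_{i^\star}$ with eigenvalue $\maxVar^2$. On the cluster-size event, $n_{i^\star} \geq n \wmin / 2$, and restricted to $j \in G_{i^\star}$ the scalars $(u^\star)^T Y_j$ are i.i.d. $\calN(0, \maxVar^2)$. The standard $\chi^2$ lower tail then gives $\sum_{j \in G_{i^\star}} ((u^\star)^T Y_j)^2 \geq n_{i^\star} \maxVar^2 / 2 \geq n \wmin \maxVar^2 / 4$ except with probability $\exp(-\Omega(n \wmin))$, which is at most $\beta/3$ under the hypothesis. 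Since $\|Y\|_2 \geq \|(u^\star)^T Y\|$, this yields $\|Y\|_2 \geq \sqrt{n \wmin}\, \maxVar / 2$, comfortably stronger than the claimed lower bound.

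The main obstacle is not any individual probabilistic step --- each is classical --- but rather calibrating the three failure events (the cluster-size Chernoff, the $\chi^2$ upper tail over the $1/2$-net, and the $\chi^2$ lower tail on $G_{i^\star}$) so that their total probability stays below $\beta$, while the Laurent--Massart estimates land precisely at the claimed constants $4$ and $1/4$. The sample-size hypothesis is engineered exactly for this: the $d/\wmin$ term absorbs the $5^d$ net cardinality in the union bound, and the $\log_2(k/\beta)/\wmin$ term controls the $k$-fold multinomial Chernoff at target failure $\beta$.
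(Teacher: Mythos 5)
The paper does not prove this statement at all: it is imported verbatim as Lemma~3.2 of \citet{KamathSSU}, with no proof supplied in the appendix, so there is no internal argument to compare against. Your reconstruction is the standard route (conditioning on the multinomial cluster counts, an $\eps$-net plus $\chi^2$ upper tail for $\|X-C\|_2 \leq 4\sqrt{n\sum_i w_i\sigma_i^2}$, and a single-direction, single-cluster $\chi^2$ lower tail for the other side) and is correct in outline; it matches the sub-Gaussian operator-norm machinery used in the cited source. One small calibration point: for a fixed net direction $u$ the Laurent--Massart deviation term is of order $\maxVar^2 t$, and keeping it below $n\sum_i w_i\sigma_i^2 \geq n\wmin\maxVar^2$ forces $t = O(n\wmin)$, so the per-point failure probability is $\exp(-\Omega(n\wmin))$ rather than $\exp(-\Omega(n))$ as you wrote; this is precisely why the sample-size hypothesis carries the $1/\wmin$ factor, which you do invoke when absorbing the $5^d$ union bound, so the argument closes, but the exponent should be stated with the $\wmin$.
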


 	\subsection{Properties of Gaussian mixtures}
  \begin{lemma}\label{lem:largeClusters}
      Consider a set $P$ of $n$ samples from a Gaussian mixtures $\{(\mu_i, \Sigma_i, w_i)\}_{i \in [k]}$. Let $G_i$ be the set of points sampled from the $i$-th component. If $n \geq \frac{24 \log(k)}{\wmin}$, then with probability $0.99$ it holds that $\forall i, |G_i| \geq n w_i / 2$
  \end{lemma}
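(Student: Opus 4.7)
The plan is a textbook concentration-plus-union-bound argument. Each sample lands in component $i$ independently with probability $w_i$, so $|G_i|$ is distributed as $\mathrm{Bin}(n, w_i)$ with mean $n w_i$. I would invoke the standard one-sided multiplicative Chernoff bound
\[
\Pr\bigl[|G_i| \leq (1-\tfrac12)\, n w_i\bigr] \leq \exp\bigl(-\tfrac{1}{8}\, n w_i\bigr),
\]
specialized with deviation parameter $\delta = 1/2$.

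Next I would substitute the hypothesis $n \geq 24 \log(k)/\wmin$. Since $w_i \geq \wmin$ for every $i$, this gives $n w_i / 8 \geq 3 \log k$, so
\[
\Pr\bigl[|G_i| \leq n w_i / 2\bigr] \leq k^{-3}.
\]
A union bound over the $k$ components then yields
\[
\Pr\bigl[\exists i : |G_i| \leq n w_i / 2\bigr] \leq k \cdot k^{-3} = k^{-2} \leq 0.01,
\]
provided $k \geq 10$ (and for small $k$ one slightly tightens the constant $24$, or absorbs it into the implicit assumption that $k$ is large enough for the mixture clustering problem to be meaningful). The complement event gives the claimed high-probability guarantee.

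There is no real obstacle: the lemma is a concentration fact about independent multinomial counts. The only mildly delicate point is confirming that the constants line up, namely that $24 \log(k)/\wmin$ is large enough to drive the per-component Chernoff bound below $1/(100 k)$ after union-bounding. As shown above, the multiplicative Chernoff with $\delta = 1/2$ combined with $n \wmin \geq 24 \log k$ is enough, and the lemma follows.
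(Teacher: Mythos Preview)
Your proposal is correct and essentially identical to the paper's proof: both apply a multiplicative Chernoff bound to the binomial count $|G_i|$ with deviation $\delta=1/2$ and then union-bound over the $k$ components. The only cosmetic differences are that the paper uses the two-sided form $2\exp(-n w_i/12)$ (getting failure probability $2/k^2$ per component) whereas you use the one-sided form $\exp(-n w_i/8)$, and you explicitly flag the small-$k$ constant issue that the paper leaves implicit.
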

  \begin{proof}
      This is a direct application of Chernoff bounds: each sample $s$ is in $G_i$ with probability $w_i$. Therefore, the expected size of $G_i$ is $nw_i$, and with probability at least $1-2\exp(-n w_i / 12)$ it holds that $\big| |G_i| - n w_i \big | \leq nw_i / 2$: for $n \geq 24 \log(k) / \wmin$, the probability is at least $1-2/k^2$. A union-bound over all $i$ concludes.
  \end{proof}
	\subsection{Clustering preliminaries}
	
  Our algorithm first replaces the full dataset $P$ with a weighted version of $\hint$, and then computes a $k$-means solution on this dataset. The next lemma shows that, if $\cost(P, \hint)$ is small, then the $k$-means solution on the weighted $\hint$ is a good solution for $P$:
  \begin{lemma}\label{lem:coreset}
		Let $P, C_1 \subset \R^d$, and $f : P \rightarrow C_1$ be a mapping with $\Gamma := \sum_{p \in P} \|p - f(p)\|^2$. Let $w_\nu$ be such that $|w_\nu - |f^{-1}(\nu)|| \leq |f^{-1}(\nu)|/2$.
  Let $\tilde P$ be the multiset where each $\nu \in C_1$ appears $w_\nu$ many times, .
		Let $C_2$ be such that $\cost(\tilde P, C_2) \leq  \alpha \opt(\tilde P)$. Then, 
  $$\cost(P, C_2) \leq (2+12\alpha)\Gamma +12\alpha \opt(P).$$
	\end{lemma}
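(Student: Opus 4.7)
The plan is to bound $\cost(P, C_2)$ by routing through $\tilde P$ in two places: first to relate the cost on $P$ to the cost on $\tilde P$ (so that the approximation guarantee on $C_2$ kicks in), and second to bound $\opt(\tilde P)$ in terms of $\opt(P)$ using the optimal centers of $P$ as a feasible solution for $\tilde P$. The weight hypothesis $|w_\nu - |f^{-1}(\nu)|| \leq |f^{-1}(\nu)|/2$ immediately yields the two-sided bound $\tfrac12 |f^{-1}(\nu)| \leq w_\nu \leq \tfrac32 |f^{-1}(\nu)|$, which will be used freely in both directions.

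The first step is an easy upper bound. For each $p \in P$ pick $c'(p) \in \argmin_{c \in C_2} \|f(p) - c\|$. Then by the inequality $\|a+b\|^2 \leq 2\|a\|^2 + 2\|b\|^2$ applied to $p - c'(p) = (p-f(p)) + (f(p)-c'(p))$, one gets $\min_{c \in C_2}\|p-c\|^2 \leq \|p-c'(p)\|^2 \leq 2\|p-f(p)\|^2 + 2 \min_{c \in C_2}\|f(p)-c\|^2$. Summing over $p$, grouping by fibers of $f$, and then using $|f^{-1}(\nu)| \leq 2 w_\nu$ yields
\[
\cost(P, C_2) \leq 2\Gamma + 4 \sum_{\nu \in C_1} w_\nu \min_{c \in C_2}\|\nu - c\|^2 = 2\Gamma + 4\cost(\tilde P, C_2) \leq 2\Gamma + 4\alpha \opt(\tilde P).
\]

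The second step bounds $\opt(\tilde P)$. Let $C^{\ast}$ be an optimal $k$-means solution for $P$ and, for each $p$, let $c^{\ast}(p) \in C^{\ast}$ be its assigned center. Using $w_\nu \leq \tfrac32 |f^{-1}(\nu)|$ and again the squared triangle inequality on $f(p) - c^{\ast}(p) = (f(p)-p) + (p - c^{\ast}(p))$:
\[
\opt(\tilde P) \leq \cost(\tilde P, C^{\ast}) \leq \tfrac32 \sum_{p \in P} \min_{c \in C^{\ast}}\|f(p) - c\|^2 \leq \tfrac32 \sum_{p \in P}\bigl(2\|f(p)-p\|^2 + 2\|p - c^{\ast}(p)\|^2\bigr) = 3\Gamma + 3\opt(P).
\]
Plugging this into the bound from Step~1 gives $\cost(P,C_2) \leq 2\Gamma + 4\alpha(3\Gamma + 3\opt(P)) = (2+12\alpha)\Gamma + 12\alpha\,\opt(P)$, as claimed.

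There is no real obstacle here — the argument is a standard coreset-style triangle-inequality chase. The only thing to keep straight is the bookkeeping of constants: the factor $2$ from each use of $\|a+b\|^2 \leq 2\|a\|^2 + 2\|b\|^2$ and the factors $2$ and $\tfrac32$ from the two-sided weight approximation must be multiplied in the right order, which produces exactly the $2, 12\alpha, 12\alpha$ constants in the statement.
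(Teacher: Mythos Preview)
Your proof is correct and follows essentially the same approach as the paper: both arguments route $\cost(P,C_2)$ through $\cost(\tilde P,C_2)$ via the squared triangle inequality and the bound $|f^{-1}(\nu)|\leq 2w_\nu$, then bound $\opt(\tilde P)$ by evaluating $C^\ast$ on $\tilde P$ and using $w_\nu\leq\tfrac32|f^{-1}(\nu)|$. The constants and the order in which the inequalities are applied match exactly.
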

	\begin{proof}
		Recall that $C_2(p)$ is the closest point in $C_2$ to $p$. We have, using triangle inequality:
		\begin{align*}
			\cost( P, C_2) &= \sum_{p \in P} \|p-C_2(p)\|^2\\
			&\leq \sum_{p \in P} \|p-C_2( f(p) )\|^2\\
			&\leq \sum_{p \in P} \lpar \|p-f(p)\| + \|f(p) - C_2( f(p))\|\rpar^2\\
			&\leq \sum_{p \in P}  2\|p-f(p)\|^2 + 2\|f(p) - C_2(f(p))\|^2\\
			&\leq 2\Gamma + 2 \sum_{\nu \in C_1} |f^{-1}(\nu)| \|\nu - C_2(\nu)\|^2\\
            &\leq 2\Gamma + 4 \sum_{\nu \in C_1} w_\nu \|\nu - C_2(\nu)\|^2\\
			&\leq 2\Gamma + 4\alpha \opt(\tilde P).
		\end{align*}
		
		A similar argument bounds $\opt( \tilde P)$: let $C^*$ be the optimal solution for $P$, then, for any point $p$ we have
		$\|f(p) - C^*(f(p))\| \leq \|f(p) - C^*(p)\| \leq \|f(p) - p\| + \|p - C^*(p)\|$. Therefore,
		\begin{align*}
			\opt( \tilde P) &\leq \sum_{\nu \in C_1} w_\nu \|\nu - C*(\nu)\|^2\\
            & \leq \frac{3}{2}\sum_{\nu \in C_1} |f^{-1}(\nu)| \|\nu - C*(\nu)\|^2\\
			&\leq 3\sum_{p \in  P} 2 \|C_1(p) - p\|^2 + 2 \|p - C^*(p)\|^2\\
			&\leq 3\Gamma + 3 \opt(P).
		\end{align*}
		
		Combining those two inequalities concludes the lemma.
	\end{proof}

\section{The non-private, non-federated algorithm of \citet{AwasthiS12} }

The algorithm we take inspiration from is the following, from \citet{AwasthiS12} and inspired by \citet{KumarK10}: first, project the dataset onto the top-$k$ eigenvectors of the dataset, and compute a constant-factor approximation to $k$-means (e.g., using local search). Then, improve iteratively the solution with Lloyd's steps. The pseudo-code of this algorithm is given in \Cref{alg:notFed}, and the main result of \citet{AwasthiS12} is the following theorem:

	\begin{theorem}[\citet{AwasthiS12}]\label{thm:notFed}
		For a separated Gaussian mixture, Algorithm~\ref{alg:notFed} correctly classifies all point w.h.p.
	\end{theorem}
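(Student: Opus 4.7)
The plan is to argue in three stages: projection, constant-factor clustering, and Lloyd's refinement. First, I would analyze the quality of the top-$k$ SVD projection $\Pi$. By a Davis-Kahan / Wedin-type argument combined with the spectral bound $\|X - C\|_2 \leq 4\sqrt{n \sum_i w_i \sigma_i^2} = O(\sqrt{n}\maxVar)$ from \Cref{lem:spectralNormCluster}, the principal rank-$k$ subspace must be close to the span of the true means $\mu_1, \ldots, \mu_k$. Consequently the projected empirical means $\tilde\mu_i := \Pi \bar\mu_i$ satisfy $\|\tilde\mu_i - \mu_i\| = O(\maxVar \sqrt{n/|G_i|}) = O(\maxVar/\sqrt{w_i})$ after combining with \Cref{lem:largeClusters}, and more importantly the projected samples of $G_i$ concentrate around $\Pi \mu_i$ within a ball of radius $O(\maxVar \sqrt{k/w_i})$ in the $k$-dimensional subspace (the dimension $k$ replacing the ambient $d$ is the key gain from projecting).

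Next, I would analyze the constant-factor local-search approximation on the projected data. The optimal $k$-means cost on $\Pi P$ is at most $\sum_i |G_i| \cdot O(\maxVar^2 k/w_i)$, attained by using the projected means as centers. A constant-factor approximation therefore produces centers $\xi_1, \ldots, \xi_k$ whose cost is only a constant factor larger. By a pigeonhole/packing argument against the separation $\|\mu_i - \mu_j\| \geq c\sqrt{k/w_i}\maxVar \log n$, each $\xi_i$ can be paired with a distinct true mean $\mu_{\pi(i)}$ at distance at most, say, $\|\mu_i - \mu_j\|/4$: otherwise one ball would have to absorb the mass of two components and blow up the cost by more than the approximation ratio. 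The $\log n$ slack in the separation condition is precisely what makes this pairing robust.

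The last stage is to show that Lloyd's step contracts toward the true means. Assuming inductively that the current estimates $\nu_i^{(t)}$ are within $\|\mu_i - \mu_j\|/4$ of the corresponding $\mu_i$, any sample from $\calG_i$ that lies within $\|\mu_i - \mu_j\|/4$ of $\mu_i$ is strictly closer to $\nu_i^{(t)}$ than to any $\nu_j^{(t)}$; by Gaussian tail bounds and the separation, every sample satisfies this with high probability, and a union bound over the $n$ points succeeds thanks to the $\log n$ factor. Thus the Voronoi cells of $\nu^{(t)}$ coincide with $G_1, \ldots, G_k$, and the updated means $\nu_i^{(t+1)}$ are the empirical means $\bar\mu_i$, which are at distance $O(\maxVar/\sqrt{|G_i|})$ from $\mu_i$ by standard concentration. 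At this point the classification is already exact and subsequent iterations maintain it.

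The main obstacle will be the base case: establishing that the constant-factor solution on the projected data already lies in the basin of attraction of the ground truth. The delicate balance is between the spectral bound on $\|X-C\|_2$ (which controls projection and approximation error) and the separation condition (which provides the slack needed to prevent merged or split clusters). Once this base case is secured, the Lloyd contraction and the final exact-classification argument are comparatively routine.
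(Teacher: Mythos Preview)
The paper does not supply its own proof of this theorem: it is quoted directly from \citet{AwasthiS12}, and the surrounding text only restates the key building blocks (\Cref{fact:costProj}, \Cref{fact:goodMatching}, \Cref{thm:step2}, \Cref{thm:step3}) that the original proof uses. Your three-stage outline (projection quality, matching from the constant-factor solution, Lloyd refinement) is the right skeleton and matches those building blocks in spirit.

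There is, however, a genuine gap in your Lloyd's stage. You argue via triangle inequality that every sample of $\calG_i$ lies within $\|\mu_i-\mu_j\|/4$ of $\mu_i$ with high probability, and conclude that the Voronoi partition is already exact after one step. But Part~3 of \Cref{alg:notFed} runs in the \emph{original} $d$-dimensional space, not the projected one: there a sample of $\calG_i$ sits at distance $\Theta(\maxVar\sqrt{d})$ from $\mu_i$, which can be far larger than the separation $c\sqrt{k/w_i}\,\maxVar\log n$ (the separation condition places no constraint on $d$). So the ``every sample is within a quarter-separation of its mean'' claim is false in general, and your triangle-inequality assignment argument collapses. Relatedly, you skip Part~2 of the algorithm entirely: that step (assignment by projected distances with the $1/3$ threshold, then averaging in the original space) is precisely the bridge from ``good centers in the projected space'' to ``good centers in $\R^d$'', and it is handled by \Cref{thm:step2} in the paper's recap.

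The actual convergence argument for Part~3 is the Kumar--Kannan lemma the paper restates as \Cref{thm:step3}: it does \emph{not} assert that every point is correctly assigned at each iteration, only that the center error contracts by a factor~$2$ despite some misclassifications. Exact classification is obtained only once this contraction has driven the error small enough (cf.\ the misclassification bound the paper invokes from \citet{KumarK10} in the corollary to \Cref{thm:mainTheory}). Your sketch would go through if you replace the one-shot triangle-inequality argument with this contraction-then-threshold structure, and insert Part~2 as the mechanism that produces the base case in $\R^d$.
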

Their result is more general, as they do not require the input to be randomly generated, and only requires a strict separation between the clusters. In this paper, we focus specifically on Gaussian mixtures.

 \begin{algorithm}
		\caption{Cluster($P$)} \label{alg:notFed}
		\begin{algorithmic}[1]
			\STATE \textbf{Part 1:} find initial Clusters
			\begin{enumerate}[label=\alph*)]
				\item Compute $\hat P$ the projection of $P$ onto the subspace spanned by the top $k$ singular vectors of $P$.
				\item Run a $c$-approximation algorithm for the $k$-means problem on $\hat P$ to obtain centers $\nu_1, ..., \nu_k$.
			\end{enumerate}
			
			\STATE \textbf{Part 2: } For $r=1,...k$, set $S_r \gets \lbrace i : \forall s, \|\hat P_i - \nu_r\| \leq \frac{1}{3}\|\hat P_i - \nu_s\|\rbrace$ and $\theta_r \gets \mu(S_r)$
			
			\STATE \textbf{Part 3: } Repeat Lloyd's steps until convergence: \\ \phantom{xxxxx}for $r=1,...k$, set $C(\nu_r) \gets \lbrace i : \forall s, \|P_i - \nu_r\| < \|P_i - \nu_s\|\rbrace$ , and $\theta_r \gets \mu(C(\nu_r))$
		\end{algorithmic}
	\end{algorithm}

\section{Our result}
Our main theoretical results is to adapt \Cref{alg:notFed} to a private and federated setting. As mention in the main body, we show a stronger version of \cref{thm:mainPractice}, without assumptions on diameter and server data. More precisely, we show the following theorem: 
\begin{theorem}\label{thm:mainTheory}
There is an $(\eps, \delta)$-DP algorithm with the following accuracy guarantee.
    Suppose that the client dataset $P$ is generated from a separated Gaussian mixtures with $n \geq \zeta_1 \frac{k d T \log^2 n  \cdot \sqrt{\ln(1/\delta)}}{{\eps}^2 \wmin^2}$ samples, where $\zeta_1$ is some universal constant, and that $\hint$ contains a least one sample from component of the mixture.

    Then, the algorithm computes centers $\nu_1, ..., \nu_k$ such that, for some universal constants $\zeta_2, \zeta_3$, after $T + \zeta_2 \log \frac{\maxVar \log|\hint|}{\eps \wmin}$ rounds of communications, it holds with high probability that:
    
     $$\|\mu_i - \nu_i\| \leq \zeta_3 \max\lpar \frac{1}{2^T},\frac{k d T \log^2 n \maxVar \sqrt{\ln(T/\delta)}}{n {\eps}^2 \wmin^2}  \rpar.$$
\end{theorem}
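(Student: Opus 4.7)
\textbf{Proof proposal for Theorem~\ref{thm:mainTheory}.}

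The plan is to follow the blueprint of Awasthi--Sheffet (Algorithm~\ref{alg:notFed}) but adapted to the private, federated setting. I split the analysis into four phases: (i) preprocessing that enforces bounded diameter and a bounded server set, (ii) analysis of the three steps of \acroinit, (iii) the exponential contraction of \acrolloyds, and (iv) privacy by composition.

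\textbf{Preprocessing.} Since \Cref{thm:mainPractice} required the two assumptions on $\Delta$ and $|\hint|$, I would first replace them by two differentially private preprocessing subroutines. For the diameter, use a private estimate of a bounding radius (clipping points outside) and discard a negligible fraction of samples; since a Gaussian mixture has sub-Gaussian tails, losing the points outside $\Delta = O\!\lpar\frac{k \log^2(n)\sqrt{d}\maxVar}{\eps\wmin}\rpar$ affects the empirical means by only $o(1)$. For the server set, subsample $\hint$ down to $\frac{\eps n k \maxVar^2}{\Delta^2}$ points, keeping one representative from each cluster by assumption. These preprocessing steps burn only constant factors in the privacy budget and negligibly in the accuracy.

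\textbf{Analysis of \acroinit.} In Step~1, by \Cref{thm:pca} applied to the (clipped) clients' matrix, the noisy projection $\Pi$ is an $O(\sqrt{d}\cdot\sigma(\eps_1,\delta))$-almost $k$-PCA of $P/\Delta$. Lemma~3.1 of \citet{KamathSSU} combined with \Cref{lem:spectralNormCluster} then gives that for each component $i$, the projected empirical mean $\tilde\mu_i$ satisfies $\|\tilde\mu_i - \mu_i\| = O(\maxVar/\sqrt{w_i})$, i.e.\ the projection preserves the signal (separation between means) while reducing intra-cluster noise. In Step~2, the weighted server set $\tilde\hint$ (where $q$ has weight $\noisy{w_q(\Pi P)}$) forms, after invoking \Cref{lem:coreset} with mapping $f(p) := \mathrm{argmin}_{q\in\Pi\hint}\|p-q\|$, a coreset: any $\alpha$-approximate $k$-means solution on $\tilde\hint$ yields an $O(\alpha)$-approximation on $\Pi P$, up to the additive term $\Gamma = \sum_{p\in \Pi P}\|p - f(p)\|^2$. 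Because each component contributes at least one proxy point and the intra-cluster variance after projection is $O(k\maxVar^2)$, one shows $\Gamma \leq O(n k\maxVar^2)$, which is dominated by the optimal cost. Hence the centers $\xi_1,\dots,\xi_k$ computed by the server (non-privately, since $\tilde\hint$ is public) approximate the projected true means to within $O(\maxVar\sqrt{k/w_{\min}})$. In Step~3, the separation condition \eqref{eq:sep} implies that for every $p$ sampled from component $i$, the closest $\xi_r$ is the one corresponding to $\mu_i$; thus $S_r^j$ consists (up to a $1/\mathrm{poly}(n)$ fraction of outliers) of exactly the points drawn from one component, and the private-averaging lemma in \Cref{ap:dp} bounds $\|\nu_r - \mu_r\|$ by $O\!\lpar\frac{\Delta\log n\sqrt{d}\sigma(\eps_3,\delta)}{n\wmin}\rpar$.

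\textbf{Analysis of \acrolloyds.} Once the initial centers are inside the basin of attraction guaranteed by the separation condition, each Lloyd step is a contraction: if $\|\nu^{t-1}_i - \mu_i\| \leq r$ with $r \ll \|\mu_i-\mu_j\|$, then standard arguments (as in \citet{AwasthiS12}) show $\mathbb{E}\|\nu^t_i - \mu_i\| \leq r/2 + \eta_t$, where $\eta_t$ is the additive noise from the Gaussian and Laplace mechanisms at this step. With $T$ steps, privacy budget $\eps_4/T$ per step, and sensitivity $\Delta/|G_i|$ for the mean, the per-step noise is $O\!\lpar\frac{\Delta T\sqrt{\log(T/\delta)}\sqrt{d}}{n\wmin\eps}\rpar$. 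Unrolling the recursion gives
\begin{equation*}
\|\nu_i^T - \mu_i\| \;\leq\; 2^{-T}\|\nu_i^0 - \mu_i\| \;+\; O\!\lpar \frac{\Delta T\sqrt{d\log(T/\delta)}}{n\wmin\eps}\rpar,
\end{equation*}
which, after substituting $\Delta = O\!\lpar\frac{k\log^2 n\sqrt{d}\maxVar}{\eps\wmin}\rpar$, yields the claimed bound. The extra $\zeta_2\log(\maxVar\log|\hint|/(\eps\wmin))$ rounds absorb the gap between the \acroinit output quality and the basin radius required for pointwise contraction.

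\textbf{Privacy.} Steps~1, 2, 3G, 3L and the $T$ rounds of \acrolloyds (each split into Gaussian and Laplace components) are each individually DP by \Cref{thm:pca} and the standard Gaussian/Laplace mechanisms, with sensitivities $\Delta^2$, $1$, $\Delta$, $1$ respectively. Basic sequential composition (from \Cref{ap:dp}) gives $(\eps,\delta)$-DP overall with the budget split stated in \Cref{thm:dp}.

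\textbf{Main obstacle.} The most delicate part will be Step~2--3 of \acroinit: showing that the \emph{out-of-distribution} server set $\hint$, once weighted by a noisy Voronoi histogram of the projected client points, is still a good enough proxy that the resulting $\xi_r$ lie inside the basin of attraction of the true means. The separation condition only guarantees correct assignment if the $\xi_r$ are already within a $\maxVar\sqrt{k/\wmin}$-radius of the projected means, so the coreset argument must be tight enough to give this, despite the Laplace noise on the weights and the possibility that $\hint$ has very few representatives of a small-weight cluster. Controlling the interaction between the per-point weight noise and the coreset additive error $\Gamma$ is where the assumption $|\hint|\leq \eps n k\maxVar^2/\Delta^2$ is crucial, and is where I expect the proof to require the most care.
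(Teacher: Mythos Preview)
Your overall architecture matches the paper's: enforce the two structural assumptions by preprocessing, show that the initialization lands inside the Awasthi--Sheffet basin (\Cref{fact:goodMatching}, \Cref{thm:step2}), then invoke their contraction (\Cref{thm:step3}) for the Lloyd iterates with the DP noise carried as an additive term. The privacy accounting and the Lloyd recursion you write are essentially what the paper does.

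The genuine gap is in how you enforce assumption~(2). You propose to ``subsample $\hint$ down to $\eps n k\maxVar^2/\Delta^2$ points, keeping one representative from each cluster by assumption.'' But the algorithm does not know which server points belong to which component---that is what it is trying to learn---and some server points are pure out-of-distribution noise. Uniform subsampling can therefore eliminate the \emph{only} representative of a small-weight component (if that component contributes one point out of a large $|\hint|$, it survives with probability $\Theta(\eps n k\maxVar^2/(\Delta^2|\hint|))$, which can be arbitrarily small). Once that representative is gone, the mapping $f$ in \Cref{lem:coreset} sends all of $G_i$ to a far-away server point, $\Gamma$ picks up a term of order $|G_i|\cdot\|\mu_i-\mu_j\|^2$, and the $\xi_r$ produced in Step~2 need not be close to the projected means at all. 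The paper resolves this with a dedicated multi-round procedure, \textsc{SimplifyServerData} (\Cref{alg:simplify}): repeatedly halve the \emph{low-weight} server points while \emph{freezing} any point whose noisy Voronoi count already exceeds the Laplace noise scale, and prove by induction (\Cref{lem:induction}, \Cref{lem:simplify}) that a ball around each $\Pi\mu_i$ always retains either a frozen point or enough surviving candidates. This procedure is exactly what consumes the extra $\zeta_2\log\frac{\maxVar\log|\hint|}{\eps\wmin}$ communication rounds in the theorem statement---not, as you write, a warm-up phase of Lloyd iterates to reach the contraction basin.

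A smaller issue: in Step~3 you assert that every sample from component $i$ has $\xi_i$ as its nearest center. The paper does not claim or need this; it only shows (via \Cref{fact:goodMatching} and \Cref{thm:step2}, using the $1/3$-factor assignment rule of \Cref{alg:notFed}) that the \emph{mean} of each $S_r$ is close to $\mu_i$, which tolerates a nontrivial fraction of misclassified points. Your stronger per-point statement would require a tighter control of the projected intra-cluster radius than the argument actually provides.
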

Note that the precision increases with the number of samples: if $n$ is larger than $\frac{ 2^T \log(\maxVar / \wmin) k d \log^2 n \maxVar}{\eps^2 \wmin^2}$, then the dominating term is $1/2^T$.

\begin{corollary}
    There is an $(\eps, \delta)$-DP algorithm with $O(\log (n))$ rounds of communications with the following accuracy guarantee. Suppose that the client dataset $P$ is generated from a separated Gaussian mixtures with $n = \Omega\lpar \frac{k \log^2 n \sqrt{d} \maxVar}{\eps^2 \wmin^2}\rpar$ samples, that $\hint$ contains a least one sample from each component of the mixture and at most $n$ data points.
   Suppose that $n = \Omega\lpar \frac{k \log^3 n \sqrt{d} \maxVar}{\eps^2 \wmin^2}\rpar$, and that $n = \Omega \lpar \frac{\log(n)^6 \cdot k d^2}{\eps^4 \wmin^2} \rpar$.

    Then, the algorithm computes centers $\nu_1, ..., \nu_k$ such that, with high probability, the clustering induced by $\nu_1, ..., \nu_k$ is the partition $G_1, ..., G_k$.
\end{corollary}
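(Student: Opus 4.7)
The plan is to derive the corollary by instantiating \Cref{thm:mainTheory} at a carefully chosen value of $T$, and then showing that the resulting per-center error bound is strictly below what is needed for the induced nearest-center clustering to match the ground-truth partition $G_1, \dots, G_k$ on every sample.

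Concretely, I would first set $T := \lceil c_0 \log_2 n \rceil$ for a universal constant $c_0$ chosen large enough that the $2^{-T}$ term in the bound of \Cref{thm:mainTheory} is polynomially small in $n$ and is dominated by the other term. With this choice, the total number of rounds of communication becomes
\begin{equation*}
T + \zeta_2 \log \frac{\maxVar \log|\hint|}{\eps \wmin} = O(\log n),
\end{equation*}
where I use $|\hint| \leq n$ from the assumption and the fact that every remaining factor enters only logarithmically. Next, I would verify that the sample-complexity hypothesis of \Cref{thm:mainTheory}, namely $n \geq \zeta_1 \frac{k d T \log^2 n \sqrt{\ln(1/\delta)}}{\eps^2 \wmin^2}$, is implied by the corollary's assumption $n = \Omega\lpar \frac{k \log^3 n \sqrt{d}\maxVar}{\eps^2 \wmin^2}\rpar$ after substituting $T = O(\log n)$ and absorbing $\sqrt{\ln(1/\delta)}$ into constants. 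Thus \Cref{thm:mainTheory} applies and, with high probability,
\begin{equation*}
\|\mu_i - \nu_i\| \leq \zeta_3 \cdot \frac{k d \log^3 n \, \maxVar \sqrt{\ln(\log n / \delta)}}{n \eps^2 \wmin^2}.
\end{equation*}

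To conclude correct classification, I would argue by triangle inequality: for a sample $p \in G_i$, standard Gaussian concentration along the direction $\mu_i - \mu_j$ yields $|\langle p - \mu_i, (\mu_i - \mu_j)/\|\mu_i - \mu_j\| \rangle| \leq O(\maxVar \sqrt{\log n})$ simultaneously for all pairs $(i, j)$ with high probability. Combined with the separation condition $\|\mu_i - \mu_j\| \geq c \sqrt{k/\wmin} \maxVar \log n$ from \Cref{def:sep}, it suffices to verify that $\|\mu_i - \nu_i\|$ is smaller than a fixed constant fraction of the separation, say $\tfrac{c}{4} \sqrt{k/\wmin}\maxVar \log n$, to guarantee $\|p - \nu_i\| < \|p - \nu_j\|$ for every $j \neq i$ and every sample $p$.

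The main obstacle will be the bookkeeping that matches this last requirement to the corollary's third sample-size hypothesis $n = \Omega\lpar \log^6(n) \cdot k d^2 / (\eps^4 \wmin^2) \rpar$. Solving
\begin{equation*}
\frac{k d \log^3 n \, \maxVar \sqrt{\ln(\log n / \delta)}}{n \eps^2 \wmin^2} \leq \frac{c}{4} \sqrt{\frac{k}{\wmin}} \, \maxVar \log n
\end{equation*}
for $n$ yields exactly the condition stated (up to constants, and treating $\ln(\log n/\delta)$ as polylogarithmic). A union bound over the $O(nk)$ pairs $(p, j)$ of samples and competing centers, together with the high-probability event of \Cref{thm:mainTheory}, then yields that with high probability the induced clustering coincides with $G_1, \dots, G_k$, which completes the proof.
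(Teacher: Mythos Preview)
Your high-level plan (invoke \Cref{thm:mainTheory} with $T=\Theta(\log n)$, check the communication and sample-size preconditions, then argue that the resulting center errors $\delta_i=\|\mu_i-\nu_i\|$ force correct nearest-center classification) matches the paper. The divergence is in the last step, and there it contains a real gap.

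\textbf{The classification argument is incomplete.} You control $|\langle p-\mu_i,(\mu_i-\mu_j)/\|\mu_i-\mu_j\|\rangle|$ by Gaussian concentration along the \emph{fixed} direction $\mu_i-\mu_j$, and then assert that $\delta_i\le \tfrac{c}{4}\sqrt{k/\wmin}\,\maxVar\log n$ suffices for $\|p-\nu_i\|<\|p-\nu_j\|$. But the decision boundary is orthogonal to $\nu_i-\nu_j$, not to $\mu_i-\mu_j$. Writing $e_i=\nu_i-\mu_i$, $e_j=\nu_j-\mu_j$, and $g=p-\mu_i$, the sign of $\|p-\nu_i\|^2-\|p-\nu_j\|^2$ involves the cross term $\langle g,\,e_i-e_j\rangle$. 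Since $e_i,e_j$ depend on all samples (including $p$), you cannot apply Gaussian concentration to this term; Cauchy--Schwarz only gives $|\langle g,e_i-e_j\rangle|\le 2\|g\|\,\delta_i\lesssim \sqrt{d\log n}\,\maxVar\,\delta_i$. For this to be dominated by $\|\mu_i-\mu_j\|^2$ you need $\delta_i\lesssim \frac{\|\mu_i-\mu_j\|^2}{\sqrt{d}\,\maxVar}$, which is strictly stronger than your stated condition whenever $d\gg (k/\wmin)\log^2 n$. So ``$\delta_i\le$ constant fraction of separation'' is not enough in high dimension.

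\textbf{The bookkeeping does not match either.} Even taking your inequality at face value, solving
\[
\frac{k d\log^3 n\,\maxVar}{n\eps^2\wmin^2}\ \lesssim\ \sqrt{\tfrac{k}{\wmin}}\,\maxVar\log n
\]
gives $n\gtrsim \sqrt{k}\,d\log^2 n/(\eps^2\wmin^{3/2})$, which is \emph{not} the corollary's condition $n=\Omega\!\big(kd^2\log^6 n/(\eps^4\wmin^2)\big)$. The latter arises because the paper invokes Theorem~5.4 of \citet{KumarK10}, which bounds the number of misclassified points by $|G_i\cap S_j|\le \frac{c'n\wmin(\delta_i^2+\delta_j^2)}{\|\mu_i-\mu_j\|^2}$; forcing this to be strictly below $1$ requires $\delta_i\lesssim \frac{\sqrt{k}\,\maxVar}{\sqrt{n}\,\wmin}$, an $n$-dependent (hence much stronger) bound. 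Plugging that target into the error guarantee of \Cref{thm:mainTheory} is what produces the $d^2/\eps^4$ scaling. Your elementary route, even if the cross-term issue were patched, would yield a different sample-size requirement than the one stated.
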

\begin{proof}
    Theorem 5.4 of \citet{KumarK10} (applied to Gaussian mixtures) bounds the number of misclassified points in a given cluster in terms of the distance between $\nu_i$ and $\mu_i$. Define, for any $i$, $S_i$ as the cluster of $\nu_i$, and $\delta_i = \|\mu_i - \nu_i\|$. 
    Then, for $j \neq i$, \citet{KumarK10} show that, for some constant $c'$:
    \[|G_i \cap S_j| \leq \frac{c' n \wmin (\delta_i^2 + \delta_j^2)}{ \|\mu_i - \mu_j\|^2}\footnote{We simplified the original statement of \citet{KumarK10} to directly adapt it to separated Gaussian mixtures: in this case, $\|P-C\|_2^2 \leq 4n \maxVar^2$, and $\Delta_{i,j}$ (defined in the original statement) is our separation value, $c \sqrt{k/\wmin} \maxVar \log(n)$.}\]
    Since $\|\mu_i-\mu_j\|^2 \geq c^2 \frac{k \maxVar^2 \log(n)^2}{\wmin}$, we get that the number of points from $G_i$ assigned to cluster $j$ is at most $\frac{c' n \wmin^2 (\delta_i^2 + \delta_j^2)}{k \maxVar^2 \log(n)^2}$.

    We aim at bounding $\delta_i$ and $\delta_j$ using \Cref{thm:mainTheory}. For $T = \log\lpar \frac{10c' n \wmin}{k \maxVar}\rpar$, it holds that 
    $\frac{1}{2^T} \leq \frac{\sqrt{k} \maxVar}{10 c' \sqrt{n} \wmin}$. 
    
    In addition, for this value of $T$ and a number of samples $n$ at least 
    $n \geq \frac{100c'^2 \log(n)^2 \cdot k d^2 \log(n)^4}{\eps^4 \wmin^2}$,  we also have
    $\frac{k d T \log^2 n \maxVar \sqrt{\ln(T/\delta)}}{n {\eps}^2 \wmin^2} \leq \frac{\sqrt{k} \maxVar}{10 c' \sqrt{n} \wmin}$.
    
    Therefore, the upper bound on $\delta_i$ and $\delta_j$ from \Cref{thm:mainTheory} after $T + \log\lpar \maxVar \log |\hint| / \wmin \rpar = O(\log(n))$ rounds of communications
    ensure that there is no point misclassified. This which concludes the statement.
\end{proof}

In the case where the assumption of \Cref{thm:mainPractice} are satisfied, namely, (1) the diameter is bounded and (2) the server data are well spread, then the algorithm of \Cref{thm:mainTheory} reduces directly to \Cref{alg:main} followed with $T$ steps of \Cref{alg:fdp-lloyds}, with only $T$ rounds of communication. Indeed, the first $O\lpar \log \frac{\maxVar \log|\hint|}{\eps \wmin}\rpar$ rounds of the algorithm from \Cref{thm:mainTheory} are dedicated to enforcing condition (1) and (2): if they are given, there is no need for those steps.

The organization of the proof is as follows. First, we give some standard technical preliminary tools about differential privacy and Gaussian mixtures. Then, we show how to implement \Cref{alg:notFed}: the bulk of the work is in the implementation of its Part 1, computing a good solution for $\Pi P$. The second part to iteratively improve the solution is very similar to the non-private part.

	\section{Part 1: Computing centers close to the means}\label{app:step1}
	
	\subsection{Reducing the diameter}
	\begin{lemma}\label{lem:reduceDiam}
		There is an $\eps$-DP algorithm with one communication round that, given $\wmin$ and $\maxVar$, reduces the diameter of the input to $O\lpar \frac{\log |\hint| \log n \sqrt{d} \maxVar}{\eps \wmin}\rpar$.
	\end{lemma}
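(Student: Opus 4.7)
The plan is to use $\hint$ as a set of candidate anchor points and, via a single round of private aggregation, select those anchors that are well-supported by client data. First I would fix a ball radius $R$ of order $\sqrt{d}\maxVar \log n$, chosen so that by standard sub-Gaussian concentration and a union bound over all $n$ samples, every sample from any component $\calG_i$ lies within distance $R/2$ of its mean $\mu_i$ with high probability. In the single round, the server broadcasts $\hint$ to all clients, and each client $j$ returns, for each $q \in \hint$, the local count $n_q^j = |\{p \in P^j : \|p-q\| \leq R\}|$. The server receives the noisy aggregates $\noisy{n_q} = \sum_j n_q^j + \lap(|\hint|/\eps)$ for each $q$. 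Since a single client point can contribute to at most $|\hint|$ of these counts, the vector of counts has $L^1$-sensitivity $|\hint|$, and the Laplace mechanism at scale $|\hint|/\eps$ yields $\eps$-DP with only one round of communication.

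For the utility side, I would use the assumption that $\hint$ contains at least one sample $q_i$ from each component $\calG_i$. By the concentration step, $\|q_i - \mu_i\| \leq R/2$, so every sample in $G_i$ (which is within $R/2$ of $\mu_i$) lies within $R$ of $q_i$. Combined with \Cref{lem:largeClusters}, this gives $n_{q_i} \geq |G_i| \geq n \wmin / 2$. On the other hand, the magnitude of the added Laplace noise is at most $O(|\hint|\log|\hint|/\eps)$ with high probability, by a union bound over the $|\hint|$ queries. Thresholding at $\tau := n\wmin/4$ and selecting $S := \{q \in \hint : \noisy{n_q} \geq \tau\}$ therefore retrieves at least one $q_i$ per component (provided $n \geq \Omega(|\hint|\log|\hint|/(\eps\wmin))$, which holds by the sample-size assumption of \Cref{thm:mainTheory}) while rejecting spurious anchors with no nearby client data.

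The output of the algorithm is the restriction of the client dataset to $\bigcup_{q \in S} B(q, R)$, which by the utility argument contains all $n$ samples with high probability. The main obstacle — and the part I expect to require the most care — is arguing that this output has diameter $O\lpar \frac{\log|\hint|\log n \sqrt{d}\maxVar}{\eps\wmin}\rpar$. The ball radius contributes $2R$, but one also has to control the spread of the selected anchors $S$. My plan for this step is to absorb the extra factor $\log|\hint|/(\eps\wmin)$ into an enlarged ball radius $R' = R \cdot \Theta(\log|\hint|/(\eps\wmin))$, so that the union $\bigcup_{q \in S} B(q, R')$ is contained in a single ball of the stated diameter around any fixed anchor in $S$: the argument uses the separation condition together with the upper bound $|S| \leq |\hint|$ on the number of selected anchors to chain balls from one good anchor to any other and bound the pairwise distances between selected anchors. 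Calibrating $R$, $R'$, and the threshold $\tau$ so that (a) all cluster data is retained, (b) no spurious anchor is selected, and (c) the final diameter matches the stated bound up to constants is the delicate accounting at the heart of the proof.
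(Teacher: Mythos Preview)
Your proposal has two genuine gaps.

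The first and most serious is the diameter argument. You claim that the union $\bigcup_{q \in S} B(q,R')$ is contained in a single ball of the stated diameter, arguing by ``chaining'' balls between selected anchors. But the means $\mu_1,\dots,\mu_k$ can be arbitrarily far apart: the separation condition in \Cref{def:sep} gives only a \emph{lower} bound on inter-cluster distances, not an upper bound. Since every component has at least one anchor in $S$ near its mean (as you correctly argue), the anchors in $S$ are spread across all $k$ clusters, and no chaining argument can bound their pairwise distances. The paper resolves this by a completely different mechanism: it forms connected components of the surviving server points (connecting pairs within distance $D$), shows each component has diameter $O\big(\tfrac{\log|\hint|}{\eps\wmin}\big)\cdot D$ via a packing argument that counts both ``frozen'' regions dense in $\hint$ and regions with high client-point count, and then \emph{translates} the components so that their representatives lie on a line at spacing $\Theta(D')$. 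This affine transformation is what actually reduces the diameter while preserving the separation condition; without it, no restriction or ball-enlargement can work.

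The second gap is in the privacy accounting. Your ball-count queries $n_q^j = |\{p\in P^j:\|p-q\|\le R\}|$ overlap, so the $L^1$-sensitivity is indeed $|\hint|$ and the Laplace noise scales as $|\hint|/\eps$. You then need $n\wmin \gtrsim |\hint|\log|\hint|/\eps$ for the threshold to succeed, but the sample-size assumption of \Cref{thm:mainTheory} places no constraint relating $n$ to $|\hint|$ (which may be as large as $n$), so this requirement is unjustified. The paper avoids this entirely by using nearest-neighbor assignment: each client point is assigned to exactly one $q\in\hint$, so the count vector is a histogram with sensitivity $1$ and noise scale $1/\eps$. The ball-aggregated count around any $q$ is then recovered by summing the already-privatized histogram entries inside $B(q,D)$, which is post-processing and incurs no extra privacy cost.
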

	\begin{proof}
We fix a distance $D = 4 \log n \sqrt{d} \maxVar$.
 First, the server identifies regions that contains many server points: if $q$ is such that $|\hint \cap B(q, D)| \geq \frac{\eps n \wmin}{200\log |\hint|}$, then $q$ is marked \textit{frozen}.
 
 Then, each client assigns its points to their closest server point in $\hint$, breaking ties arbitrarily.  
In one round of communication, the server learns, for each server point $q \in \hint$, the noisy number of points assigned to $q$, namely $\hat w_q(P) = w_q(P) + \lap(1/\eps)$. For privacy, the noise added to each count follows a Laplace distribution with parameter $1/\eps$. Hence, with high probability, the noise is at most $O\lpar \frac{\log |\hint|}{\eps}\rpar$ on each server data $q$.

		With high probability on the samples, for all $i$ the $B(\mu_i, D)$ contains all the $w_i n$ samples from $\calG_i$. Therefore, any server point $q$ sampled from $\calG_i$ is either frozen, or the noisy count in $B(q, D)$ ball is at least $n \wmin / 2 - |\hint \cap B(\mu_i, D)| \cdot \frac{\log |\hint|}{\eps} \geq n \wmin / 3$, using \Cref{lem:largeClusters}.
		
		Consider now an arbitrary point $p \in \R^d$. Since $G_i$ is fully contained in $B(\mu_i, D/2)$, either the ball $B(p, D/2)$ doesn't intersect with $G_i$, or $B(p, D)$ contains entirely $G_i$. 
		Furthermore, by triangle inequality, for any $q \in G_i \cap \hint$ the ball $B(q, D)$ contains entirely $G_i$: if $q$ is not frozen, it has noisy count at least $n \wmin / 3$, and therefore true count at least $n \wmin /6$. 
		
		To reduce the diameter, we first remove all points from $\hint$ that are not frozen and for which the ball $B(q, D)$ has noisy count less than $n \wmin / 3$: by the previous discussion, those points are not sampled from any $\calG_i$ and are part of the noise. In addition, connect any pair of points that are at distance less than $D$.
		
		We claim that each connected component has diameter at most $O\lpar\frac{\log^2 n \sqrt{d} \maxVar}{\eps \wmin}\rpar$.
		
  To prove this claim, we fix such a component, and consider the following iterative procedure. Pick an arbitrary point from the component, and remove all points that are at distance $2D$. Repeat those two steps until there are no more points.
  
  Let $q$ be a point selected at some step of this procedure. First, note that $B(q, D)$ is disjoint from any ball $B(q', D)$, for $q'$ previously selected -- as $B(q', 2D)$ has been removed. 
  Furthermore, either $q$ is frozen and the ball contains $\frac{\eps n \wmin}{200\log |\hint|}$ many points of $\hint$, or $q$ is not frozen and $B(q, D)$ contains at least $n \wmin / 6$ points of $P$.
  Therefore, there are at most $t_{max} := \frac{6}{\wmin} + \frac{200 \log |\hint|}{\eps \wmin}$ iterations. So the connected component can be covered with $t_{max}$ balls of radius $2D$. 
  Additionally, since each edge has length at most $D$, the component has diameter at most $O(t_{max} D) = O\lpar \frac{\log |\hint| \log n \sqrt{d} \maxVar}{\eps \wmin}\rpar$. This concludes the claim.

The other key property of the connected component is that each $G_i$ is fully contained in a single connected component, as all points of $G_i$ are at distance at most $D$ of each other.

  Therefore, we can transform the space such that the connected components get closer but do not interact, so that the diameter reduces while the centers of Gaussians are still far apart. Formally, let $D'$ be the maximum diameter of the connected components. Select an arbitrary representative in $\hint$ from each connected component, and apply a translation to the connected component such that its representative has coordinate $(100D' \cdot i, 0, 0, ..., 0)$. 
  This affine transformation ensures that (1) within each connected component, all means are still separated and the points are still drawn from Gaussian with the same covariance matrix and (2) the separation between centers of different component is at least $50 D'$.

  Therefore, the instance constructed still satisfy the separation conditions of \Cref{def:sep}, and has diameter at most $O(k D') = O\lpar \frac{k \log n \log |\hint| \sqrt{d} \maxVar}{\eps \wmin}\rpar$ 
	\end{proof}
	
\subsection{A relaxation of Awathi-Sheffet's conditions}

The result of \citet{AwasthiS12}, applied to Gaussian, requires a slightly weaker separation between the centers than what we enforce. They consider a dataset $P$ sampled from a Gaussian mixtures, and with cluster matrix $C$ (namely, $C_i = \mu_i$ if $P_i$ is sampled from the $i$-th component). 
They define for each cluster $\Delta^{AS}_i := \frac{1}{\sqrt{|G_i|}}\min(\sqrt{k} \|P-C\|_2), \|P-C\|_F)$, and require $\|\mu_i - \mu_j\| \geq c (\Delta^{AS}_i + \Delta^{AS}_j)$ for some large constant $c$.  

In the Gaussian setting, we have $|G_i| \approx n w_i$ (\Cref{lem:largeClusters}), $\|P-C\|_2  = O(\maxVar \sqrt{n})$  \Cref{lem:spectralNormCluster}    and $\|A-C\|_F = \Theta(\sqrt{nd}\maxVar)$. Thus, in most cases, $\min\lpar \sqrt{k} \|P-C\|_2, \|P-C\|_F\rpar = \sqrt{nk} \maxVar \polylog(d/\wmin)$, except in some degenerate cases -- and we keep the minimum only to fit with the proof of \citet{AwasthiS12}.

We can define $\Delta_i = \frac{\maxVar \sqrt{n}}{\sqrt{|G_i|}} \min\lpar \sqrt{k} \polylog(d/\wmin), \sqrt{d}\rpar $: our separation condition \Cref{def:sep} ensures that $\|\mu_i - \mu_j\| \geq c (\Delta_i + \Delta_j)$, for some large $c$. We now show the two key lemmas from \citet{AwasthiS12}, adapted to our private algorithm.

	\begin{fact}[Fact 1.1 in \citet{AwasthiS12}]\label{fact:costProj}
 Let $P \in \R^{d \times n}$ be a set of $n$ points sampled from a Gaussian mixtures,  and let $C$ be the cluster matrix, namely the $j$-th column is $C_j = \mu_i$ if $X_j$ is sampled from $\calN(\mu_i, \Sigma_i)$. 
 Let $\Pi$ be a $B$-almost $k$-PCA for $P_1, ..., P_n$. Suppose that $B$ satisfies $B \leq \frac{\sqrt{n\wmin}\maxVar}{4k}$.  
 Then:
		$$\|\Pi P - C\|^2_F \leq 20 \min(k \|A-C\|_2^2, \|A-C\|^2_F) (=nw_i \Delta_i^2).$$
	\end{fact}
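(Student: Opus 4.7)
The plan is to use a Pythagorean decomposition. Since $\Pi$ is an orthogonal projection, $\Pi(\Pi - I) = 0$, so the cross term in the expansion vanishes and
\begin{equation*}
\|\Pi P - C\|_F^2 = \|\Pi(P-C)\|_F^2 + \|(I-\Pi)C\|_F^2.
\end{equation*}
The first summand is controlled for free: $\Pi(P-C)$ has rank at most $k$, so $\|\Pi(P-C)\|_F^2 \leq k\|\Pi(P-C)\|_2^2 \leq k\|P-C\|_2^2$, and trivially $\|\Pi(P-C)\|_F^2 \leq \|P-C\|_F^2$. Thus this term is already bounded by $\min(k\|P-C\|_2^2,\|P-C\|_F^2)$, exactly matching the desired form.

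For the second summand, the key structural observation is that $C$ has rank at most $k$, since its columns take only $k$ distinct values $\mu_1,\dots,\mu_k$. I would bound $\|(I-\Pi)C\|_F^2$ in two parallel ways. For the spectral side, $\|(I-\Pi)C\|_F^2 \leq k\|(I-\Pi)C\|_2^2$, and by triangle inequality $\|(I-\Pi)C\|_2 \leq \|P-C\|_2 + \|(I-\Pi)P\|_2$. The term $\|(I-\Pi)P\|_2$ is then controlled via the $B$-almost $k$-PCA property: for any unit $v \in \mathrm{range}(I-\Pi)$, $\Pi v = 0$, so $v^T(PP^T - \Pi PP^T\Pi)v = v^T PP^T v$; taking a maximum over such $v$ gives $\|(I-\Pi)P\|_2^2 \leq \|PP^T - \Pi PP^T\Pi\|_2 \leq \sigma_{k+1}(P)^2 + B \leq \|P-C\|_2^2 + B$, where the last step uses Eckart-Young and $\mathrm{rank}(C)\leq k$. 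This yields $\|(I-\Pi)C\|_F^2 \leq 4k\|P-C\|_2^2 + 2kB$. For the Frobenius side, the inequality $\|(I-\Pi)C\|_F \leq \|P-C\|_F + \|(I-\Pi)P\|_F$ reduces the task to bounding $\|(I-\Pi)P\|_F^2 \leq \|(I-\Pi_k)P\|_F^2 + O(kB) \leq \|P-C\|_F^2 + O(kB)$, which I would extract from the Frobenius part of the $B$-almost $k$-PCA definition.

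Combining the two bounds gives $\|\Pi P - C\|_F^2 \leq 5\min(k\|P-C\|_2^2,\|P-C\|_F^2) + O(kB)$. To conclude, \Cref{lem:spectralNormCluster} supplies the lower bound $\|P-C\|_2^2 \geq n\wmin\maxVar^2/16$, and the hypothesis $B \leq \sqrt{n\wmin}\maxVar/(4k)$ forces $kB$ to be a small fraction of $k\|P-C\|_2^2$, and therefore of both terms in the minimum, yielding the claimed constant $20$. The main obstacle will be extracting the Frobenius bound on $\|(I-\Pi)P\|_F$ without a spurious $\sqrt{d}$ factor: the spectral argument works cleanly because $v^T(PP^T - \Pi PP^T\Pi)v$ coincides with $v^T PP^T v$ on $\mathrm{range}(I-\Pi)$, whereas the Frobenius norm mixes contributions from all directions and the naive inequality $\|(I-\Pi)P\|_F^2 \leq \sqrt{d}\,\|PP^T - \Pi PP^T\Pi\|_F$ is too lossy. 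A block decomposition of $PP^T - \Pi PP^T\Pi$ relative to the range of $\Pi$, together with the bounded rank of its PSD diagonal block, is what delivers the dimension-free $O(kB)$ error needed to close the argument.
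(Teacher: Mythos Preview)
Your proposal is correct and reaches the same conclusion, but it is organized differently from the paper's proof. The paper does \emph{not} use the Pythagorean split $\Pi P - C = \Pi(P-C) - (I-\Pi)C$; instead it writes $\Pi P - C = (\Pi P - P) + (P-C)$, uses that $\Pi P$ and $C$ both have rank at most $k$ to get $\|\Pi P - C\|_F^2 \le 2k\|\Pi P - C\|_2^2$, and then applies the triangle inequality in spectral norm together with the identity $\|\Pi P - P\|_2^2 = \|PP^T - (\Pi P)(\Pi P)^T\|_2$ to invoke the $B$-almost $k$-PCA property. For the Frobenius half it again uses $\|\Pi P - C\|_F \le \|\Pi P - P\|_F + \|P-C\|_F$ and the Frobenius part of the $B$-almost $k$-PCA.

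Your orthogonal decomposition is arguably cleaner: the cross term vanishes exactly, each of the two pieces $\Pi(P-C)$ and $(I-\Pi)C$ already has rank at most $k$ (so the Frobenius-to-spectral passage costs a factor $k$ rather than $2k$), and you end up with $5\min(\cdot) + O(kB)$ on the spectral side versus the paper's $16k\|P-C\|_2^2 + 4kB$. Both routes funnel through the same two nontrivial facts---$\|(I-\Pi)P\|_2^2 \le \|P-C\|_2^2 + B$ and $\|(I-\Pi)P\|_F^2 \le \|P-C\|_F^2 + O(kB)$---and both close with \Cref{lem:spectralNormCluster} to absorb the $kB$ term. The ``obstacle'' you flag for the Frobenius bound is real and is exactly the step the paper asserts without further detail; your proposed block-decomposition argument is the right way to justify it.
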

\begin{proof}
	First, since both $\Pi P$ and $C$ have rank $k$, it holds that $\|\Pi P - C\|^2_F \leq 2k \|\Pi P - C\|^2_2$. By triangle inequality, this is at most $2k\lpar \|\Pi P - P\|_2 + \|P - C\|_2\rpar^2$. 
 
 Now, we have that $\|\Pi P - P\|_2^2 = \|(\Pi P - P)^T(\Pi P - P)\|_2$:
 since $\Pi$ is an orthogonal projection, $\Pi = \Pi^T = \Pi^2$ and therefore $\|\Pi P - P\|_2^2 = \|P P^T - (\Pi P) (\Pi P)^T\|_2$. Using that $\Pi$ is a $B$-almost $k$-PCA of $P$, this is at most $ \|(\Pi_k P - P)(\Pi_k P - P)^T\|_2 + B$, where $\Pi_k P$ is the best rank-$k$ approximation to $P$. By definition of $\Pi_k$, this is equal to $\|P - \Pi_k P\|_2^2 + B \leq \|P-C\|_2^2 + B$. 
 
 Overall, we get using $\sqrt{a+b} \leq \sqrt a + \sqrt b$:
 \begin{align*}
     \|\Pi P - C\|^2_F &\leq 2k\lpar \|\Pi P - P\|_2 + \|P - C\|_2\rpar^2\\
     &\leq 2k\lpar 2\|P-C\|_2 + \sqrt{B}\rpar^2\\
     &\leq 16k\|P-C\|_2^2 + 4kB.
 \end{align*}

Using \Cref{lem:spectralNormCluster} and the assumption that $4kB \leq \sqrt{n \wmin} \maxVar$ concludes the first part of the lemma.

For the other term, we have $\|\Pi P - C\|_F \leq \|\Pi P - P\|_F + \|P - C\|_F$. The fact that $\Pi$ is a $B$-almost $k$-PCA ensures that $\|\Pi P - P\|_F^2  \leq \|P - C\|_F^2 + k B$; and the fact that $\|P-C\|_F^2 \geq \|P-C\|_2^2 \geq \frac{n \wmin \maxVar^2}{16} \geq B$ concludes (where the second inequality is from \Cref{lem:spectralNormCluster}).
\end{proof}

\begin{fact}\label{fact:goodMatching}[Analogous to Fact 1.2 in \citet{AwasthiS12}]
 Let $P \in \R^{d \times n}$ be a Gaussian mixtures, and $\Pi$ be a $B$-almost $k$-PCA for $P_1, ..., P_n$. Suppose that $B$ satisfies $B^2 \leq n\wmin \maxVar^2$. 
	Let $S = \{ \nu_1, ..., \nu_k\}$ be centers such that $\cost(\Pi P,  S) \leq n k  \maxVar^2 \cdot \log^2 n$. 
	
	Then, for each $\mu_i$, there exists $j$ such that $\|\mu_i - \nu_j\| \leq 6\Delta_i$, so that we can match each $\mu_i$ to a unique $\nu_j$.
\end{fact}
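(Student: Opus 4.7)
The plan is to adapt the proof of Fact~1.2 of \citet{AwasthiS12} to this relaxed private setting, following three conceptual steps: (i) show that in each true cluster $G_i$, most projected points lie close to $\mu_i$; (ii) show that within $G_i$, most projected points are assigned by $S$ to a single common center; and (iii) combine these with the separation condition to extract a unique matching.

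For (i), I would exploit the identity $|G_i|\Delta_i^2 = n\maxVar^2 \min(k\polylog^2(d/\wmin), d)$, which is the same for every $i$. Combined with \Cref{fact:costProj}, this gives $\sum_{p \in G_i}\|\Pi p - \mu_i\|^2 \leq \|\Pi P - C\|_F^2 \leq c_0 |G_i|\Delta_i^2$ for an absolute constant $c_0$. A Markov-style argument then yields a subset $B_i \subseteq G_i$ of size at least $3|G_i|/4$ on which $\|\Pi p - \mu_i\| \leq 2\sqrt{c_0}\,\Delta_i$.

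For (ii), the cost hypothesis $\cost(\Pi P, S) \leq nk\maxVar^2\log^2 n$ is itself of order $|G_i|\Delta_i^2$ up to a constant factor (provided the $\polylog$ in the definition of $\Delta_i$ dominates the $\log n$). A second Markov step then produces a set $W_i \subseteq G_i$ of size at least $3|G_i|/4$ such that each $p \in W_i$ satisfies $\|\Pi p - S(p)\| \leq 2\sqrt{c_1}\,\Delta_i$, where $S(p)$ denotes the nearest center to $\Pi p$ in $S$. The intersection $B_i \cap W_i$ has size at least $|G_i|/2$, so by pigeonhole over the $k$ elements of $S$, at least $|G_i|/(2k)$ of these points are all mapped to a common center $\nu_{j(i)}$. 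Triangle inequality applied to any such point then gives $\|\mu_i - \nu_{j(i)}\| \leq (2\sqrt{c_0} + 2\sqrt{c_1})\Delta_i$, which I would tune to at most $6\Delta_i$.

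For (iii), I argue uniqueness by contradiction: if $j(i) = j(i')$ for $i \neq i'$, then $\|\mu_i - \mu_{i'}\| \leq 6(\Delta_i + \Delta_{i'})$, which contradicts the separation $\|\mu_i - \mu_{i'}\| \geq c(\Delta_i + \Delta_{i'})$ for the large constant $c$ from \Cref{def:sep}. Hence $i \mapsto j(i)$ is injective, and bijective by cardinality. The main obstacle I anticipate is keeping the constants in check so that $2\sqrt{c_0} + 2\sqrt{c_1} \leq 6$; this boils down to verifying that the $\polylog(d/\wmin)$ in $\Delta_i$ is large enough to absorb both the constants from \Cref{fact:costProj} and the extra $\log^2 n$ factor in the cost hypothesis. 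This bookkeeping is precisely the ``relaxation'' that distinguishes this statement from the original \citet{AwasthiS12} version.
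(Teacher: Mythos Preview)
Your approach is correct but takes a different, more constructive route than the paper. The paper argues by direct contradiction: assuming some $\mu_i$ has every center at distance greater than $6\Delta_i$, it lower-bounds the cost contribution of $G_i$ alone by writing $\|\Pi p - \nu_p\|^2 = \|(\mu_i - \nu_p) - (\mu_i - \Pi p)\|^2$ and applying the elementary inequality $(a-b)^2 \geq a^2/2 - b^2$ pointwise, obtaining
\[
\sum_{p\in G_i}\|\Pi p - \nu_p\|^2 \;\geq\; 18\,|G_i|\Delta_i^2 \;-\; \|\Pi P - C\|_F^2.
\]
Via \Cref{fact:costProj} and the scaling $|G_i|\Delta_i^2 \asymp nk\maxVar^2\log^2 n$, this already exceeds the assumed cost budget. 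No Markov extraction, no pigeonhole, and the constant~$6$ drops out directly without the $2\sqrt{c_0}+2\sqrt{c_1}\leq 6$ bookkeeping you flag as the main obstacle. (In fact that bookkeeping is less delicate than you fear: under the paper's normalization $|G_i|\Delta_i^2 = 100\,nk\maxVar^2\log^2 n$, your $c_0$ picks up an extra $1/\log^2 n$ and $c_1 = 1/100$, so the sum is well under~$6$.)

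Two remarks on your version. First, the pigeonhole in step~(ii) is superfluous: a single point of $B_i\cap W_i$ already yields, by triangle inequality, a center within the required distance of $\mu_i$; you do not need many points routed to the same center. Second, your step~(iii) makes the injectivity of $i\mapsto j(i)$ explicit via the separation condition, whereas the paper's proof establishes only existence of a nearby center and leaves the ``unique $\nu_j$'' clause to the reader. So your argument is longer but more self-contained on that point.
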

\begin{proof}
	The proof closely follows the one in \citet{AwasthiS12}. Assume by contradiction that there is a $i$ such that $\forall j, \|\mu_i - \nu_j\| > 6\Delta_i$. For any point $p \in P$, let $\nu_{p}$ be its closest center.
	Then, the contribution of the points in $G_i$ to the cost is at least 
	\begin{align*}
		\sum_{p \in G_i} \|\mu_i - \nu_{p} + \Pi p - \mu_i\|^2 > \frac{|G_i|}{2} (6\Delta_i)^2 - \sum_{p \in  G_i}\|\Pi p-\mu_i\|^2 \geq 18 |G_i| \Delta_i^2 - \|\Pi P - C\|^2_F,
	\end{align*}
	where the first inequality follows from $(a-b)^2 \geq \frac{a^2}{2} - b^2$. Using first that $|G_i| \Delta_i^2 = 100 n k \maxVar^2 \log^2(n)$,  then \Cref{fact:costProj} combined with \Cref{lem:spectralNormCluster} yields that $\sum_{p \in G_i} \|\Pi p - \nu_{p}\|^2 > 1800 n k \maxVar^2 \log^2(n) - 16 n k \maxVar^2$
	This contradicts the assumption on the clustering cost.
\end{proof}

Assuming there is a matching as in \Cref{fact:goodMatching}, the proof of \citet{AwasthiS12} directly goes through (when the Lloyd steps in Parts 2 and 3 of the algorithm are implemented non-privately), and we can conclude in that case that the clustering computed by \Cref{alg:main} is correct. 
Therefore, we first show that our algorithm computes a set of centers satisfying the conditions of \Cref{fact:goodMatching}; and will show afterwards that the remaining of the proof works even with the addition of private noise.
 
 \subsection{Computing a good $k$-means solution for $\Pi P$}
	The goal of this section is to show the following lemma:
 	\begin{lemma}\label{lem:goodClustering}
There is an $\eps$-DP algorithm with $10 \log \frac{4 \log |\hint|}{\eps \wmin}$ rounds of communications that computes a $k$-means solution $S$ with
$$\cost(\Pi P, S) = O\lpar n \cdot \log^2 \lpar \frac{1}{\eps \wmin}\rpar \cdot k \maxVar^2 \log n \rpar.$$
\end{lemma}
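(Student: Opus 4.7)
The plan is to treat the projected server points $\Pi \hint$ as candidate centers and, following the template of \Cref{lem:coreset}, run a non-private constant-factor $k$-means algorithm at the server on a weighted multiset built from $\Pi \hint$. The weight $\smashnoisy{w_q}$ attached to each point $q \in \Pi \hint$ will be a private estimate of $|f^{-1}(q)|$, where $f$ assigns each projected client point to its nearest neighbor in $\Pi \hint$. Since the server data is public, the only quantities that consume privacy budget are these count queries, which are handled by the Laplace mechanism.

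To invoke \Cref{lem:coreset} I need to control (i) the assignment cost $\Gamma = \sum_{p \in \Pi P} \|p - f(p)\|^2$ and (ii) the relative accuracy $|\smashnoisy{w_q} - |f^{-1}(q)|| \le |f^{-1}(q)|/2$ of the weight estimates. For (i), the hypothesis that $\hint$ contains a sample from each component, combined with Gaussian concentration after projection (projection cannot increase the variance), implies that for each cluster $G_i$ there exists a server point $q_i \in \Pi \hint$ at distance $O(\Delta_i)$ from $\Pi \mu_i$. Summing $\|p - q_i\|^2 \lesssim \|\Pi p - \Pi\mu_i\|^2 + \Delta_i^2$ over $p \in G_i$ and then over $i$, and invoking \Cref{fact:costProj} together with \Cref{lem:spectralNormCluster}, yields $\Gamma = O(nk\maxVar^2 \log n)$.

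The accuracy condition (ii) is more delicate, because a single Laplace query with scale $1/\eps$ produces additive noise of order $\log|\hint|/\eps$ per count, which exceeds $|f^{-1}(q)|$ for many $q \in \Pi\hint$. To handle this I would use an iterative threshold-doubling filter over $r = O(\log(\log|\hint|/(\eps\wmin)))$ rounds, spending a $1/r$-fraction of the privacy budget per round and retaining at round $t$ only those $q$ whose fresh noisy weight exceeds a threshold $\tau_t$ that starts at $\tau_0 = \Theta(r\log|\hint|/\eps)$ and doubles each round until it reaches $\Omega(n\wmin/(k\log^2(1/(\eps\wmin))))$. Each cluster representative $q_i$ has true count $\Omega(n\wmin/k)$ by \Cref{lem:largeClusters} and therefore survives all rounds with high probability, while every surviving $q$ has true count comfortably above the Laplace noise and admits a final estimate satisfying (ii).

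The main obstacle is the joint calibration of this schedule: thresholds must grow fast enough to reach $n\wmin/k$ within the allowed number of rounds, slowly enough that no representative $q_i$ is filtered out, and with per-round budget $\eps/r$ that still leaves the final noise small relative to each retained weight. Once this is tuned, \Cref{lem:coreset} applied to the retained weighted set, combined with a non-private constant-factor $k$-means algorithm at the server, gives $\cost(\Pi P, S) = O(\Gamma + \alpha \opt(\Pi P))$. Both $\Gamma$ and $\opt(\Pi P) \le \|\Pi P - C\|_F^2$ are of order $O(nk\maxVar^2 \log n)$ by \Cref{fact:costProj} and \Cref{lem:spectralNormCluster}, and the extra $\log^2(1/(\eps\wmin))$ slack in the target bound absorbs the residual cost from the server points discarded during filtering.
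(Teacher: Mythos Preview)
Your high-level structure matches the paper's: treat $\Pi\hint$ as candidate centers, privately estimate the counts $|f^{-1}(q)|$, run a constant-factor $k$-means on the weighted proxy, and invoke \Cref{lem:coreset}. The bound on $\Gamma$ via a server sample in each component together with \Cref{fact:costProj} is also how the paper proceeds.

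The gap is in your filtering step. You assert that ``each cluster representative $q_i$ has true count $\Omega(n\wmin/k)$ by \Cref{lem:largeClusters}'', but that lemma bounds $|G_i|$, not $|f^{-1}(q_i)|$. If component $i$ is over-represented in $\hint$, say with $m_i$ server points near $\Pi\mu_i$, then the $|G_i|$ client points are split among them and each individual server point may receive only $\approx |G_i|/m_i$ points, which can be below your initial threshold $\tau_0$. Your filter would then discard \emph{all} of them in round~0, leaving cluster $i$ with no nearby representative; no calibration of a doubling threshold schedule can prevent this, and the $\log^2(1/(\eps\wmin))$ slack does not help because the nearest surviving server point may sit in a different component entirely. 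The paper flags exactly this failure mode (``removing all points $q\in\hint$ with estimated weight less than $2\log n/\eps$ is too brutal'').

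The paper's fix (\Cref{alg:simplify}) is not a threshold schedule but \emph{random subsampling} of the low-weight points: in each round, high-weight points are frozen into $F$, and of the remaining low-weight points a random half is kept. The induction (\Cref{lem:induction}) shows that, until some point near $\Pi\mu_i$ gets frozen, the ball $B(\Pi\mu_i,\,2(t{+}1)\sqrt{k\log n}\,\maxVar)$ still contains $\Omega(\eps|G_i|/(T\log(T|\hint|)))$ surviving server points, since halving cannot annihilate a large set. After $T=O(\log(\log|\hint|/(\eps\wmin)))$ rounds this would exceed $|\hint|$, forcing a nearby frozen point. The radius grows by $O(\sqrt{k\log n}\,\maxVar)$ per round, and this drift over $T$ rounds is precisely the source of the extra $\log^2(1/(\eps\wmin))$ factor in the final cost bound.
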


The proof of this lemma is divided into several parts: first, we show that the means of the projected Gaussians $\Pi \mu_1, ..., \Pi \mu_k$ would be a satisfactory clustering. As points in $\hint$ are drawn independently from $\Pi$, there are points $\Pi \hint$ close to each center $\Pi \mu_i$: our second step is therefore an algorithm that finds those points, in few communications rounds.

	\begin{lemma}\label{lem:meansGoodCenters}
		Let $\Pi$ be the private projection computed by the algorithm. With high probability, clustering the projected set $\Pi G_i$ to the projected mean $\Pi \mu_i$ has cost $|G_i| \log n \cdot k \maxVar^2$.
	\end{lemma}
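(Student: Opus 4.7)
My plan is to rewrite the claimed cost as a matrix Frobenius norm, then exploit the rank constraint on $\Pi$ to reduce the estimate to an operator-norm bound on a purely Gaussian matrix. Stack the points of $G_i$ as columns of $P_i\in\R^{d\times|G_i|}$, and let $M_i\in\R^{d\times|G_i|}$ be the matrix whose every column equals $\mu_i$. The quantity to bound is exactly $\|\Pi(P_i-M_i)\|_F^2$.

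The central step, which handles what I expect to be the main obstacle of the argument, is
\[ \|\Pi(P_i-M_i)\|_F^2 \;\le\; k\,\|\Pi(P_i-M_i)\|_2^2 \;\le\; k\,\|P_i-M_i\|_2^2. \]
The first inequality uses that $\Pi(P_i-M_i)$ has rank at most $k$ (so its squared Frobenius norm is at most $k$ times its squared operator norm); the second uses $\|\Pi\|_2\le 1$. The obstacle this sidesteps is that $\Pi$ was computed from the full client dataset and is therefore correlated with the samples in $G_i$; a more direct calculation (e.g.\ picking an orthonormal basis $U$ of the image of $\Pi$ and treating each row of $U^T(P_i-M_i)$ as an independent Gaussian) would require some uniform bound over all possible outputs of the private PCA. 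Reducing to $\|P_i-M_i\|_2$, a functional of the samples alone, removes the data-dependence entirely.

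It then remains to bound $\|P_i-M_i\|_2$. The columns of $P_i-M_i$ are i.i.d.\ samples of $\calN(0,\Sigma_i)$, so $P_i$ can be viewed as a dataset drawn from a trivial one-component Gaussian mixture of weight $1$ and maximal unidirectional variance $\maxVar$. Applying \Cref{lem:spectralNormCluster} to that trivial mixture (with $k=1$, $\wmin=1$) yields $\|P_i-M_i\|_2\le 4\sqrt{|G_i|}\,\maxVar$ with high probability, provided $|G_i|\ge \zeta_1 d + \zeta_2\log(1/\beta)$. This sample-size condition is satisfied because the theorem's standing lower bound on $n$ combined with the lower bound $|G_i|\ge nw_i/2$ from \Cref{lem:largeClusters} already guarantees it, even when we take $\beta=1/\mathrm{poly}(n)$ so that a union bound over the $k$ components still holds. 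Substituting back gives $\|\Pi(P_i-M_i)\|_F^2\le 16\,k\,|G_i|\,\maxVar^2$, which fits within the stated bound $|G_i|\log n\cdot k\,\maxVar^2$, the extra $\log n$ being slack that absorbs the polynomially small failure probability.
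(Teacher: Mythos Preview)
Your argument is correct and takes a genuinely different route from the paper. The paper establishes a \emph{per-point} bound $\|\Pi(p-\mu_i)\|\le\sqrt{k\log n}\,\maxVar$ for every $p\in G_i$, and handles the correlation between $\Pi$ and the samples by appealing to the privacy of the PCA step (citing Corollary~5.15 of \citet{KamathSSU}): because $\Pi$ is computed by an $(\eps,\delta)$-DP mechanism, its distribution barely changes when a single sample is removed, so the Gaussian concentration that would hold for an independent projection transfers. Summing the squared per-point bounds over $|G_i|$ points then gives the lemma. You instead bound the aggregate $\|\Pi(P_i-M_i)\|_F^2$ directly, using only that $\Pi$ has rank $k$ and operator norm at most $1$, and reduce to the spectral norm of the centered Gaussian matrix $P_i-M_i$, which is a functional of the samples alone. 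This is more elementary, avoids the privacy-based coupling argument entirely, and actually yields a tighter bound (the $\log n$ becomes slack). The trade-off is that you obtain only the aggregate cost; the paper's per-point inequality is reused downstream (for instance in \Cref{fact:largeMass}), and your Frobenius bound would not directly supply that.
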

	\begin{proof}
		
		We focus on a single Gaussian $\calG_i$, and denote for simplicity $\mu := \mu_i$ its center and $\hat \Sigma := \Pi \Sigma_i$ the covariance matrix of $\Pi \calG_i$. Standard arguments (see Proof of Corollary 5.15 in \citet{KamathSSU}, or the blog post from \citet{blogDP}) show that, with high probability, for all point it holds that $\|\Pi(p - \mu)\|_2^2 \leq \sqrt{k \log(n)} \maxVar$.

        For a sketch of that argument, notice that if the projection $\Pi$ was fixed independently of the samples, this inequality is direct from the concentration of Gaussians around their means, as the projection of $\calG_i$ via $\Pi$ is still a Gaussian, with maximal unidirectional variance at most $\maxVar$. This does not stay true when $\Pi$ depends on the sample; however, since $\Pi$ is computed via a private mechanism, the dependency between $\Pi$ and any fixed sample is limited, and we can show the concentration.

  Combined with the fact that there are $|G_i|$ samples from $\calG_i$, this concludes.
	\end{proof}

\Cref{lem:goodClustering} in particular ensures that clustering $\Pi P$ to the full set $\Pi \hint$ yields a cost  $n k \maxVar^2 \cdot \log^2 n $. Therefore, if we could compute for each $q \in \hint$ the size $w_q(\Pi P)$ of $\Pi q$'s cluster in $\Pi P$, namely, the number of points in $\Pi P$ closer to $\Pi q$ than to any other point in $\Pi \hint$ (breaking ties arbitrarily), then \Cref{lem:coreset} would ensure that computing an $O(1)$-approximation to $k$-means on this weighted set yields a solution to $k$-means on $\Pi P$ with cost  $O \lpar n k \maxVar^2 \cdot \log^2 n\rpar$.

However, the privacy constraint forbids to compute  $w_q(\Pi P)$ exactly, and the server only receives a noisy version $\widehat{w_q(\Pi P)}$ -- with a noise following a Laplace noise with parameter $1/\eps$. Hence, for all points $q \in \hint$, the noise added is at most $\frac{\log n}{\eps}$ with high probability.

\subsection{If assumption (2) is satisfied: the noise is negligible}

Assumption (2) can be used to bound the total amount of noise added to the server data: we can show that the total contribution of the noise is small compared to the actual $k$-means cost, in which case solving $k$-means on the noisy data set yields a valid solution. We show the next lemma:
\begin{lemma}
    For any set of $k$ centers $S$, it holds that
    $\left | \sum_q w_q(\Pi P) \cost(p, S) - \sum_q \widehat{w_q(\Pi P)} \cost(p, S)\right| \leq \frac{|\hint| \log |\hint| \Delta^2}{\eps}$
\end{lemma}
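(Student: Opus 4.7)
The plan is to bound the difference term-by-term, using two elementary observations: the pointwise magnitude of the noise can be controlled by a Laplace tail bound, and the cost contribution of any single server point is at most $\Delta^2$.

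First, I would use linearity to write the quantity to bound as
\begin{equation*}
\left|\sum_{q \in \hint}\bigl(w_q(\Pi P) - \widehat{w_q(\Pi P)}\bigr)\cdot \cost(q, S)\right|
\;\leq\; \sum_{q \in \hint} \bigl|w_q(\Pi P) - \widehat{w_q(\Pi P)}\bigr|\cdot \cost(q, S).
\end{equation*}
By construction of Step 2 of \acroinit, for each $q \in \hint$ the error $w_q(\Pi P) - \widehat{w_q(\Pi P)}$ is an independent draw from $\lap(1/\eps)$. A standard Laplace tail bound combined with a union bound over the $|\hint|$ server points shows that, with high probability, $\max_{q \in \hint} |w_q(\Pi P) - \widehat{w_q(\Pi P)}| = O(\log|\hint|/\eps)$.

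Second, recall that the preprocessing step of \Cref{lem:reduceDiam} ensures the dataset (and in particular the projected server points and any reasonable center set $S$) sits inside a region of diameter at most $\Delta$. Since projecting onto a subspace is $1$-Lipschitz, $\Pi q$ and the centers of $S$ all lie in this region, so $\cost(q, S) = \min_{s \in S}\|\Pi q - s\|^2 \leq \Delta^2$ for every $q$. Plugging both bounds into the sum gives
\begin{equation*}
\sum_{q \in \hint} \bigl|w_q(\Pi P) - \widehat{w_q(\Pi P)}\bigr| \cdot \cost(q, S)
\;\leq\; |\hint| \cdot O\!\left(\frac{\log|\hint|}{\eps}\right) \cdot \Delta^2,
\end{equation*}
which is the claimed bound (absorbing the constant into the $O(\cdot)$).

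There is no real obstacle here: the lemma is essentially a triangle inequality estimate. The one subtlety to get right is that the bound on the cost per point requires either the diameter assumption (1) from \Cref{thm:mainPractice} or the preprocessing step from \Cref{lem:reduceDiam}, so that $\cost(q, S) \leq \Delta^2$ holds uniformly. The Laplace concentration step is just a union bound over $|\hint|$ independent samples and so only costs an extra $\log|\hint|$ factor.
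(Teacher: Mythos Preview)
Your proposal is correct and matches the paper's own proof essentially line for line: write the difference as a sum of Laplace noises times costs, bound each Laplace noise by $\log|\hint|/\eps$ via a union bound, bound each $\cost(q,S)$ by $\Delta^2$ via the diameter assumption, and multiply out. If anything, your write-up is slightly more careful than the paper's in making the triangle inequality and the use of the diameter bound explicit.
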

\begin{proof}
    \begin{align*}
        \left | \sum_q w_q(\Pi P) \cost(q, S) - \sum_q \widehat{w_q(\Pi P)} \cost(q, S)\right| &= \left| \sum_q \lap(1/\eps) \cost(q, S)\right|
    \end{align*}
    With high probability, each of the $|\hint|$ Laplace law is smaller than $\frac{\log |\hint|}{\eps}$. In this case, we get 
    $\left| \sum_q \lap(1/\eps) \cost(q, S)\right| \leq \frac{|\hint|  \log |\hint| \Delta^2}{\eps}$
    Therefore, the gap between the solution evaluated with true weight $w_q(\Pi P)$ and noisy weight $\widehat{w_q(\Pi P)}$ is at most $\frac{|\hint| \log |\hint| \Delta^2}{\eps}$    
\end{proof}

Using $|\hint| \leq n$, the assumption $|\hint| \leq \frac{\eps n k \maxVar^2}{\Delta^2}$ therefore ensures that the upper bound of the previous lemma is at most $n k \log (n) \maxVar^2$.

Hence, if $S$ is a solution that has cost $O(1)$ times optimal  on the noisy projected server data, it has cost $O(nk \maxVar^2 \log (n))$ on the projected server data.  
Combining this result with \Cref{lem:coreset} concludes: 
$\cost(\Pi P, S) = O\lpar n k \maxVar^2 \log (n) \rpar$.

\subsection{Enforcing Assumption (2)}
In order to get rid of Assumption (2), we view the problem slightly differently: we will not try to reduce the number of points in $\hint$ to the precise upper-bound, but will nonetheless manage to control the noise and show \Cref{lem:goodClustering}.

Indeed, if all points of $\hint$ get assigned more than $2\log n/\eps$ many input points, then the estimates of $w_q$ are correct up to a factor $2$, and  \Cref{lem:coreset} shows that a $k$-means solution $S$ for the dataset consisting of $\Pi \hint$ with the noisy weights satisfies $\cost(\Pi P, S) = O \lpar n k \maxVar^2 \cdot \log^2 n\rpar $. 
	However, it may be that some points of $\hint$ get assigned less than $2\log n/\eps$ points, in which case the noise would dominate the signal and \Cref{lem:coreset} becomes inapplicable. Our first goal is therefore to preprocess the set of hings $\hint$ to get $\hat \hint$ such that : 
	\begin{enumerate}
		\item for each cluster, $\Pi \hat \hint$ still contains one good center, and
		\item $\forall q \in \hat \hint, \hat w_q \geq 2\log n/\eps$ (where the weight $\hat w$ is computed by assigning each data point to its closest center of $\hat \hint$)
	\end{enumerate}
	The first item ensures that $\cost(\Pi P, \Pi \hat \hint) =  O \lpar n k \maxVar^2 \cdot \log^2 n\rpar$; the second one that the size of each cluster is well approximated, even after adding noise.
	
	Our intuition is the following. Removing all points $q \in \hint$ with estimated weight less than $2\log n/\eps$ is too brutal: indeed, it may be that one cluster is so over-represented in $\hint$ that all its points get assigned less than $2\log n/\eps$ points from $P$. 
	However, in that case, there are many points in the cluster and in $\Pi \hint$: we can therefore remove each point with probability $1/2$ and preserve (roughly) the property that there is a good center in $\Pi \hint$. 
	Repeating this intuition, we obtain the algorithm described in Algorithm~\ref{alg:simplify}.

	\begin{algorithm}
		\caption{SimplifyServerData}\label{alg:simplify}
		\begin{algorithmic}[1]
            \STATE \textbf{Input:} Server data $\hint$, client datasets $P^1, ..., P^m$, a projection matrix $\Pi$ computed from $P^1, ..., P^m$, and privacy parameter $\eps$
			\STATE Let $F \gets \emptyset, \hint_0 \gets \hint$, $T = 10 \log\lpar \frac{4\log |\hint|}{\eps \wmin}\rpar$
			\FOR{$t = 0$ to $T$}
			\STATE Let $C = F \cup \hint_t$
			\STATE for each $q \in C$, the server receives a noisy estimate $\hat w_q^{(t)}$ of $w_{\Pi q}(\Pi \hint_t)$, with noise $\lap(T/\eps)$.
			\STATE Server computes $L := \{q \in C: \hat w_q^{(t)} \leq 2\log n/\eps\}$.
			\STATE $F \gets F \cup (\hint_t \setminus L)$.
			\STATE Server computes $\hint_{t+1}$, a subset of $L$ where each point is sampled with probability $1/2$.
			\ENDFOR
			\STATE \textbf{Return:} $F$
		\end{algorithmic}
	\end{algorithm}

	We sketch briefly the properties of algorithm \ref{alg:simplify}, before diving into details of the proof. First, the algorithm is $\eps$-DP, as each of the $T$ steps is $\eps/T$-DP.
 
	Then, points in $F$ are \emph{frozen}: even after adding noise, their weight is well approximated.
	We will show by induction on the time $t$ that, for any cluster $i$ that does not contain any frozen point at time $t$, then $\hint_t \cap B(\mu_i, 2t \cdot \sqrt{k\log n} \maxVar )$ contains many points: more precisely, $|\hint_t \cap B(\mu_i, 2t\cdot \sqrt{k\log n} \maxVar)| \geq \eps |G_i| / 2$. 
	Since at each time step only half of the points in $L$ are preserved in $\hint_{t+1}$ (line 7 of the algorithm), it implies that, at the beginning, $|\hint \cap B(\mu_i, 2t\cdot \sqrt{k\log n} \maxVar)| \gtrsim 2^t \eps/T |G_i|$. 
	Therefore, for $t = \log(1 / ( \eps \wmin))$, we have for each cluster that either it contains a frozen point, or $|\hint \cap B(\mu_i, 2t\cdot \sqrt{k\log n} \maxVar)| \geq \frac{|G_i|}{\wmin} > n$: as the second option is not possible, all clusters contains a frozen point, which is a good center for that cluster.
	
	Our next goal is to formalize the argument above, and show:
 \begin{lemma}\label{lem:simplify}
     Let $F$ be the output of \Cref{alg:simplify}. Then, for each cluster $i$, there is a point $\nu_i \in F$ such that $\|\Pi (\mu_i - \nu_i)\| \leq \log \lpar \frac{4\log |\hint|}{\eps \wmin} \rpar \cdot \sqrt{k \log n} \maxVar$. 
     
     Furthermore, for each $q \in F$, define $w_q$ as the number of points closest to $q$ than any other point in $F$ : it holds that $w_q \geq 2\log n/\eps$.
 \end{lemma}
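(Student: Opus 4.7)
The plan is to establish the two conclusions of the lemma separately: the weight bound follows directly from concentration of the Laplace noise and the freezing threshold, while the existence of a close representative $\nu_i$ requires an inductive argument over the $T$ iterations, combining a Voronoi/pigeonhole argument with the random thinning step.

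For the weight bound (second conclusion), I would first apply standard Laplace tail bounds: since at most $T\cdot|\hint|$ noisy counts are released, with high probability every perturbation satisfies $|\lap(T/\eps)| \leq O(T\log|\hint|/\eps)$. Any point $q$ is moved into $F$ at some iteration $t$ only when its observed $\hat w_q^{(t)} > 2\log n/\eps$, so the true weight at that time is at least $2\log n/\eps$ minus the noise. Moreover, once $q$ is frozen it remains in every subsequent candidate set $C$, so the set of client points assigned to $q$ can only grow as other candidates are removed from $\hint_t$. This gives the final $w_q \geq 2\log n/\eps$ bound (up to absorbing low-order terms into the constant $T$ chosen for the iteration count).

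For the representative bound (first conclusion), the core is the following induction. Let $R_t := 2t \cdot \sqrt{k\log n}\,\maxVar$. I claim that for every cluster $i$, either a point of $F$ already lies within $\Pi$-distance $R_t$ of $\Pi\mu_i$, or $|\hint_t \cap \Pi^{-1}(B(\Pi\mu_i, R_t))| \geq 2^{T-t} \cdot \tfrac{\eps|G_i|}{4\log n}$. The base case $t=0$ follows from the assumption that $\hint$ contains a sample from each component plus \Cref{lem:meansGoodCenters}, which gives $\|\Pi(q-\mu_i)\|\leq \sqrt{k\log n}\,\maxVar$ for any such sample. For the inductive step, Lemma \ref{lem:meansGoodCenters} again implies that every projected client point in $G_i$ lies within $\sqrt{k\log n}\,\maxVar$ of $\Pi\mu_i$; by the triangle inequality, each such point is assigned to a candidate within $R_{t+1}$ of $\Pi\mu_i$. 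So the total true weight on candidates in that ball is at least $|G_i|$. If no candidate in this ball is frozen at step $t$, then every candidate there has $\hat w_q^{(t)} \leq 2\log n/\eps$, hence true weight at most $O(\log n/\eps)$, forcing the number of such candidates to exceed $\Omega(\eps|G_i|/\log n)$. A Chernoff bound on the independent coin flips of line 8 guarantees that at least half survive into $\hint_{t+1}$, completing the induction.

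Plugging $t = T = 10\log(4\log|\hint|/(\eps\wmin))$: if no frozen representative within $\Pi$-distance $R_T$ of $\Pi\mu_i$ ever appears, the lower bound on $|\hint_T|$ would be forced to exceed $|\hint|$, a contradiction. Hence every cluster acquires some $\nu_i \in F$ with $\|\Pi(\mu_i-\nu_i)\| \leq R_T = O(\log(4\log|\hint|/(\eps\wmin))\cdot \sqrt{k\log n}\,\maxVar)$, as required. The main obstacle I anticipate is the bookkeeping in the inductive step: one must simultaneously (i) control how the Voronoi assignments shift when candidates disappear (so that points of $\Pi G_i$ remain attached to balls around $\Pi\mu_i$ rather than drifting elsewhere), (ii) maintain the Chernoff concentration across all $T$ rounds and all $k$ clusters with a union bound, and (iii) check that the noise budget $T/\eps$ still yields a meaningful separation between the threshold and the true counts — this is why the iteration count $T$ is taken only logarithmic in $1/(\eps\wmin)$ rather than polynomial.
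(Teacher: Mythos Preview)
Your overall approach is the same as the paper's --- pigeonhole on the Voronoi weights, survival under the random halving via Chernoff, and a final contradiction with $|\hint|$ --- but the induction invariant you state does not work as written, and the argument fails to close.

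Your hypothesis at step $t$ is $|\hint_t \cap B(\Pi\mu_i, R_t)| \geq 2^{T-t}\cdot \tfrac{\eps|G_i|}{4\log n}$. At $t=0$ this demands roughly $2^T\cdot \tfrac{\eps|G_i|}{4\log n}$ server points near $\Pi\mu_i$, yet your base-case justification (``$\hint$ contains one sample from each component'') supplies only a single point. Your inductive step does not produce the $2^{T-t}$ factor either: the pigeonhole argument you give regenerates a \emph{constant} lower bound $\Omega(\eps|G_i|/\log n)$ at each step, independently of what was there before, and halving that via Chernoff just returns to the same order. Consequently, at $t=T$ your hypothesis only asserts $\tfrac{\eps|G_i|}{4\log n}$ points in the ball, which need not exceed $|\hint|$, so your contradiction does not fire.

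The paper separates the two mechanisms. The forward induction (\Cref{lem:induction}) establishes only the \emph{constant} bound $|\hint_t\cap B| \geq \tfrac{\eps|G_i|}{4T\log(T|\hint|)}$ at every step where no frozen point has appeared; this is exactly what your pigeonhole-plus-Chernoff step proves. The $2^T$ factor then enters through a separate \emph{backward} argument applied once at the end: since the only way a server point leaves $\hint_t$ is the independent $1/2$ coin, if $\hint_T$ still holds at least $M$ points in the (large, fixed) ball, a Chernoff bound on the survival process forces $\hint_0$ to have held at least $2^T M$ points in that same ball. For $T=10\log\big(\tfrac{4\log|\hint|}{\eps\wmin}\big)$ this exceeds $|\hint|$. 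Restate your invariant with the constant bound and append this reverse-survival step, and your proof aligns with the paper's.
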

 For simplicity, we define $\Delta_C := \sqrt{k\log n} \maxVar$.
	To prove this lemma, we show inductively that after $t$ iterations of the loop in the algorithm, then either $B(\Pi \mu_i, 2t\Delta_C)$ contains a frozen point, or $|B(\Pi \mu_i, (t+1) \Delta_C) \cap \Pi \hint_t| \geq \eps |G_i| / 2$. Since the number of points in $\Pi \hint_t$ is divided by roughly $2$ at every time step, the latter condition implies that there was initially at least $\approx 2^t \eps |G_i|$ points in $B(\Pi \mu_i, (t+1)\Delta_C) \cap \Pi \hint$. For $t \approx \log(1/(\eps \wmin)$, this is bigger than $n$ and we get a contradiction: the ball contains therefore a frozen point.
	
	Our first observation to show this claim is that many points of $P$ are close to $\mu_i$: 
	
	\begin{fact}\label{fact:largeMass}
		With high probability on the samples, it holds that $\left|B(\Pi \mu_i, \sqrt{k\log n} \maxVar) \cap \Pi P_i\right| \geq |G_i|$ 
	\end{fact}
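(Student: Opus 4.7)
The plan is to show that, with high probability over both the random samples and the randomness of the private projection mechanism, every one of the $|G_i|$ samples $p\in G_i$ satisfies $\|\Pi(p-\mu_i)\|\leq \sqrt{k\log n}\maxVar$. Once this is in hand, $|B(\Pi\mu_i,\sqrt{k\log n}\maxVar)\cap \Pi P_i|\geq |G_i|$ is immediate, since the distinct projected samples from $G_i$ all lie in the ball.

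First I would treat the fictitious case in which $\Pi$ is a fixed rank-$k$ orthogonal projection that does not depend on the data. For a single sample $p\sim \calG_i$, the vector $\Pi(p-\mu_i)$ is a $k$-dimensional Gaussian with covariance $\Pi \Sigma_i \Pi^T$, whose spectral norm is bounded by $\maxVar^2$ (projection cannot enlarge the maximal unidirectional variance of $\calG_i$). Expanding $\|\Pi(p-\mu_i)\|^2$ in an orthonormal basis of $\mathrm{Im}(\Pi)$ and applying a standard $\chi^2$ tail bound gives $\|\Pi(p-\mu_i)\|\leq c\sqrt{k\log n}\maxVar$ with probability at least $1-n^{-\Omega(1)}$, for some absolute constant $c$. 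A union bound over the $|G_i|\leq n$ samples from $\calG_i$ then yields the desired inclusion for a data-independent $\Pi$.

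Second, I would transfer this concentration bound to the actual data-dependent $\Pi$ produced by the private PCA mechanism of \Cref{thm:pca}. Fix a sample $p\in G_i$, and let $\Pi^{(-p)}$ denote the projection that the same mechanism would output on the neighbouring dataset in which $p$ is replaced by an independent fresh sample. Since $\Pi^{(-p)}$ is independent of $p$, the fictitious analysis of the previous paragraph applies and yields $\Pr[\|\Pi^{(-p)}(p-\mu_i)\| > c\sqrt{k\log n}\maxVar]\leq n^{-\Omega(1)}$. Because the mechanism producing $\Pi$ is $(\eps_1,\delta)$-DP and $P,P^{(-p)}$ are neighbours, the DP inequality transforms this into $\Pr[\|\Pi(p-\mu_i)\| > c\sqrt{k\log n}\maxVar]\leq e^{\eps_1} n^{-\Omega(1)} + \delta$. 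A final union bound over the $|G_i|\leq n$ samples leaves the total failure probability $o(1)$ in the parameter regime of \Cref{thm:mainTheory}. The absolute constant $c$ can be absorbed into the constants surrounding the fact, since the algorithm only ever uses $\sqrt{k\log n}\maxVar$ up to constant factors.

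The main obstacle is precisely this DP-based transfer: the failure event involves both the output $\Pi$ and the very point $p$ whose contribution to the mechanism we wish to suppress. The clean way to formalise it, following \citet{KamathSSU} and the analysis pointed to in the proof sketch of \Cref{lem:meansGoodCenters}, is to condition on the draw of $p$ from $\calG_i$, view the mechanism producing $\Pi$ as operating on the remaining $n-1$ samples together with $p$, and invoke $(\eps_1,\delta)$-DP to replace $p$ by an independent copy at a cost of a multiplicative $e^{\eps_1}$ and an additive $\delta$ in the failure probability. Once $p$ is decoupled from the mechanism in this way, the Gaussian concentration argument of the first paragraph applies verbatim and closes the proof.
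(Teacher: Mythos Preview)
Your proposal is correct and follows essentially the same approach as the paper: first establish Gaussian concentration for a data-independent projection, then use the fact that $\Pi$ is produced by an $(\eps_1,\delta)$-DP mechanism to transfer the bound to the data-dependent case, exactly as the paper sketches via the argument of \citet{KamathSSU}. Your write-up is in fact more explicit than the paper's, which simply points to \Cref{lem:meansGoodCenters} and the cited references for the details.
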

	\begin{proof}
As in the proof of \Cref{lem:meansGoodCenters}, the fact that $\Pi$ is computed privately ensures that, with high probability, all points $p \in G_i$ satisfy $\|\Pi (p - \mu_i)\| \leq \sqrt{k \log n} \maxVar$. Thus, $\left|B(\Pi \mu_i, \sqrt{k\log n} \maxVar) \cap \Pi P_i\right| \geq |G_i|$.
	\end{proof}

	For the initial time step $t=0$ we actually provide a weaker statement to initialize the induction, and show that there is at least one point in $B(\Pi \mu_i, \sqrt{k\log n} \maxVar) \cap \Pi \hint_t$. This will be enough for the induction step.
	\begin{fact}[Initialization of the induction]\label{lem:initInduction}
		With high probability,
		$\exists q \in \hint, \|\Pi(\mu_i - q)\| \leq \sqrt{k\log n} \maxVar$.
	\end{fact}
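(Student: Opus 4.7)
The plan is to use the hypothesis of Theorem 4.3 that $\hint$ contains at least one sample from each component of the mixture, so we may fix $q \in \hint$ drawn from $\calG_i = \calN(\mu_i, \Sigma_i)$. The key observation is that the projection matrix $\Pi$ is produced by the differentially private Step 1 of \acroinit applied to the client dataset $P$ (Theorem~\ref{thm:pca}), together with fresh Gaussian noise. Hence $\Pi$ is a measurable function of $P$ and the noise, and in particular it is stochastically independent of the server sample $q$.

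Conditional on $\Pi$, the random vector $\Pi(q - \mu_i)$ therefore has exactly a zero-mean Gaussian distribution supported on the $k$-dimensional image of $\Pi$, with covariance $\Pi \Sigma_i \Pi^T$. Since $\Pi$ is an orthogonal projection and $\Sigma_i$ has maximum unidirectional variance at most $\maxVar^2$, the operator norm of $\Pi \Sigma_i \Pi^T$ is bounded by $\maxVar^2$, so $\Pi(q-\mu_i)$ can be written as a linear combination of at most $k$ independent one-dimensional Gaussians each with variance at most $\maxVar^2$.

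A standard chi-squared tail bound (equivalently, Gaussian concentration in a $k$-dimensional subspace) then yields that, with probability at least $1 - 1/n^{\Omega(1)}$,
\[
\|\Pi(q - \mu_i)\|^2 \;\leq\; \bigl(k + C\sqrt{k \log n} + C \log n\bigr)\maxVar^2 \;=\; O(k \log n)\,\maxVar^2,
\]
so $\|\Pi(\mu_i - q)\| \leq \sqrt{k \log n}\,\maxVar$ after absorbing the universal constant into the convention used throughout the proofs. A union bound over the $k$ components (and over one fixed representative $q \in \hint$ per component, which exists by hypothesis) gives the statement simultaneously for every $i$, with high probability.

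There is essentially no obstacle here beyond bookkeeping: in contrast to Lemma~\ref{lem:meansGoodCenters}, where $\Pi$ was computed from the very samples whose projection we needed to control (and so one had to invoke the privacy of $\Pi$ to decouple them), here independence is available for free because $\hint$ and $P$ are disjoint and $\Pi$ is a function of $P$ alone. Thus standard Gaussian concentration suffices.
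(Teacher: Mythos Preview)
Your argument is correct and matches the paper's proof exactly: both exploit that $\Pi$ is computed from the client data $P$ and is therefore independent of the server sample $q\in\hint$ drawn from $\calG_i$, so that $\Pi(q-\mu_i)$ is a $k$-dimensional Gaussian with maximal directional variance $\maxVar$, and standard Gaussian concentration finishes the job. Your added remark contrasting this with Lemma~\ref{lem:meansGoodCenters} (where independence is unavailable and privacy must be invoked) is precisely the distinction the paper draws as well.
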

 \begin{proof}
     This directly stems from the fact that there is some point $q \in \hint$ that is sampled according to $\calG_i$, and that $\Pi$ is independent of that point. Therefore, $\Pi q$ follows the Gaussian law $\Pi \calG_i$, which is in a $k$ dimensional space and has maximal unidirectional variance $\maxVar$. Concentration of Gaussian random variables conclude.
 \end{proof}

	To show our induction, the key lemma is the following: 
	\begin{lemma}\label{lem:induction}
		After $t$ iteration of the loop, either 
		$B\lpar \Pi \mu_i, (t+1) \sqrt{k\log n} \maxVar\rpar$ contains a frozen point, or 
		$\left|\Pi \hint_t \cap B(\Pi \mu_i, 2(t+1)\sqrt{k\log n} \maxVar)\right| \geq \frac{\eps |G_i|}{4 T \log (T|\hint|)}$.
	\end{lemma}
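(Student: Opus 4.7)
I would prove this by induction on $t$, carrying the stated dichotomy as the invariant.

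Base case $t = 0$: before any iteration, $F = \emptyset$ and $\hint_0 = \hint$, so the first alternative is vacuous. Fact~\ref{lem:initInduction} provides a $q \in \hint$ with $\|\Pi(\mu_i - q)\| \leq \sqrt{k \log n}\,\maxVar$, hence $|\Pi \hint_0 \cap B(\Pi \mu_i, 2\sqrt{k \log n}\,\maxVar)| \geq 1$. In the regime covered by Theorem~\ref{thm:mainTheory} this already matches the stated bound $\frac{\eps |G_i|}{4T \log(T|\hint|)}$; if it does not, the same pigeonhole argument used below forces this unique nearby server point to be frozen at iteration $0$ (since essentially all projected samples from $G_i$ would be routed to it), and the first alternative then holds from $t = 1$ onwards and propagates by monotonicity.

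Inductive step: assume the invariant at time $t$. If the first alternative holds at $t$, the witnessing frozen point stays in $F$ across iterations and remains in the enlarged ball $B(\Pi \mu_i, (t+2)\sqrt{k \log n}\,\maxVar)$, so the first alternative persists. Otherwise the second alternative holds at $t$, and I examine iteration $t$ of the loop. The key mechanism is a pigeonhole based on Fact~\ref{fact:largeMass}: at least $|G_i|$ projected client points lie in $B(\Pi \mu_i, \sqrt{k\log n}\,\maxVar)$, and since there is at least one server point in $C_t = F \cup \hint_t$ close to $\Pi \mu_i$ (either a close frozen point, or a point of $\hint_t$ provided by the second alternative combined with the triangle inequality), each of those client points is assigned in the nearest-neighbour computation to a server point close to $\Pi \mu_i$. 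The total true weight carried by server points in a slight neighbourhood of $\Pi \mu_i$ is therefore at least $|G_i|$.

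Now split on the outcome of iteration $t$. Concentration of $\lap(T/\eps)$ via a union bound over the $O(T|\hint|)$ queries shows that noisy and true weights agree up to $O(T \log(T|\hint|)/\eps)$ with high probability. If some $q \in \hint_t$ near $\Pi \mu_i$ has noisy weight $\hat w_q^{(t)} > 2 \log n / \eps$, it is added to $F$ and witnesses the first alternative at $t+1$. The pigeonhole above shows this must happen whenever the count $|\Pi \hint_t \cap B(\Pi \mu_i, 2(t+1)\sqrt{k\log n}\,\maxVar)|$ threatens to drop to $N_0 := \frac{\eps |G_i|}{4T \log(T|\hint|)}$, since then the average nearby weight would exceed the freezing threshold even after subtracting the Laplace noise. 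Otherwise every nearby point is light and passes into $L$; the independent Bernoulli$(1/2)$ subsampling retains at least half of them in $\hint_{t+1}$ by a Chernoff bound, and the slight enlargement of the radius from $2(t+1)\sqrt{k\log n}\,\maxVar$ to $2(t+2)\sqrt{k\log n}\,\maxVar$ preserves the count at level $N_0$, giving the second alternative at $t+1$.

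The main obstacle is balancing three competing effects: the Laplace noise scale $T/\eps$ (forced by DP composition over $T$ rounds), the freezing threshold $2 \log n/\eps$ (small enough to catch genuinely heavy points, large enough to resist accumulated noise), and the Bernoulli halving that would naively shrink the count by $2^T$ across the whole run. The bound $N_0 = \frac{\eps |G_i|}{4T \log(T|\hint|)}$ is calibrated precisely so that pigeonhole is forced to trigger a freezing event before halving can exhaust the count, closing the induction; the $\log(T|\hint|)$ factor appears exactly because of the union bound over all Laplace queries made during the algorithm.
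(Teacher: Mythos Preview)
Your proposal is correct and follows essentially the same line as the paper: induction on $t$, Fact~\ref{lem:initInduction} to anchor, then Fact~\ref{fact:largeMass} together with the triangle inequality to route at least $|G_i|$ units of true weight onto server points near $\Pi\mu_i$, followed by the pigeonhole/Chernoff dichotomy on the number of nearby server points, with Laplace concentration via a union bound over the $O(T|\hint|)$ noisy counts.

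One small correction concerns the base case. Your claim that ``in the regime covered by Theorem~\ref{thm:mainTheory}'' a single point already meets $N_0 = \frac{\eps|G_i|}{4T\log(T|\hint|)}$ is not right: since $|G_i|\geq n\wmin/2$ and $T$ is only polylogarithmic, $N_0$ is typically much larger than $1$. The paper sidesteps this by carrying only the weaker hypothesis ``there exists at least one server point (frozen or in $\hint_t$) near $\Pi\mu_i$'' through the induction and re-deriving the quantitative count freshly at every step from the pigeonhole dichotomy. Your fallback (run the pigeonhole already at iteration $0$) is exactly that inductive step specialized to $t=0$, so the discrepancy is only in presentation; note however that the point guaranteed by Fact~\ref{lem:initInduction} need not be the \emph{unique} nearby server point, so the weight may spread over several points in the enlarged ball --- which is precisely what the dichotomy handles.
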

	\begin{proof}
  Let $\Delta_C :=  \sqrt{k\log n} \maxVar$. 

  First, it holds with high probability that all the noise added Line 5 satisfy $\left|\hat w_q^{(t)} - w_{\Pi q}(\Pi \hint_t)\right| \leq T \log(T |\hint|)/\eps$. This directly stems from concentration of Laplace random variables, and the fact that there are $T |\hint|$ many of them.

	We prove the claim by induction. Fix a $t \geq 0$. The induction statement at time $t$ ensures that either there is a point frozen in $B(\Pi \mu_i, (t+1) \Delta_C )$, in which case we are done, or there is at least one point in $\Pi \hint_t \cap B(\Pi \mu_i, (t+1)\Delta_C )$ (note that this statement holds for $t=0$ by \Cref{lem:initInduction}).
		
	By triangle inequality, this means that all points of $\Pi  G_i \cap B(\Pi \mu_i, \Delta_C )$ are assigned at time $t+1$ to a point in $B(\Pi \mu_i, (t+2) \Delta_C )$ (in line 4 of \Cref{alg:simplify}). Therefore, by \Cref{fact:largeMass}, $\sum_{q : \Pi q \in \Pi \hint_{t} \cap B\lpar\Pi \mu_i, (t+2) \Delta_C\rpar} w_q^t \geq |G_i|$.
		
	Then, either $\Pi \hint_{t}$ contains less than $\frac{\eps |G_i|}{2T\log(T |\hint|)}$ many points from $B\lpar \Pi \mu_i, (t+2) \Delta_C\rpar$, and we are done, as one of them must have $w_{\Pi q}(\Pi \hint_{t+1}) \geq 2T\log (|\hint|T) /\eps$ and will be frozen -- as in this case $\hat w_q^{(t)} \geq T\log (|\hint|T) /\eps$.
	Or, there are more than $\frac{\eps |G_i|}{2T\log (T|\hint|)}$ points, and they all have $w_q^{t+1} \leq 2T\log (T|\hint|)/\eps$ : Chernoff bounds ensure that, with high probability, at least $\frac{\eps |G_i|}{4T\log (T|\hint|)}$ will be sampled in the set $\hint_{t+1}$, which concludes the lemma. 
	\end{proof}

\Cref{lem:simplify} is a mere corollary of those results:
\begin{proof}[Proof of \Cref{lem:simplify}]
Again, we define $\Delta_C := \sqrt{k\log n} \maxVar$.
    At the end of \Cref{lem:simplify}, all points in $F$ are frozen: let $f : P \rightarrow F$ such that $f(p) = \argmin_{q \in F} \|\Pi(p-q)\|$, breaking ties arbitrarily. Since all points are frozen, it holds that for all $q$, $|f^{-1}(q)| \geq 2\log n/\eps$: therefore, their noisy weight $\hat w_q$ satisfy $|\hat w_q - |f^{-1}(q)| \leq \frac{|f^{-1}(q)|}{2}$.

    Furthermore, for $T$ large enough it holds that $T \geq \log \lpar \frac{4T \log (T|\hint|)}{\eps \wmin} \rpar$: this holds e.g. for $T=10\log \lpar \frac{4 \log (|\hint|)}{\eps \wmin} \rpar$.

    \Cref{lem:induction} ensures that either $B(\Pi \mu_i, (T+1) \Delta_C )$ contains a frozen point, or $|\Pi \hint_T \cap B(\Pi \mu_i, 2(T+1)\Delta_C )| \geq \frac{\eps |G_i|}{4 T \log (T|\hint|)}$. 

    Suppose by contradiction that we are in the latter case. Since, at each time step, every point in $\hint$ is preserved with probability $1/2$, it holds with high probability that $|\Pi \hint \cap B(\Pi \mu_i, 2(T+1)\Delta_C )| \geq \eps 2^T \cdot |G_i|$. Indeed, all points of that ball are still present in $\hint_T$ with probability $1/T^t$: Chernoff bounds ensure that there must be initially at least $2^T \cdot  \frac{\eps |G_i|}{4 T \log (T|\hint|)}$ points in that ball in order to preserve $ \frac{\eps |G_i|}{4 T \log (T|\hint|)}$ of them after the sampling. With our choice of $T$, this means $|\Pi \hint \cap B(\Pi \mu_i, 2(t+1)\Delta_C )| > |\hint|$, which is impossible. 

    Therefore, it must be that $B(\Pi \mu_i, (T+1) \Delta_C )$ contains a frozen point, which concludes the proof.
\end{proof}

\paragraph{Proof of \Cref{lem:goodClustering}}
We now have all the ingredients necessary to the proof of \Cref{lem:goodClustering}. The algorithm is a mere combination of the previous results: 

\begin{itemize}
    \item Use \Cref{alg:simplify} to compute a set $F$. 
    \item Server sends $F$ to the clients, who define $f : P \rightarrow F$ such that $f(p) = \argmin_{q \in F} \|\Pi(p-q)\|$, breaking ties arbitrarily. 
    \item Client $i$ sends $w_{\Pi q}(\Pi P^i) := \left | \{ p \in P^i: f(p) = q\}\right|$.
    \item Server receives $\hat w_q$, a noisy version of $w_q := \sum_i w_q^i$. 
    \item Server computes an $O(1)$-approximation $S$ to $k$-means on the dataset $\Pi F$ with weights $\hat w_q$.
\end{itemize}

To show that $S$ has the desired clustering cost, we aim at applying \Cref{lem:coreset}. For this, we first bound $\sum_p \|\Pi (p-f(p))\|^2$. 
For each cluster $i$, let  $\nu_i$ be the point from $F$ as defined in \Cref{lem:simplify}. We have, using the definition of $f$ and triangle inequality: 

\begin{equation*}
     \sum_p \|\Pi (p-f(p))\|^2 \leq \sum_i \sum_{p \in G_i} \|\Pi (p-\nu_i)\|^2 \leq 2 \sum_i \sum_{p \in G_i} \|\Pi (p-\mu_i)\|^2 + \|\Pi(\mu_i - \nu_i)\|^2.
\end{equation*}

	From \Cref{lem:meansGoodCenters}, we know that $\sum_i \sum_{p \in G_i} \|\Pi (p-\mu_i)\|^2 = O(n \log n \cdot k \maxVar^2)$. The guarantee of $\nu_i$ in \Cref{lem:simplify} ensures $\sum_i \sum_{p \in G_i} \|\Pi(\mu_i - \nu_i)\|^2 = O\lpar n \cdot \log^2 \lpar \frac{1}{\eps \wmin}\rpar \cdot k \maxVar^2 \log n \rpar$.

Thus, $\sum_p \|\Pi (p-f(p))\|^2 = O\lpar n \cdot \log^2 \lpar \frac{1}{\eps \wmin}\rpar \cdot k \maxVar^2 \log n \rpar$.
  
Since all points in $F$ have an estimated that satisfies $\left|\hat w_q - |f^{-1}(q)|\right| \leq \frac{|f^{-1}(q)|}{2}$, we can apply \Cref{lem:coreset}: the solution computed by the above algorithm on the dataset $\Pi F$ with weights $\hat w_q$ has cost at most $O\lpar n \cdot \log^2 \lpar \frac{1}{\eps \wmin}\rpar \cdot k \maxVar^2 \log n \rpar + O(\opt(\Pi P)) = O\lpar n \cdot \log^2 \lpar \frac{1}{\eps \wmin}\rpar \cdot k \maxVar^2 \log n \rpar$.

 This concludes the proof of \Cref{lem:goodClustering}.

 \section{Part 2: Improving iteratively the solution}\label{app:step2}

 Our global algorithm is described in \Cref{alg:theory}: first, we use \Cref{lem:reduceDiam} to reduce the diameter of the input; then, we compute a good initial solution using \Cref{lem:goodClustering}. Then, we implement privately Part 2 and Part 3 of \Cref{alg:notFed}, using private mean estimation. We note that this algorithm, when assumptions (1) and (2) are satisfied, is exactly \Cref{alg:main} followed with \Cref{alg:fdp-lloyds}. Hence, \Cref{thm:mainPractice} follows directly from the proof of \Cref{thm:mainTheory}.
 
	\begin{algorithm}
		\caption{Cluster} \label{alg:theory}
		\begin{algorithmic}[1]
            \STATE \textbf{Input: } Server data $\hint$, client datasets $P^1, ..., P^m$, and privacy parameters $\eps, \delta$
            \STATE Process the input to reduce the diameter to $\Delta$ using \Cref{lem:reduceDiam}, with privacy parameter $\eps/4$.
            \STATE In one round of communication, compute a $O(\sqrt{d} \Delta \cdot \sigma(\eps/4, \delta))$-almost $k$-PCA using \Cref{thm:pca}.
			\STATE \textbf{Part 1:} find initial centers $\nu_1^{(1)}, ..., \nu_k^{(1)}$ using \Cref{lem:goodClustering}, with privacy parameter $\eps/4$ 
			\STATE \textbf{Part 2: } 
            \begin{enumerate}[label=\alph*)]
				\item Server sends $\nu_1^{(1)},..,\nu_k^{(1)}$ to clients, and client $c$ computes $S_r^c := \lbrace P_i \in P^c: \forall s, \|\hat P_i - \nu_r\| \leq \frac{1}{3}\|\hat P_i - \nu_s\|\rbrace$.
				\item Server receives, for all cluster $r$, $\nu_r^{(2)} := \frac{1}{\sum_{\text{client }c} |S_r^c| + \lap(T/\eps)} \lpar \sum_{\text{client }c} \sum_{P_i \in S_r^c}P_i + \calN_d\lpar 0,  \frac{2T^2\Delta \log(2T/\delta)}{\eps^2}\rpar\rpar$
			\end{enumerate}
			
			\STATE \textbf{Part 3: } Repeat Lloyd's steps for $T$ steps, with privacy parameter $(\eps/T, \delta/T)$:
               \begin{enumerate}%
				\item Server sends $\nu_1^{(t)},..,\nu_k^{(t)}$ to clients, and client $c$ computes $S_r^c := \lbrace P_i \in P^c: \forall s, \|\hat P_i - \nu_r\| \leq\|\hat P_i - \nu_s\|\rbrace$.
				\item Server receives, for all cluster $r$, $\nu_r^{(t+1)} := \frac{1}{\sum_{\text{client }c} |S_r^c| + \lap(T/\eps)} \lpar \sum_{\text{client }c} \sum_{P_i \in S_r^c}P_i + \calN_d\lpar 0,  \frac{2T^2\Delta \log(2T/\delta)}{\eps^2}\rpar \rpar$
			\end{enumerate}
		\end{algorithmic}
	\end{algorithm}

	Given the mapping of \Cref{fact:goodMatching}, the main result of \citet{AwasthiS12} is that step 2 of the algorithm computes centers that are very close to the $\mu_i$:\footnote{Note that the original theorem of \citet{AwasthiS12} is stated slightly differently: however, their proof only requires \Cref{fact:costProj} and the matching provided by \Cref{fact:goodMatching}, and we modified the statement to fit our purposes.}

	\begin{theorem}[Theorem 4.1 in \citet{AwasthiS12}]\label{thm:step2}
		Suppose that the solution $\nu_1, ..., \nu_k$ is as in \Cref{fact:goodMatching}, namely, for each $\mu_i$, it holds that $\|\mu_i - \nu_i\| \leq 6\Delta_i$. Denote $S_i = \{j : \forall r \neq i, \|\Pi P_j - \nu_i \| \leq \frac{1}{3}\|\Pi P_j - \nu_r\|\}$. 
		Then, for every $i \in [k]$ it holds that 
  $$\|\mu(S_i)- \mu_i\| = O\lpar \frac{1}{c\sqrt{|G_i|}} \cdot \|P-C\|_2\rpar,$$ 
  where $c$ is the separation constant from \Cref{def:sep}.
	\end{theorem}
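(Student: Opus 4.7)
The plan is to relate $\mu(S_i)$ to $\mu_i$ by carefully controlling the symmetric difference $\Pi S_i \triangle \Pi G_i$. First I would write
\[
|S_i|\,(\mu(S_i)-\mu_i) \;=\; \sum_{j\in G_i}(P_j-\mu_i) \;-\!\!\sum_{j\in G_i\setminus S_i}\!\!(P_j-\mu_i) \;+\!\!\sum_{j\in S_i\setminus G_i}\!\!(P_j-\mu_i),
\]
and also $|S_i|-|G_i|=|S_i\setminus G_i|-|G_i\setminus S_i|$, so the task reduces to (a) bounding the ``honest'' error $\|\sum_{j\in G_i}(P_j-\mu_i)\|$, which equals $|G_i|\,\|\mu(G_i)-\mu_i\|$, and (b) bounding the contribution of the symmetric difference, both in cardinality and in geometric displacement.

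The key structural step is to show that every point in $G_i\setminus S_i$ or in $S_j\setminus G_j$ (for any $j$) is ``projection-bad'', meaning $\|\Pi P_\ell-\Pi\mu_{i(\ell)}\|$ is $\Omega(c\Delta_{i(\ell)})$, where $i(\ell)$ is the true cluster of the sample $P_\ell$. This follows from the hypothesis that $\|\mu_i-\nu_i\|\le 6\Delta_i$ combined with the separation condition $\|\mu_i-\mu_j\|\ge c(\Delta_i+\Delta_j)$: a point $p\in G_i$ whose projection is within, say, $c\Delta_i/100$ of $\Pi\mu_i$ automatically satisfies $\|\Pi p-\nu_i\|\le 7\Delta_i$ while $\|\Pi p-\nu_r\|\ge (c-7)\Delta_i-6\Delta_r$, and the $\frac13$-factor in the definition of $S_i$ is exactly what permits the triangle inequality to go through for a suitably large separation constant $c$. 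The symmetric statement handles points mistakenly captured by $S_i$ from another cluster.

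Given this, the number and total length of the bad points can be charged against $\|\Pi P-C\|_F^2$ via Markov's inequality: the number of points in cluster $i$ with projected deviation at least $\alpha\Delta_i$ is at most $\|\Pi P-C\|_F^2/(\alpha\Delta_i)^2$, and the sum of their projected deviations is at most $\sqrt{|\text{bad}|}\cdot\|\Pi P-C\|_F$ by Cauchy--Schwarz. Invoking \Cref{fact:costProj} replaces $\|\Pi P-C\|_F^2$ by $O(k\|P-C\|_2^2)$, which by the choice $\Delta_i^2=\Theta(k\|P-C\|_2^2/|G_i|)$ makes the symmetric difference of size at most $O(|G_i|/c^2)$ and $|S_i|\ge |G_i|/2$. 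Finally, the displacement in the original space induced by swapping these points with points from $G_i$ is controlled by $\|P-C\|_2$: any displacement in direction orthogonal to the $k$-dimensional span is dominated by $\|P-C\|_2/\sqrt{|G_i|}$ via the spectral bound on $P-C$, while the in-span displacement inherits the $\Pi$-bounds above. Combining the statistical term $\|\mu(G_i)-\mu_i\|=O(\|P-C\|_2/\sqrt{|G_i|})$ with the symmetric-difference term yields $\|\mu(S_i)-\mu_i\|=O\bigl(\|P-C\|_2/(c\sqrt{|G_i|})\bigr)$.

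The main obstacle will be the clean accounting in the last step: the displacement from swapped points is not obviously bounded by $\|P-C\|_2/\sqrt{|G_i|}$ in the full ambient space, since misclassified points can have large coordinates outside the span of $\Pi$. The Awasthi--Sheffet argument handles this by splitting $P_j-\mu_i$ into its in-span and orthogonal components and bounding the orthogonal sum by applying the spectral-norm bound on $P-C$ to the indicator vector of the swapped set, whose Euclidean norm is $\sqrt{|S_i\triangle G_i|}\le O(\sqrt{|G_i|}/c)$; this is where the factor $1/c$ in the final bound arises.
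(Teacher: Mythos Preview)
The paper does not give its own proof of this statement: it is imported verbatim (up to a restatement, see the footnote attached to the theorem) as Theorem~4.1 of \citet{AwasthiS12}, and the paper simply observes that the original proof goes through once \Cref{fact:costProj} and the matching of \Cref{fact:goodMatching} are in place. So there is no in-paper argument to compare against; your proposal is effectively a reconstruction of the Awasthi--Sheffet proof.

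As such a reconstruction, your outline is on the right track and matches the structure of the original argument: decompose $\mu(S_i)-\mu_i$ via the symmetric difference $S_i\triangle G_i$, show every misclassified point is ``projection-bad'' using the $\tfrac13$-rule together with $\|\mu_i-\nu_i\|\le 6\Delta_i$ and the separation, then count bad points by Markov against $\|\Pi P-C\|_F^2$ and convert via \Cref{fact:costProj}. The one place where your sketch is slightly loose is the cross-cluster contribution $\sum_{j\in S_i\cap G_\ell}(P_j-\mu_i)$ for $\ell\neq i$: applying the spectral bound on $P-C$ to the indicator of the swapped set controls $\sum_j(P_j-\mu_{i(j)})$, not $\sum_j(P_j-\mu_i)$, so you still owe a bound on $|S_i\cap G_\ell|\cdot\|\mu_\ell-\mu_i\|$. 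In the Awasthi--Sheffet argument this is handled by observing that a point of $G_\ell$ captured by $S_i$ must have projected deviation $\Omega(\|\mu_i-\mu_\ell\|)$ from $\Pi\mu_\ell$, so $|S_i\cap G_\ell|\lesssim \|\Pi P-C\|_F^2/\|\mu_i-\mu_\ell\|^2$ and the product $|S_i\cap G_\ell|\cdot\|\mu_i-\mu_\ell\|$ is again $O(k\|P-C\|_2^2/(c\Delta_i))$ after summing over $\ell$. Once that term is accounted for, your sketch is complete and coincides with the cited proof.
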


 Finally, the next result from \citet{KumarK10} shows that the Lloyd's steps converge towards the true means:
	\begin{theorem}[theorem 5.5 in \citet{KumarK10}]\label{thm:step3}
		If, for all $i$ and a parameter $\gamma \leq ck/50$,
		\[\|\mu_i - \nu_i \| \leq \frac{\gamma \|P-C\|_2}{\sqrt{|G_i|}},\]
		 then 
		\[\|\mu_i - \mu(C(\nu_i))\| \leq \frac{\gamma \|P-C\|}{2\sqrt{|G_i|}},\]
  where $C(\nu_i)$ is the set of points closer to $\nu_i$ than to any other $\nu_j$. 
	\end{theorem}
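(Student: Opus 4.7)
The plan is to decompose the Lloyd cluster $C(\nu_i)$ into correctly and incorrectly assigned points, and show that the contribution of the misclassified points to the empirical mean is small, thanks to the separation condition and the spectral bound on $P-C$. Concretely I would write
\[
|C(\nu_i)|\,\mu(C(\nu_i)) \;=\; \sum_{p\in G_i} p \;-\; \sum_{p\in G_i\setminus C(\nu_i)} p \;+\; \sum_{p\in C(\nu_i)\setminus G_i} p,
\]
and study each of the three sums relative to $|G_i|\mu_i$. The first piece is the empirical mean of the ground truth cluster: since $\mu(G_i) - \mu_i$ is obtained by averaging a column of $P - C$ over $|G_i|$ samples, the spectral bound $\|P-C\|_2$ controls it by $\|\mu(G_i)-\mu_i\| \le \|P-C\|_2/\sqrt{|G_i|}$.

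The heart of the argument is the analysis of the misclassified sets $M_i^{\text{out}} := G_i\setminus C(\nu_i)$ and $M_i^{\text{in}} := C(\nu_i)\setminus G_i$. For a point $p\in G_j$ that is reassigned to $\nu_i$, the Voronoi condition $\|p-\nu_i\|\le \|p-\nu_j\|$ combined with the hypothesis $\|\nu_\ell - \mu_\ell\|\le \gamma\|P-C\|_2/\sqrt{|G_\ell|}$ and the separation $\|\mu_i-\mu_j\|\ge c\sqrt{k/w_j}\,\sigma_{\max}\log n$ implies that $p-\mu_j$ has an abnormally large component in the direction $u_{ij}:=(\mu_i-\mu_j)/\|\mu_i-\mu_j\|$, of magnitude at least $\tfrac12\|\mu_i-\mu_j\|$ minus a term of order $\gamma\|P-C\|_2/\sqrt{|G_j|}$. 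That is, every misclassified point must be a ``tail event'' of its true Gaussian along a fixed direction identified by the pair of current centers.

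I would then sum this pointwise inequality over $p \in M_i^{\text{out}}\cup M_i^{\text{in}}$ and relate it to $\|P-C\|_2$. Writing the sum as $\langle u_{ij},\sum_{p}(p-\mu_j)\rangle \le \|P-C\|_2\sqrt{|M_i^{\text{out}}\cup M_i^{\text{in}}|}$ via Cauchy--Schwarz applied to the appropriate column selector of $P-C$, I obtain a bound on the number of misclassifications of the form $O(\gamma^2\|P-C\|_2^2 / \|\mu_i-\mu_j\|^2)$, which using the separation is at most $O(\gamma^2 w_j/(c^2\log^2 n)) \cdot |G_j|$. A second application of the spectral bound, this time vectorially, lets me control the \emph{sum} $\sum_{p\in M_i^{\text{in}}\cup M_i^{\text{out}}} (p-\mu_{j(p)})$ by $\|P-C\|_2 \cdot \sqrt{\text{\# misclassifications}}$, which collapses into $O(\gamma^2/c)\cdot\sqrt{|G_i|}\,\|P-C\|_2$ after using the separation once more.

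The principal obstacle will be the bookkeeping required to collapse the two spectral uses into the sharp $1/2$ contraction factor. The Cauchy--Schwarz bound must be applied at the right level of granularity: applying it cluster-pair by cluster-pair and summing over $j\neq i$ will produce a prefactor of $k$ in the wrong place, so the argument needs to exploit that distinct misclassifications correspond to disjoint columns of $P-C$, allowing a single global application of $\|P-C\|_2$. Dividing the total by $|C(\nu_i)| = |G_i|(1+o(1))$ (justified by $\gamma\le ck/50$ and the separation), one arrives at $\|\mu(C(\nu_i))-\mu_i\| \le \frac{\gamma}{2}\cdot\frac{\|P-C\|_2}{\sqrt{|G_i|}}$, provided the universal constants in the separation definition absorb the numerical factors produced by the two Cauchy--Schwarz steps, which is exactly what the constraint $\gamma\le ck/50$ enforces.
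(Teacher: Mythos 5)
The paper does not actually prove this statement: it is imported verbatim as Theorem 5.5 of \citet{KumarK10} and used as a black box, so the only meaningful comparison is against the original Kumar--Kannan argument. Your overall structure --- decompose $C(\nu_i)$ into $G_i$ minus outgoing plus incoming misclassifications, use the Voronoi condition plus separation to show each misclassified point has an abnormally large component along $\mu_i-\mu_j$, count misclassifications via Cauchy--Schwarz against $\|P-C\|_2^2$, then apply the spectral norm a second time to the vector sum over a disjoint column selector --- is indeed the skeleton of their proof, so the strategy is the right one.

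There are two concrete gaps. First, your opening term: you bound $\|\mu(G_i)-\mu_i\|\le\|P-C\|_2/\sqrt{|G_i|}$ and move on, but this single term already exceeds the claimed output bound $\tfrac{\gamma}{2}\|P-C\|_2/\sqrt{|G_i|}$ whenever $\gamma<2$, so the literal halving cannot follow from your decomposition for small $\gamma$. In Kumar--Kannan this term is identically zero because their matrix $C$ stacks the \emph{empirical} cluster means, so $\sum_{p\in G_i}(p-\mu(G_i))=0$; the contraction is toward the empirical centers, and when restated for distributional means (as this paper uses $\mu_i$) it carries an additive floor. Your proof as written needs one of those two fixes, and the issue propagates to the final division step. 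Second, the factor $\gamma^2$ you claim in the misclassification count $O(\gamma^2\|P-C\|_2^2/\|\mu_i-\mu_j\|^2)$ does not follow from the argument you sketch: the Voronoi condition forces $\langle p-\mu_j,u_{ij}\rangle\ge\tfrac12\|\mu_i-\mu_j\|-O(\gamma\|P-C\|_2/\sqrt{|G_j|})$, and under $\gamma\le ck/50$ the subtracted term is dominated, leaving a threshold of order $\|\mu_i-\mu_j\|$ and hence a count of order $\|P-C\|_2^2/\|\mu_i-\mu_j\|^2$ with no $\gamma$-dependence. Injecting the factor of $\gamma$ into the final bound --- which is exactly what turns the step into a genuine contraction $\gamma\mapsto\gamma/2$ rather than convergence to a fixed-radius ball --- is the delicate core of Kumar--Kannan's Theorems 5.4--5.5, and it is precisely the part your sketch defers to ``bookkeeping.'' As it stands, the argument establishes convergence to a ball of radius $O(\|P-C\|_2/\sqrt{|G_i|})$ around $\mu_i$, not the stated geometric halving.
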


This allows us to conclude the accuracy proof of \Cref{thm:mainTheory}
\begin{proof}[Proof of \Cref{thm:mainTheory}]

The algorithm is $(\eps, \delta)$-DP: each of the 4 steps step -- reducing the diameter, computing a PCA, finding a good initial solution and running $T$ Lloyd's steps -- is $(\eps/4, \delta/4)$-DP, and private composition concludes.

The first three steps require a total of $2 + 10\log \frac{4 \log |\hint|}{\eps \wmin}$ many rounds of communication, the last one requires $T + \log \frac{\maxVar^2}{\wmin}$ rounds. This simplifies to  $T + \zeta_2 \log \frac{\maxVar \log |\hint|}{\eps \wmin}$, for some constant $\zeta_2$.

The first step reduces the diameter to $\Delta = O\lpar \frac{k \log^2 n \sqrt{d} \maxVar}{\eps \wmin}\rpar$; therefore, \Cref{lem:goodClustering} combined with \Cref{fact:goodMatching} ensures that $\nu_1^{(1)}, ..., \nu_k^{(1)}$ satisfies the condition of \Cref{thm:step2}. In addition, Lemma 4.2 of \citet{AwasthiS12} ensures that the size of each cluster $|S_r|$ is at least $\frac{|G_i|}{2}$ at every time step. 

 Therefore, the private noise $\frac{\calN_d\lpar\Delta^2 \sigma^2(\eps', \delta')\rpar}{|S_r^c|}$ is bounded with high probability by $\eta := O\lpar \frac{\Delta \sqrt{d} \sigma(\eps/T, \delta/T)}{|S_r^c|}\rpar = O\lpar \frac{k d T \log^2 n \maxVar \cdot \sqrt{\ln(1/\delta)}}{n \eps^2 \wmin^2}\rpar$, which for and $n = \Omega\lpar \frac{k d T \log^2 n  \cdot \sqrt{\ln(1/\delta)}}{{\eps}^2 \wmin^2}\rpar $  is smaller than $\Delta_i = \frac{\maxVar}{\sqrt{w_i}} \min \lpar \sqrt{k} \polylog(d/\wmin), d\rpar $. 
 
 Hence, the conditions of \Cref{thm:step2} and \Cref{thm:step3} are still satisfied after adding noise, and the latter implies that  the noisy Lloyd steps converge exponentially fast towards $B(\mu_i, \eta)$.

More precisely, it holds with probability $1-1/k^2$ that $\left\|\mu_i - \nu_i^{T+\log \frac{\maxVar^2}{\wmin}}\right\| = O\lpar \frac{1}{c 2^{T + \log \frac{\maxVar^2}{\wmin}}}\rpar \cdot \frac{\|P-C\|_2}{\sqrt{|G_i|}} + \eta$. 

From \Cref{lem:spectralNormCluster} ensures $\|P-C\| \leq O(\sqrt{n} \maxVar)$. Since $|G_i| \geq n\wmin / 2$, the first term is at most $O\lpar \frac{1}{2^T}\rpar$.

Therefore,
\[\left\|\mu_i - \nu_i^{T+\log \frac{\maxVar^2}{\wmin}}\right\| = O\lpar \max\lpar \frac{1}{2^T}, \frac{k d T \log^2 n \maxVar \sqrt{\ln(T/\delta)}}{n {\eps}^2 \wmin^2}\rpar \rpar.\]
\end{proof}

\section{Experiment Details}\label{app:experiments}

\subsection{Dataset details}\label{app:datasets}

\paragraph{Mixture of Gaussians Datasets} 
We generate a mixture of Gaussians in the following way.
We set the data dimension to $d=100$ and we generate $k=10$ mixtures by uniformly randomly sampling $k$ means 
$\{\mu_1, \dots \mu_k\}$ from $[0, 1]^d$. Each mixture has diagonal covariance matrix $\Sigma_i = 0.5 I_d$ 
and equal mixture weights $w_i = 1/k$. 
The server data is generated by combining samples from the true mixture distribution together with additional 
data sampled uniformly randomly from $[0, 1]^d$ representing related but out-of-distribution data. We sample $20$ points from
each mixture component, for a total of $20\times k = 200$ in distribution points and sample an additional 100 uniform points.
For Section \ref{subsec:data_dp_experiments} we simulate a cross-silo setting with 100 clients, with each client having 1000 datapoints sampled i.i.d from the Gaussian mixture.
For Section \ref{subsec:client_dp_experiments} we simulate a cross-device setting with 1000, 2000 and 5000 clients, each client having $50$ points i.i.d sampled from the Gaussian mixture distribution. 
The server data is identical in both cases.

\paragraph{US Census Datasets}
We create individual datapoints coming from the ACSIncome task in folktables. 
Thus each datapoint consists of $d=819$ binary features describing an individual in the census, including details
such as employment type, sex, race etc.
In order to create a realistic server dataset (of related but not not in-distribution data) we filter the client
datasets to contain only individuals of a given employment type. 
The server then receives a small amount (20) of datapoints with the chosen employment type, and a larger amount 
(1000) of datapoints sampled i.i.d from the set of individuals with a different employment type. 
We do this for 3 different employment types, namely ``Employee of a private not-for-profit, tax-exempt, or charitable organization", ``Federal government employee" and ``Self-employed in own not incorporated business, professional practice, or farm".
These give us three different federated datasets, each with 51 clients, with total dataset sizes of 127491, 44720 and 98475 points respectively. 

\paragraph{Stack Overflow Datasets}
Each client in the dataset is a stackoverflow user, with the 
data of each user being the questions they posted. Each question also has a number of tags associated with
it, describing the broad topic area under which the question falls.
We first preprocess the user questions by embedding them using a pre-trained sentence 
embedding model \citep{reimers-2019-sentence-bert}.
Thus a user datapoint is now a $d=384$ text embedding.
Now we again wish to create a scenario where the server can receive related but out of distribution data.
We follow a similar approach to the creation of the US census datasets.
We select two tag topics and filter our clients to consist of only those users that have at 
least one question that was tagged with one of the selected topics. For those clients
we retain only the questions tagged with one of the chosen topics.
The server then receives 1000 randomly sampled questions with topic tags that do not overlap with 
the selected client tags as well as 20 questions with the selected tags, 10 of each one. 
For our experiments we use the following topic tag pairs to create clients 
[(machine-learning, math), (github, pdf), (facebook, hibernate), (plotting, cookies)]. 
These result in federated clustering problems with $[10394, 9237, 23266, 2720]$ clients respectively.

\subsection{Verifying our assumptions}\label{app:verifying_assumptions}
On each of the datasets used in our data-point-level experiments
we compute the radius of the dataset $\Delta$, shown in Table \ref{tab:deltas}.

\begin{table}[h!]
\centering
\begin{tabular}{|l|c|c|}
\hline
\textbf{Dataset} & $\Delta$  \\ \hline
 Gaussian Mixture (100 clients)              & 10.57   \\ \hline 
 US Census (Not-for-profit Employees)        & 2.65  \\ \hline 
 US Census (Federal Employees)               & 2.65 \\ \hline 
 US Census (Self-Employed)                   & 2.65   \\ \hline 
\end{tabular}
\caption{Radius of each dataset.}
\label{tab:deltas}
\end{table}

Assumption (1) requires $\Delta = O\lpar \frac{k \log^2(n) \maxVar \sqrt{d}}{\eps \wmin} \rpar$. For the Gaussian mixture, $k=10, d=100, \wmin = 1/10, n=10^6$ and $\maxVar = 0.5$: thus $\Delta$ clearly satisfies the condition. 

For the US Census datasets, $k=10, d=819, n \in \{127491, 44720, 98475\}$. As we cannot estimate $\maxVar$ and $\wmin$(since the dataset is not Gaussian), we use an upper-bound $\wmin = 1$, and replace $\maxVar$ with a proxy based on the optimal $k$-means cost, $\sqrt{\opt/n}$: this is a priori a large upper-bound on the value of $\maxVar$, but it still gives an indication on the geometry of each cluster. 
As can be seen in \Cref{fig:data-level}, \Cref{fig:ft_2}, the average optimal cost is about $3.5$ : thus, $\sqrt{\opt / n} \approx 1.87$, and we estimate $\frac{k \log^2(n) \maxVar \sqrt{d}}{\eps \wmin} \approx \frac{10 \cdot \log^2(10^5) \cdot 0.005 \cdot \sqrt{819}}{0.5} \approx 123000$. This indicates that Condition (1) is satisfied as well for this dataset.

Assumption $2$ requires that the size of the server data is not too large: $|\hint| \leq \frac{\eps n k \log(n) \maxVar^2}{\Delta^2}$. In the Gaussian case, we have $|\hint| = 300$, and the right-hand-side is about $29000$.

In the US Census Dataset, we again upper-bound $\maxVar^2 = \frac{\opt}{n}$. In that case, the right-hand-side is about $620000$, while there are $1020$ server point. Although our estimate of $\maxVar$ is only an upper-bound, this indicates that assumption (2) is also satisfied.

\subsection{Baseline implementation details}\label{app:baselines}
\paragraph{SpherePacking} We implement the data independent initialization described in 
\citet{sphere_packing_init} as follows. We estimate the data radius $\Delta$ using the 
server dataset. We set $a=\Delta \sqrt{d}$, for $i=1, \dots, k$, we randomly sample 
a center $\nu_i$ in $[-\Delta, \Delta]^d$. If $\nu_i$ is at least distance $a$ from 
the corners of the hypercube $[-\Delta, \Delta]^d$ and at least distance $2a$ away from all
previously sampled centers $\nu_1, \dots, \nu_{i-1}$, then we keep it. If not we resample
$\nu_i$. We allow 1000 attempts to sample $\nu_i$, if we succeed with sampling all $k$ centers
then we call the given $a$ feasible. If not then $a$ is infeasible. We find the largest feasible $a$
by binary search and use the corresponding centers as the initialization.

\subsection{Adapting \acronym to client-level privacy} \label{app:user_level_dp}
As discussed in Section \ref{subsec:client_dp_experiments}, moving to client-level DP changes the sensitivities
of the algorithm steps that use client data. 
To calibrate the noise correctly we enforce the sensitivity of each step by clipping the quantities 
sent by each client to the server, prior to them being aggregated.

Concretely, suppose $v_j$ is a vector quantity owned by client $j$, and the server wishes to compute the
aggregate $v = \sum_{j}v_j$. Then prior to aggregation the client vector is clipped to have maximum norm
$B$ so that
\[
    \hat{v}_j= 
\begin{cases}
    \frac{B}{\|v_j\|}v_j,& \text{if } \|v_j\|> B\\
    v_j,              & \text{otherwise.}
\end{cases}
\]

The aggregate is then computed as $\hat{v} = \sum_j \hat{v}_j$. This query now has sensitivity $B$,
and noise can be added accordingly. 
Each step of our algorithms can be expressed as such an aggregation over client statistics, the value of 
$B$ for each step becomes a hyperparameter of the algorithm.

We make one additional modification to Step 3 of \acroinit to make it better suited to the client-level DP setting.
In Algorithm \ref{alg:main} during Step 3 the clients compute the sum $m_r^j$ and count $n_r^j$ of the vectors in each cluster 
$S_r^j$. 
Rather than send these to the server to be aggregated the client instead computes their cluster means locally as 
\[
    u_r^j =  
\begin{cases}
    \frac{m_r^j}{n_r^j},& \text{if } n_r^j> 0\\
    0,              & \text{otherwise,}
\end{cases}
\]
as well as a histogram counting how many non-empty clusters the client has:
\[
    c_r^j =  
\begin{cases}
    1,& \text{if } n_r^j> 0\\
    0,              & \text{otherwise.}
\end{cases}
\]
The server then receives the noised aggregates $\noisy{u_r}$ and $\noisy{c_r}$ and computes the initial
cluster centers as $\nu_r = \noisy{u_r} / \noisy{c_r}$.
In other words we use a mean of the means estimate of the true cluster mean.

\begin{figure}
    \centering
    \includegraphics[scale=0.55]{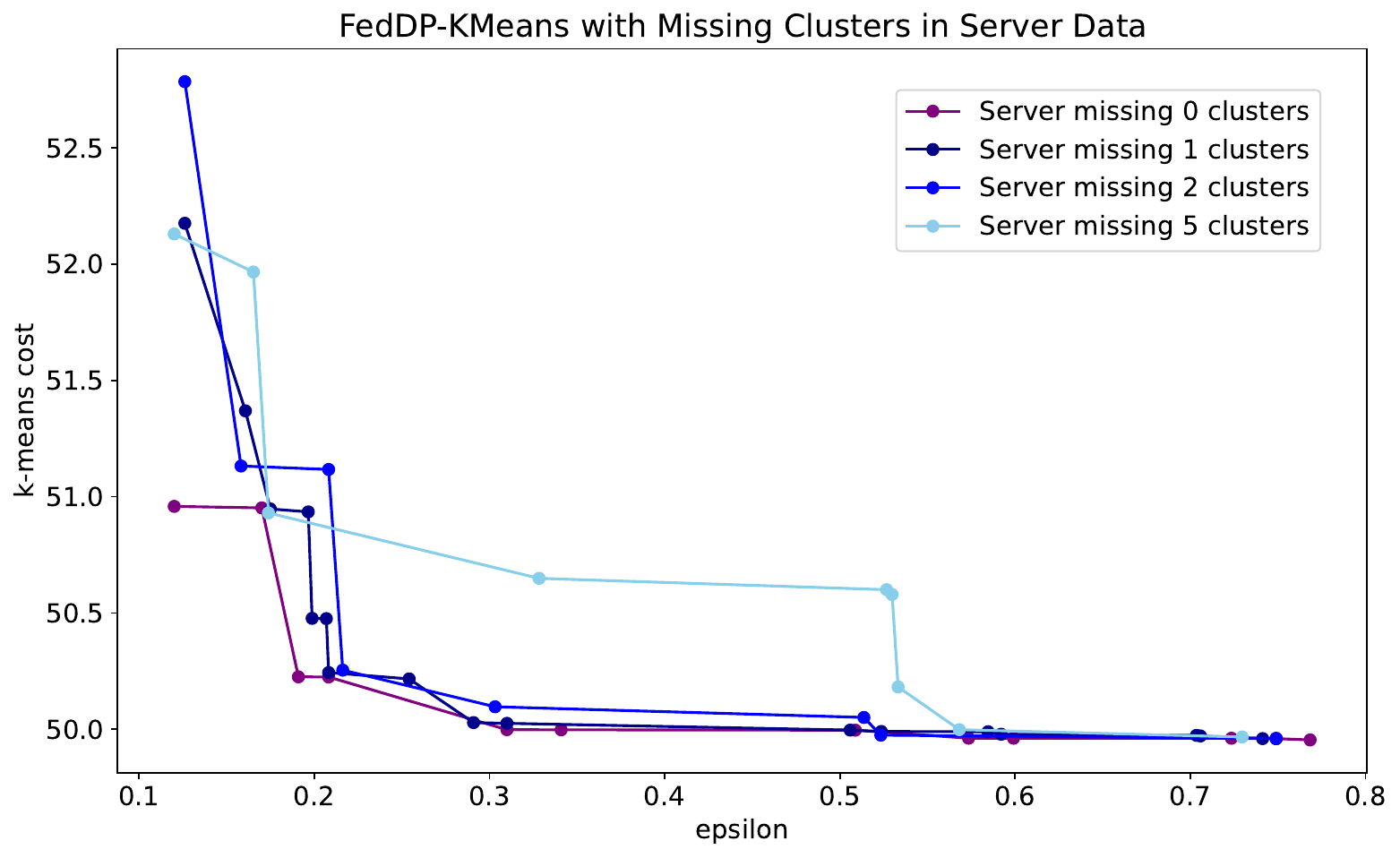}
    \caption{Mixture of Gaussians data with $k=10$ clusters and 100 clients. Performance of FedDP-KMeans when the server is missing 0, 1, 2 and 5 of the 10 total clusters.}
    \label{fig:missing_ours_only}
\end{figure}

\begin{figure}
    \centering
    \includegraphics[scale=0.65]{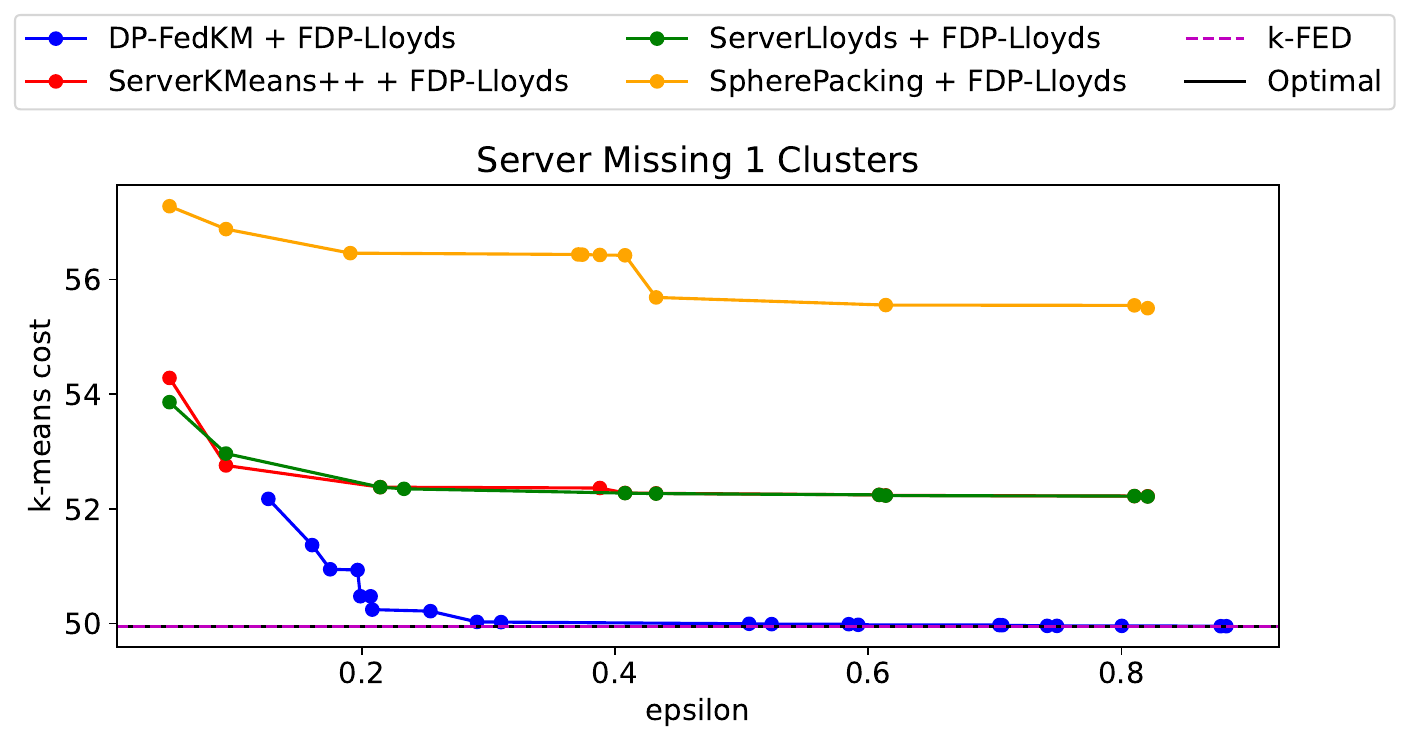}
    \caption{Mixture of Gaussians data with 100 clients. 
    Server missing 1 of the $k=10$ clusters.}
    \label{fig:missing_1}
\end{figure}

\begin{figure}
    \centering
    \includegraphics[scale=0.65]{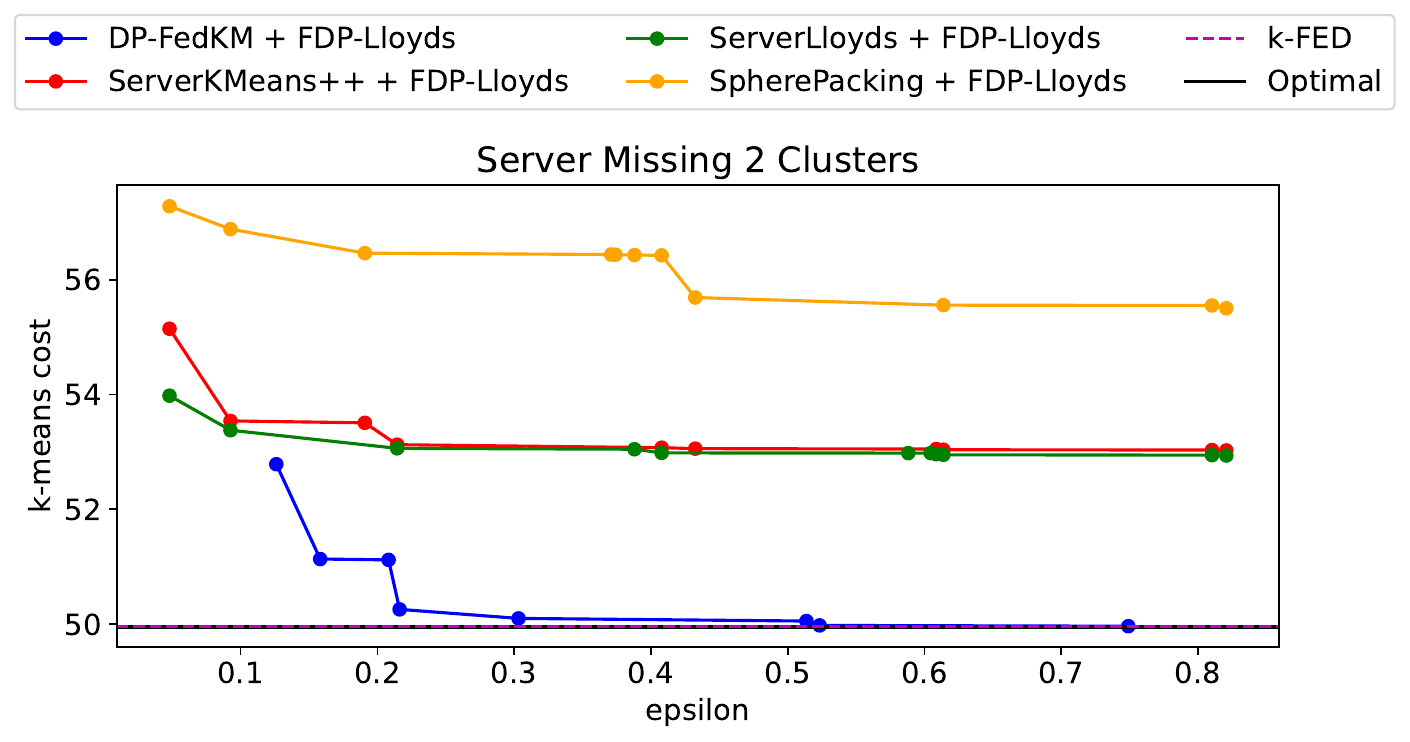}
    \caption{Mixture of Gaussians data with 100 clients. 
    Server missing 2 of the $k=10$ clusters.}
    \label{fig:missing_2}
\end{figure}

\begin{figure}
    \centering
    \includegraphics[scale=0.65]{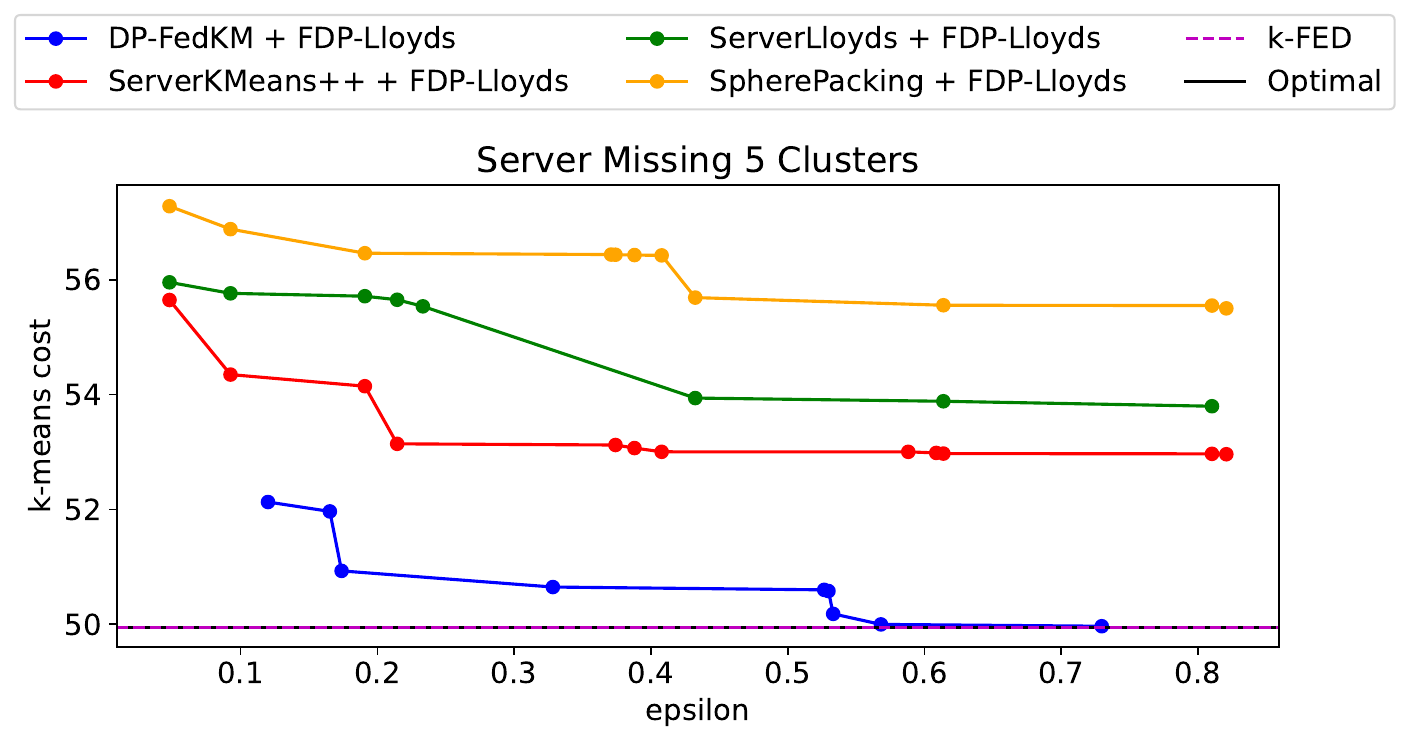}
    \caption{Mixture of Gaussians data with 100 clients. 
    Server missing 5 of the $k=10$ clusters.}
    \label{fig:missing_5}
\end{figure}

\section{Additional Experiments}\label{app:ablations}

\subsection{Setting hyperparameters of \acronym}\label{app:choosing_hp}
In this section we analyze the hyperparameter settings of \acronym 
that produced the Pareto optimal results shown in the figures in Sections \ref{subsec:data_dp_experiments}
and \ref{subsec:client_dp_experiments}.
These analyses give us some insights on the optimal ways to set the hyperparameters when using 
\acronym in practice.

\paragraph{Distributing the privacy budget}
The most important parameters to set are the values of 
epsilon in Parts 1-3 of Algorithm \ref{alg:main}. Here we discuss how to set these.

Let $\eps_1$, $\eps_2$, $\eps_{3\text{G}}$ and $\eps_{3\text{L}}$ denote the epsilon we allow
for part 1, part 2, the Gaussian query in part 3 and the Laplace query in part 3 respectively.
We let $\eps_\text{init} = \eps_1 + \eps_2 + \eps_{3\text{G}} + \eps_{3\text{L}}$.
By strong composition the initialization will have a lower overall budget than $\eps_\text{init}$, however,
it serves as a useful proxy to the overall budget as we can think of what proportion of $\eps_\text{init}$
we are assigning to each step.

Shown in Tables \ref{tab:privacy_budgets_datapoint_level} and \ref{tab:privacy_budgets_client_level} are the 
values from our experiments.
Specifically, for each dataset we take the mean across the Pareto optimal results that we plotted 
of the $\eps$ values used for each step.
We then express this as a fraction of $\eps_{\text{init}}$.
Loosely speaking, we interpret these values as answering ``What fraction of our overall privacy budget 
should we assign to each step?"

The results paint a consistent picture when comparing values with the same unit-level of privacy 
with slight differences between the two levels.
For datapoint level privacy, clearly the most important step in terms of assigning budget
is to the Gaussian mechanism in Step 3 with the other steps being roughly even in term of 
importance.
Therefore, as a rule of thumb we would recommend assigning budget using the following approximate proportions
$[0.2, 0.2, 0.45, 0.15]$.
For user level privacy we observe the same level of importance being placed on the Gaussian mechanism in Step 3
but additionally on the Gaussian mechanism in Step 1. 
Based on these results we would assign budget following approximate proportions
$[0.35, 0.1, 0.45, 0.1]$.
Clearly these are recommendations based only on the datasets we have experimented with and the optimal 
settings will vary from dataset to dataset, most notably based on the number of clients
and the number of datapoints per client.

\begin{table}[h]
    \centering

\begin{tabular}{|l|c|c|c|c|}
\hline
\textbf{Dataset} & $\epsilon_1 / \epsilon_\text{init}$ & $\epsilon_2 /\epsilon_\text{init}$ & $\epsilon_{3\text{G}} / \epsilon_\text{init}$ & $\epsilon_{3\text{L}} / \epsilon_\text{init}$ \\
\hline
Gaussian Mixture (100 clients) & 0.18 & 0.23 & 0.43 & 0.17 \\
US Census (Not-for-profit Employees) & 0.24 & 0.17 & 0.41 & 0.18 \\
US Census (Federal Employees) & 0.15 & 0.16 & 0.52 & 0.17 \\
US Census (Self-Employed) & 0.20 & 0.23 & 0.47 & 0.10 \\
\hline
\end{tabular}

    \caption{Amount of privacy budget, as a fraction of $\eps_{\text{init}}$, that is assigned to 
    each step of \acroinit. Results shown are the mean of the Pareto optimal results plotted 
    for each of the data-point-level experiments in Figures \ref{fig:data-level}, \ref{fig:ft_2} and \ref{fig:ft_6}.}
    \label{tab:privacy_budgets_datapoint_level}
\end{table}

\begin{table}[h]
    \centering
    \begin{tabular}{|l|c|c|c|c|}
    \hline
    \textbf{Dataset} & $\eps_1 / \eps_\text{init}$ & $\eps_2/ \eps_\text{init}$ & $\eps_{3\text{G}}/ \eps_\text{init}$ & $\eps_{3\text{L}}/ \eps_\text{init}$ \\
    \hline
    Gaussian Mixture (1000 clients) & 0.38 & 0.09 & 0.42 & 0.10 \\
    Gaussian Mixture (2000 clients) & 0.43 & 0.10 & 0.36 & 0.11 \\
    Gaussian Mixture (5000 clients) & 0.43 & 0.09 & 0.37 & 0.11 \\
    Stack Overflow (facebook, hibernate) & 0.29 & 0.15 & 0.42 & 0.15 \\
    Stack Overflow (github, pdf) & 0.37 & 0.12 & 0.40 & 0.10 \\
    Stack Overflow (machine-learning, math) & 0.29 & 0.14 & 0.45 & 0.13 \\
    Stack Overflow (plotting, cookies) & 0.33 & 0.11 & 0.47 & 0.09 \\
    \hline
\end{tabular}

    \caption{Amount of privacy budget, as a fraction of $\eps_{\text{init}}$, that is assigned to 
    each step of \acroinit. Results shown are the mean of the Pareto optimal results plotted 
    for each of the client-level experiments in Figures \ref{fig:client-level}, \ref{fig:gm_1000}, \ref{fig:gm_5000}, \ref{fig:so_fh} and \ref{fig:so_pc}.}
    \label{tab:privacy_budgets_client_level}
\end{table}

\paragraph{Number of steps of \acrolloyds}
The other important parameter to set in \acronym is the number of steps of \acrolloyds to run 
following the initialization obtained by \acroinit.
As discussed already, this comes with the inherent trade-off of number of iterations vs accuracy of each
iteration.
For a fixed overall budget, if we run many iterations, then each iteration will have a lower privacy budget and will therefore be noisier.
Not only that, but in fact the question of whether we even want to run any iterations has the same trade-off.
If we run no iterations of \acrolloyds, then we use none of our privacy budget here, and we have more available
for \acroinit.
To investigate this we do the following: for each dataset we compute, for each number of steps $T$ of \acrolloyds, the fraction of the Pareto optimal runs that used $T$ steps.

\begin{table}[h!]
\centering
\begin{tabular}{|l|c|c|c|}
\hline
\textbf{Dataset} & \textbf{0 steps} & \textbf{1 step} & \textbf{2 steps} \\ \hline
 Gaussian Mixture (100 clients)              & 0.61 & 0.39 & 0    \\ \hline
 US Census (Not-for-profit Employees)              & 0.8  & 0.1  & 0.1  \\ \hline
 US Census (Federal Employees)              & 0.91 & 0.09 & 0    \\ \hline
 US Census (Self-Employed)              & 0.92 & 0.08 & 0    \\ \hline
\end{tabular}
\caption{Fraction of the Pareto optimal results that used a given number of steps of \acrolloyds for
the data-point-level experiments.}
\label{tab:step_data_level}
\end{table}

\begin{table}[h!]
\centering
\begin{tabular}{|l|c|c|c|}
\hline
\textbf{Dataset} & \textbf{0 steps} & \textbf{1 step} & \textbf{2 steps}  \\ \hline
Gaussian Mixture (1000 clients) & 0.86 & 0.11 & 0.04     \\ \hline
Gaussian Mixture (2000 clients)  & 0.8  & 0.17 & 0.03     \\ \hline
Gaussian Mixture (5000 clients)  & 0.81 & 0.1  & 0.1  \\ \hline
Stack Overflow (facebook, hibernate)         & 1.0  & 0    & 0       \\ \hline
Stack Overflow (github, pdf)      & 1.0  & 0    & 0        \\ \hline
Stack Overflow (machine-learning, math)       & 0.94 & 0.06 & 0        \\ \hline
Stack Overflow (plotting, cookies)      & 0.96 & 0.04 & 0        \\ \hline
\end{tabular}
\caption{Fraction of the Pareto optimal results that used a given number of steps of \acrolloyds for
the client-level experiments.}
\label{tab:step_client_level}
\end{table}

The results, shown in Tables \ref{tab:step_data_level} and \ref{tab:step_client_level}, are interesting.
In all but one dataset more than 80\% of the optimal runs used no steps of \acrolloyds, with many of the datasets being over 90\%.
The preference was to instead use all the budget for the initialization.
The reason for this is again the inherent trade-off between number of steps and accuracy of each step,
with it clearly here being the case that fewer more accurate steps were better.
One point to note here is that \acroinit essentially already has a step of Lloyds built into it,
Step 3 is nearly identical to a Lloyds step but with points assigned by distance in the projected
space.
Running this step once and to a higher degree of accuracy tended to outperform using more steps.
This in fact highlights the point made in our motivation, about the importance of finding an initialization
that is already very good, and does not require many follow up steps of Lloyds.

\subsection{Missing clusters in the server data}\label{subsec:missing_clusters}

In order for our theoretical guarantees to hold we required the assumption that the server data
include at least one point sampled from each of the components of the Gaussian mixture and
this assumption was reflected in the experimental setup of Sections \ref{subsec:data_dp_experiments}
and \ref{subsec:client_dp_experiments}.
This is, however, not a requirement for \acronym to run or work in practice.

To test this we run experiments in the setting that certain clusters are missing from the server data in
our Gaussian mixtures setting. 
Specifically, the server data is constructed by sampling from only a subset of the
$k=10$ Gaussian components of the true distribution.
Figure \ref{fig:missing_ours_only} shows the performance of \acronym
as we increase the number of clusters missing on the server.
As we can see performance deteriorates modestly as the number of clusters missing from the server dataset increases.
Figures \ref{fig:missing_1}-\ref{fig:missing_5} show that this also occurs in the other baselines that make use of the server data and that \acronym is still the best performing method in this scenario.

\subsection{Choosing $k$ using Weighted Server Data}\label{subsec:choosing_k}

\begin{figure}
    \centering
    \includegraphics[scale=0.7]{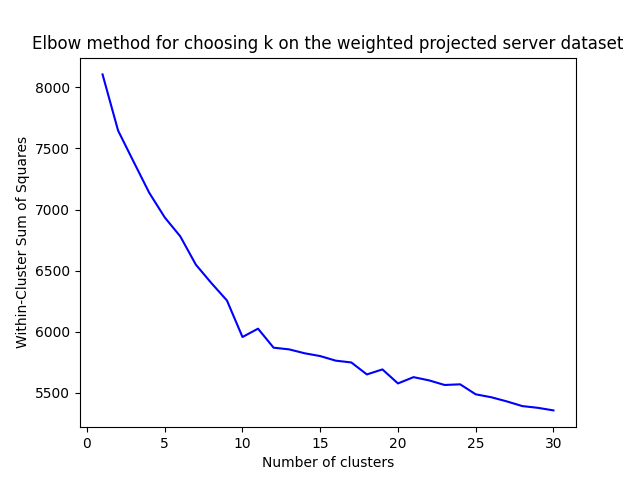}
    \caption{Plotting the Within-Cluster Sum of Squares (aka $k$-means cost), against the number of clusters, when clustering the weighted projected server data points. The true number of clusters in the data is $k=10$, the prior steps of FedDP-Init were run with $k' = 20$. The ``elbow'' of the curve indeed occurs at $k=10$.}
    \label{fig:choosing_k}
\end{figure}

While in practice, $k$-means is often used with a value of $k$ determined by external factors, 
such as computational or memory demands, $k$ can also be chosen based on the data at hand.
Existing methods can be incorporated into our setting quite simply, 
by using the method on the weighted and projected server dataset $\Pi Q$, with weights $\hat{w_q(\Pi P)}$. 
This dataset serves as a proxy for the client data and we can operate on it without incurring 
any additional privacy costs. We illustrate this using the popular elbow method. 
Concretely, we run lines 1-16 of Algorithm \ref{alg:main} using some large value $k'$, then we run line 17 for 
$k=1,2,3,\dots$ and plot the  $k$-means costs of the resulting clusterings. 
This is shown in Figure \ref{fig:choosing_k}. 
Clearly, the elbow of the curve occurs at $k=10$ which is indeed the number 
of clusters in the true data distribution (we used the same Gaussian mixture 
dataset as in the original experiments).

\section{Additional figures}\label{app:extra_figs}

\begin{figure}[h]
    \centering
    \includegraphics[scale=0.55]{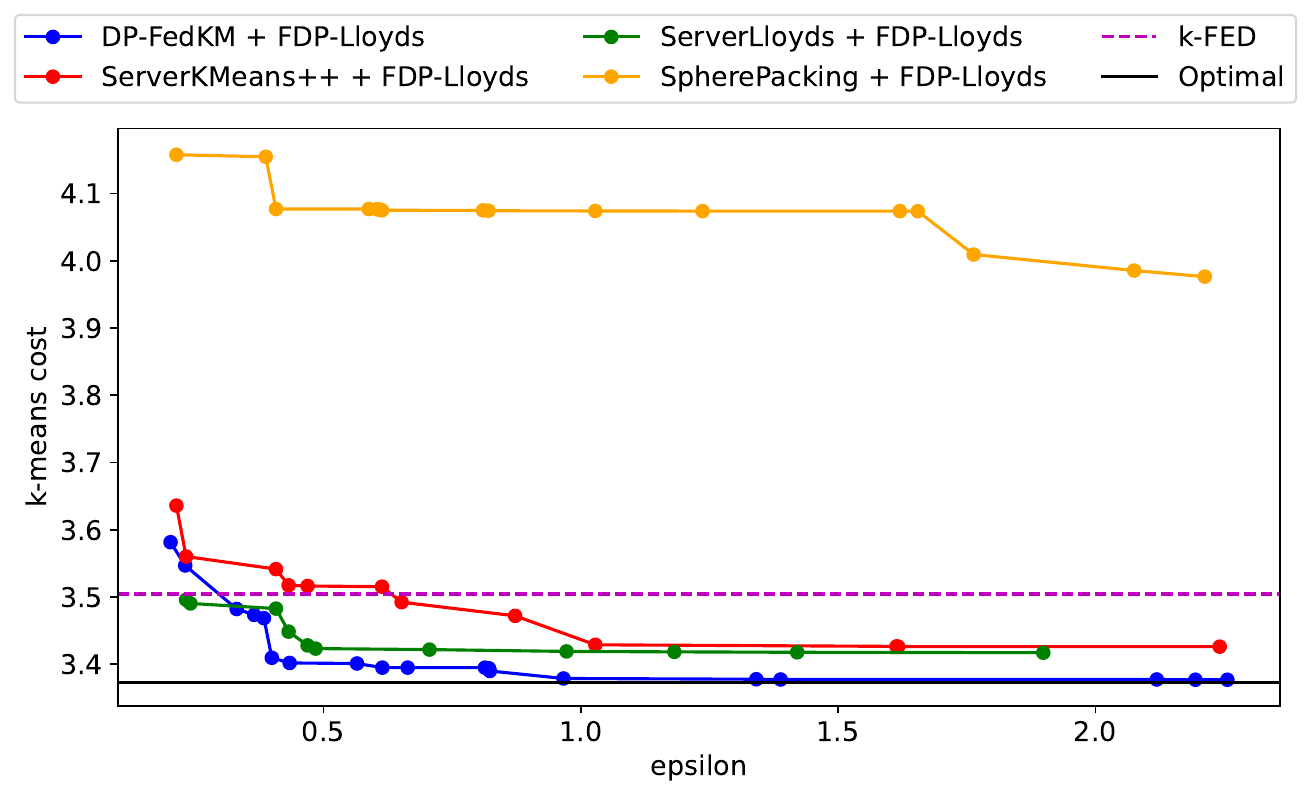}
    \caption{Results with data-point-level privacy on US census data. The 51 clients are US states,
    each client has the data of individuals with employment type ``Employee
of a private not-for-profit, tax-exempt, or charitable organization".}
    \label{fig:ft_2}
\end{figure}
\begin{figure}
    \centering
    \includegraphics[scale=0.55]{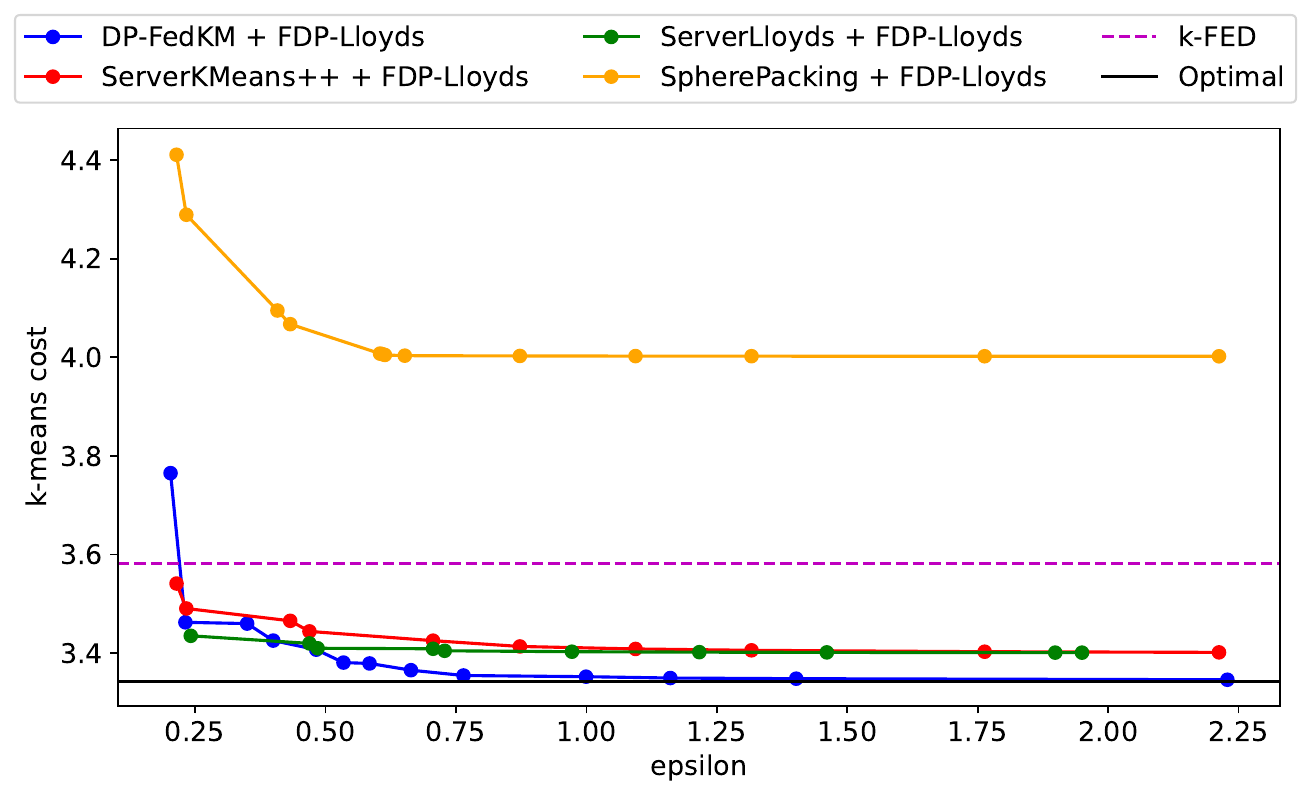}
    \caption{Results with data-point-level privacy on US census data. The 51 clients are US states,
    each client has the data of individuals with employment type ``Self-employed in own not incorporated business, professional practice, or farm".}
    \label{fig:ft_6}
\end{figure}

\begin{figure}
    \centering
    \includegraphics[scale=0.55]{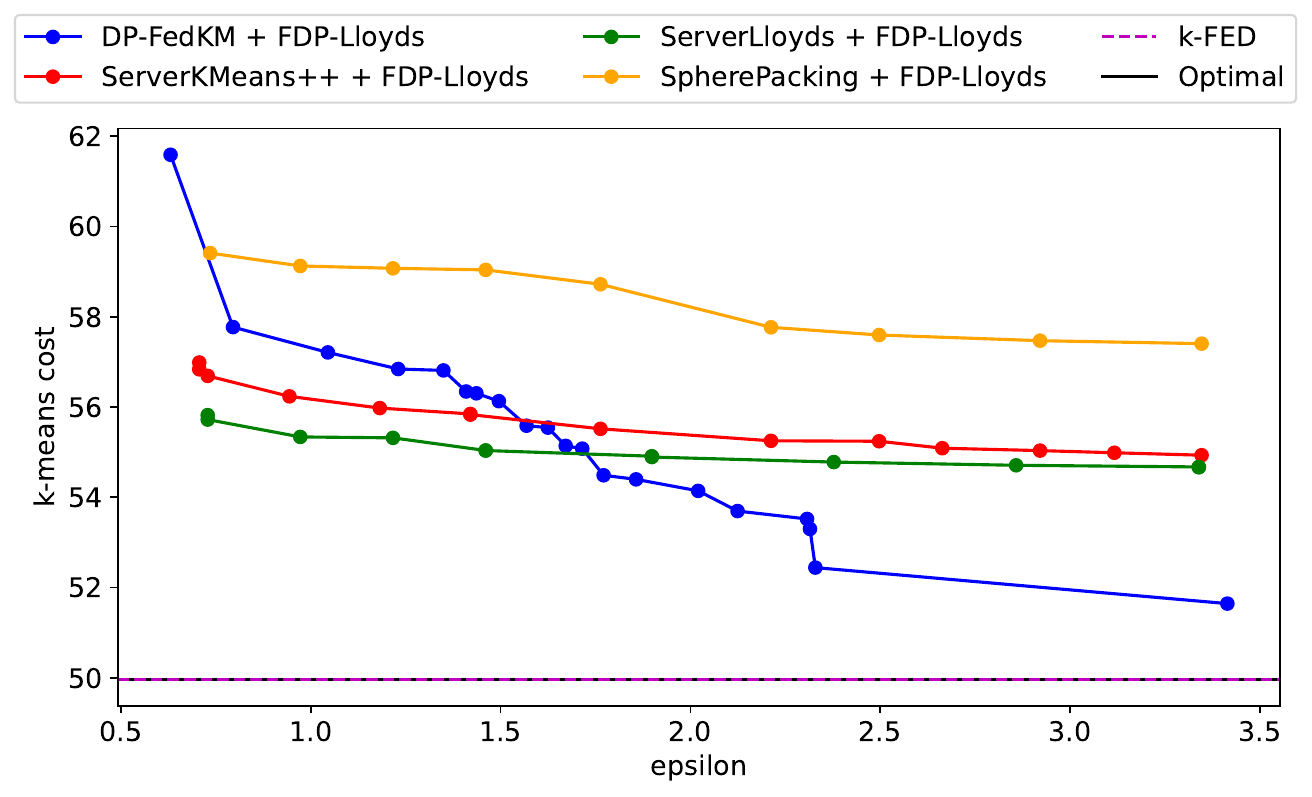}
    \caption{Results with client-level privacy on Synthetic mixture of Gaussians data with 1000
clients in total.}
    \label{fig:gm_1000}
\end{figure}

\begin{figure}
    \centering
    \includegraphics[scale=0.55]{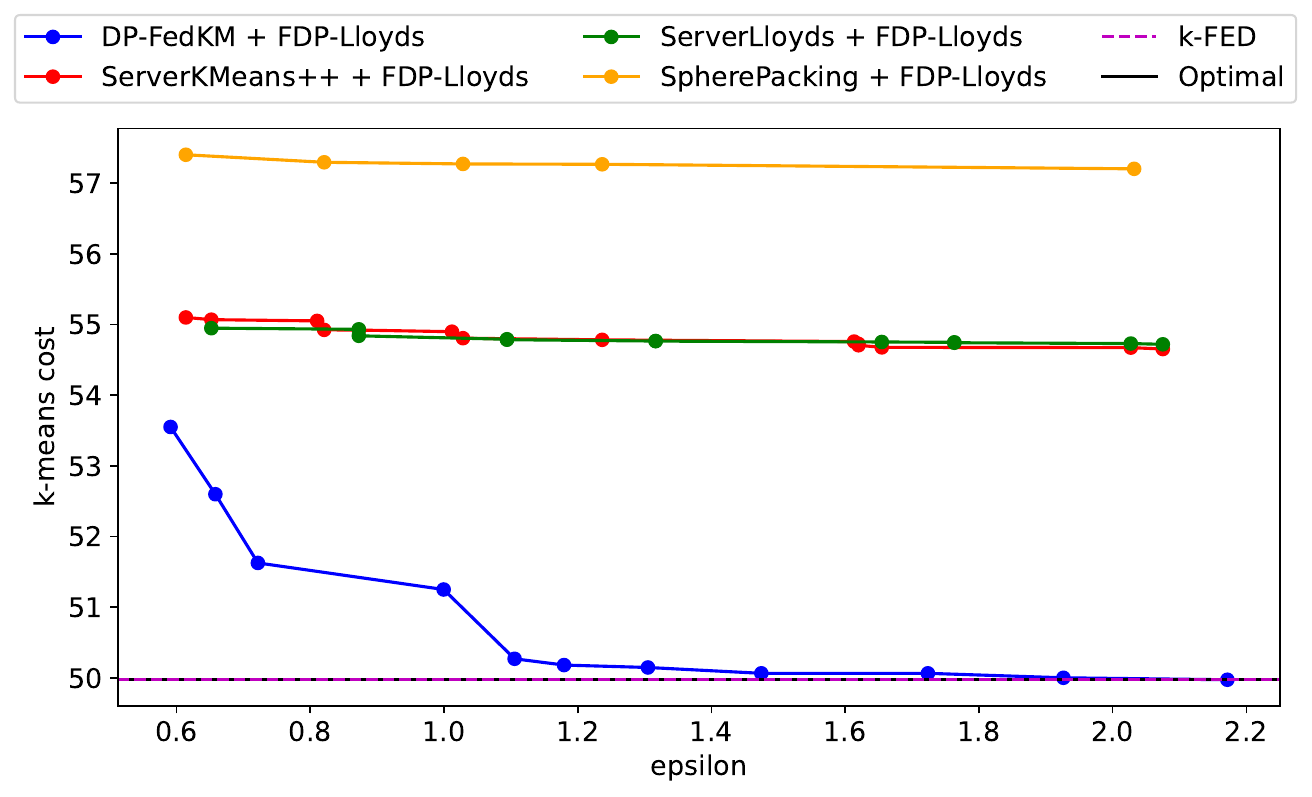}
    \caption{Results with client-level privacy on Synthetic mixture of Gaussians data with 5000
clients in total.}
    \label{fig:gm_5000}
\end{figure}
\begin{figure}
    \centering
    \includegraphics[scale=0.55]{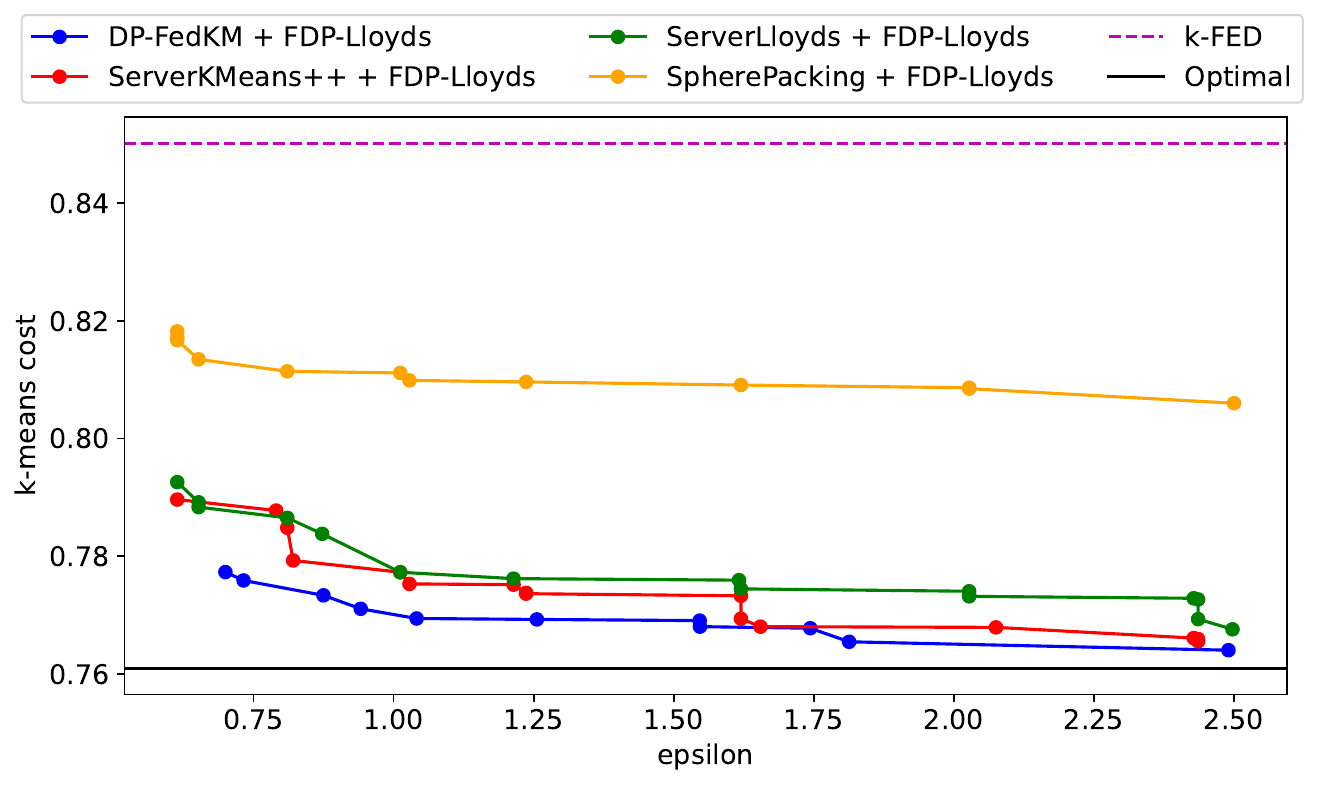}
    \caption{Results with client-level privacy on the stackoverflow dataset with 23266 clients
    with topic tags facebook and hibernate.}
    \label{fig:so_fh}
\end{figure}

\begin{figure}
    \centering
    \includegraphics[scale=0.55]{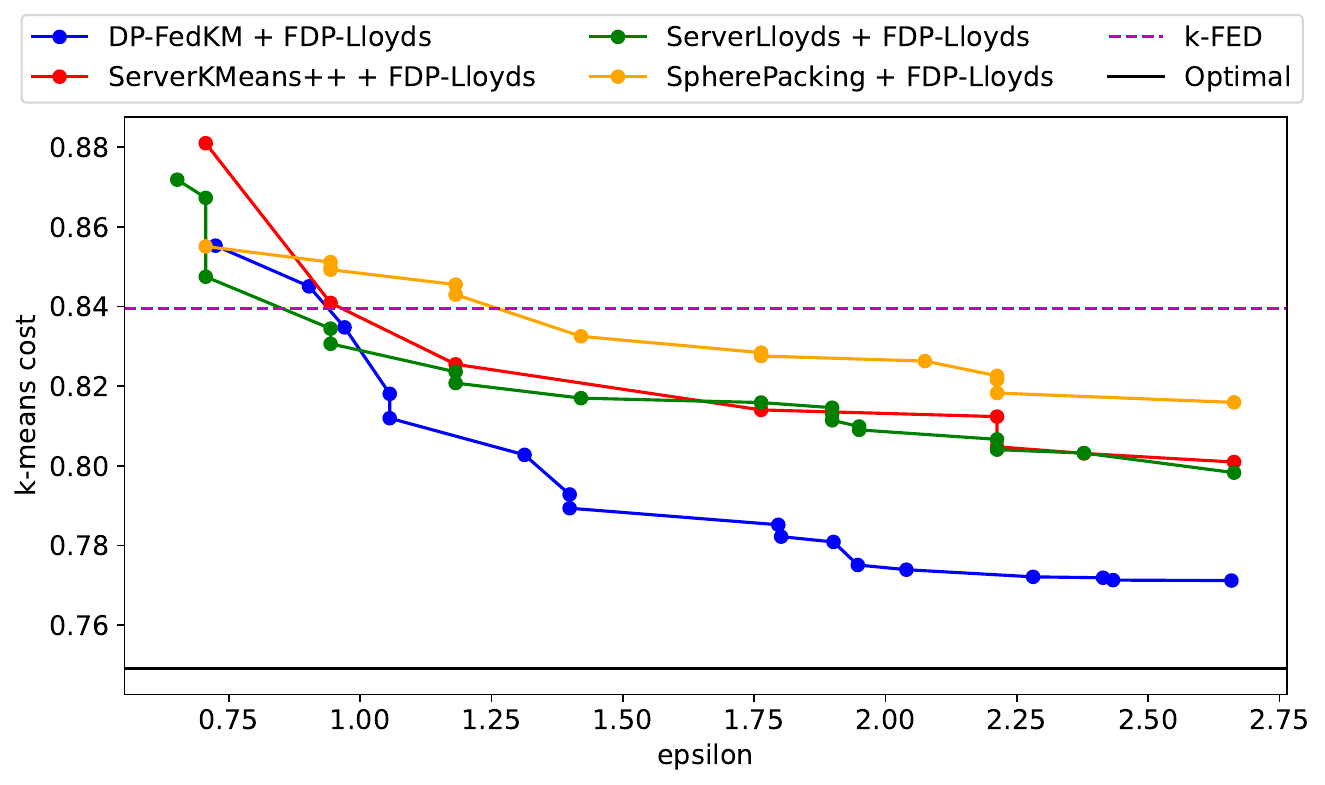}
    \caption{Results with client-level privacy on the stackoverflow dataset with 2720 clients
    with topic tags plotting and cookies.}
    \label{fig:so_pc}
\end{figure}
\begin{figure}
    \centering
    \includegraphics[scale=0.55]{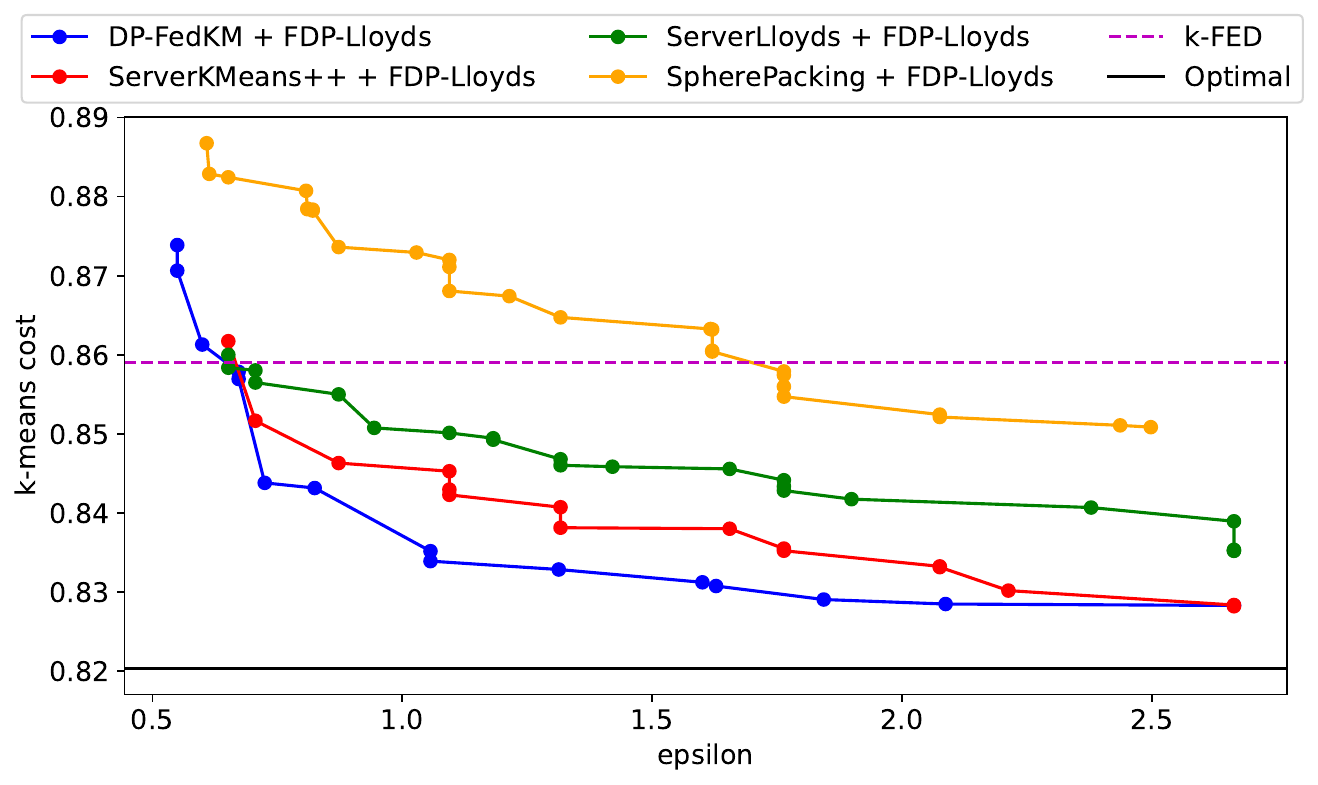}
    \caption{Results with client-level privacy on the stackoverflow dataset with 10394 clients
    with topic tags machine-learning and math.}
    \label{fig:so_mm}
\end{figure}

\end{document}